\setlist[itemize]{noitemsep, topsep=0pt}
\setlist[enumerate]{itemsep=5pt, topsep=5pt, leftmargin=25pt}
\newtheorem{theorem}{Theorem}
\definecolor{verylightblue}{rgb}{0.7,0.8,1}
  {\begin{mdframed}[backgroundcolor=verylightblue]\begin{theorem}}%
  {\end{theorem}\end{mdframed}}
\definecolor{verylightgray}{gray}{0.95}
  {\begin{mdframed}[backgroundcolor=verylightgray]\begin{proof}}%
  {\end{proof}\end{mdframed}}
\newtheorem{lemma}{Lemma}
\definecolor{verylightred}{rgb}{1,0.8,0.8}
  {\begin{mdframed}[backgroundcolor=verylightred]\begin{lemma}}%
  {\end{lemma}\end{mdframed}}
\newtheorem{proposition}{Proposition}
  {\begin{mdframed}[backgroundcolor=verylightblue]\begin{proposition}}%
  {\end{proposition}\end{mdframed}}
\newtheorem{corollary}{Corollary}
\newtheorem{assumption}{Assumption}
\theoremstyle{definition}
\theoremstyle{remark}
\newtheorem{example}{Example}
\newtheorem*{rep@theorem}{\rep@title}
\newcommand{\newreptheorem}[2]
{\newenvironment{rep#1}[1]
{\def\rep@title{#2 \ref{##1}} \begin{rep@theorem}}%
 {\end{rep@theorem}}}
\definecolor{DarkRed}{rgb}{0.5,0.1,0.1}
\definecolor{DarkBlue}{rgb}{0.1,0.1,0.5}
\DeclareMathOperator*{\argmax}{arg\,max}
\DeclareMathOperator*{\argmin}{arg\,min}
\newcommand{\E}{\mathbb{E}}
\def\R{\mathbb{R}}
\newcommand{\ignore}[1]{}
\let\emptyset\varnothing
\newcommand{\B}{\mathcal{B}}
\newcommand{\cN}{\mathcal{N}}
\newcommand{\thedate}{\today}
\newcommand{\theauthor}{Tijana Zrnic \quad \quad William Fithian\\\texttt{\{tijana.zrnic,wfithian\}@berkeley.edu} \\ \\ University of California, Berkeley}
\newcommand{\thetitle}{Locally Simultaneous Inference}
\date{\thedate}
\author{\theauthor}
\title{\thetitle}
\def\P{\mathcal{P}}
\def\V{\mathcal{V}}
\def\M{\mathcal{M}}
\def\F{\mathcal{F}}
\def\D{\mathcal{D}}
\def\ci{C}
\newcommand{\dotfrac}[2]{
\mathchoice
{\ooalign{$\genfrac{}{}{0pt}{0}{#1}{#2}$\cr\leavevmode\cleaders\hb@xt@ .22em{\hss $\displaystyle\cdot$\hss}\hfill\kern\z@\cr}}
{\ooalign{$\genfrac{}{}{0pt}{1}{#1}{#2}$\cr\leavevmode\cleaders\hb@xt@ .22em{\hss $\textstyle\cdot$\hss}\hfill\kern\z@\cr}}
{\ooalign{$\genfrac{}{}{0pt}{2}{#1}{#2}$\cr\leavevmode\cleaders\hb@xt@ .22em{\hss $\scriptstyle\cdot$\hss}\hfill\kern\z@\cr}}
{\ooalign{$\genfrac{}{}{0pt}{3}{#1}{#2}$\cr\leavevmode\cleaders\hb@xt@ .22em{\hss $\scriptscriptstyle\cdot$\hss}\hfill\kern\z@\cr}}
}
\long\def\@makecaption#1#2{
        \vskip 0.8ex
        \setbox\@tempboxa\hbox{\small {\bf #1:} #2}
        \parindent 1.5em  
        \dimen0=\hsize
        \advance\dimen0 by -3em
        \ifdim \wd\@tempboxa >\dimen0
                \hbox to \hsize{
                        \parindent 0em
                        \hfil 
                        \parbox{\dimen0}{\def\baselinestretch{0.96}\small
                                {\bf #1.} #2
                                } 
                        \hfil}
        \else \hbox to \hsize{\hfil \box\@tempboxa \hfil}
        \fi
        }
\begin{document}

\maketitle

\begin{abstract}
Selective inference is the problem of giving valid answers to statistical questions chosen in a data-driven manner. A standard solution to selective inference is \emph{simultaneous inference}, which delivers valid answers to the set of all questions that could possibly have been asked. However, simultaneous inference can be unnecessarily conservative if this set includes many questions that were unlikely to be asked in the first place.
We introduce a less conservative solution to selective inference that we call {\em locally simultaneous inference}, which only answers those questions that could {\em plausibly} have been asked in light of the observed data, all the while preserving rigorous type I error guarantees. For example, if the objective is to construct a confidence interval for the ``winning'' treatment effect in a clinical trial with multiple treatments, and it is obvious in hindsight that only one treatment had a chance to win, then our approach will return an interval that is nearly the same as the uncorrected, standard interval. Locally simultaneous inference is implemented by refining any method for simultaneous inference of interest. Under mild conditions satisfied by common confidence intervals, locally simultaneous inference \emph{strictly dominates} its underlying simultaneous inference method, meaning it can never yield less statistical power but only more. Compared to conditional selective inference, which demands stronger guarantees, locally simultaneous inference is more easily applicable in nonparametric settings and is more numerically stable.
\end{abstract}

\section{Introduction}

Modern scientific investigations increasingly involve choosing inferential questions of interest only \emph{after} seeing the data. While this practice offers more freedom to the scientist than the traditional paradigm of specifying the relevant questions up front, it is by now well understood that it also creates undesirable selection bias, thereby invalidating type I error guarantees of classical statistical methods. The area of \emph{selective inference} formally describes this problem and offers rigorous solutions across a variety of settings.

One standard and broadly applicable solution is to perform \emph{simultaneous inference}, i.e., deliver valid answers to all questions that could possibly be asked.
However, simultaneous inference can be unnecessarily conservative when many questions, although possible, are unlikely to be of interest in the first place. For example, suppose that a clinical trial estimates the effectiveness of multiple treatments and, after observing the data, it is clear that there are many ineffective treatments and only a handful of effective ones. Even if we are only interested in constructing a confidence interval for the effectiveness of the best-performing treatment, simultaneous inference would still widen our intervals enough to cover all possible treatments, including the clearly ineffective ones that never stood a chance of being selected. Similarly, if we use a method like the LASSO to select a sparse subset of variables for a linear model, it may be clear in hindsight that most variables had no chance of being selected. In both of these cases, simultaneous inference can be overly conservative.

In this work we introduce \emph{locally simultaneous inference}, a method for valid selective inference that only answers those questions that could \emph{plausibly} have been selected in light of the observed data. Locally simultaneous inference comes with rigorous type I error guarantees like simultaneous inference but is less conservative; in particular, it reduces to standard simultaneous inference in an extreme case. In the clinical trial example, locally simultaneous inference would require taking a correction only over reasonably effective treatments, while testing the ineffective ones comes virtually for free.

To sketch our main idea, suppose that we have a family of estimands $\{\theta_\gamma: \gamma\in \Gamma\}$, where $\Gamma$ indexes all possible targets of inference. In the running example, $\Gamma$ would be the set $\{1,\dots,m\}$, where $m$ is the total number of treatments in the clinical trial, and $\theta_\gamma$ is the mean effect of the indexed treatment. Given data $y$, we are interested in doing inference on $\theta_{\hat\gamma}$, where $\hat\gamma$ is a data-dependent target chosen from $\Gamma$; in the running example, $\hat \gamma$ indexes the treatment that seems most effective according to $y$. It would be invalid to reuse the same data $y$ to perform an uncorrected inference on $\theta_{\hat \gamma}$, since the winning treatment is likely to have been overestimated by the trial data. However, if it is clear in hindsight that only a small number $k \ll m$ of the treatments were even in the running to win, it would be wasteful to make the full multiplicity correction for all $m$ treatments.

The main idea behind our framework is to find a \emph{data-dependent} set of targets $\widehat \Gamma^+$, which is nested between the selected target and all possible targets, $\hat \gamma\in \widehat \Gamma^+ \subseteq \Gamma$, such that taking a standard simultaneous correction over $\widehat\Gamma^+$ ensures valid selective inferences. Perhaps surprisingly, this strategy is valid despite the dependence between $\widehat\Gamma^+$ and the data. Moreover, if the selection $\hat\gamma$ is ``obvious enough'' in hindsight, $\widehat\Gamma^+$ only contains $\hat\gamma$ and our approach nearly reduces to classical, uncorrected inference.

Unlike standard simultaneous inference methods, our approach adapts to the specifics of the selection criterion; in this sense, locally simultaneous inference
resembles \emph{conditional} selective inference, which delivers valid inference after conditioning on the event that a specific target was selected. However, since our approach builds on the robust and broadly applicable principle of simultaneous inference, it comes with several advantages over conditional inference, including numerical stability and robustness to parametric assumptions.

\begin{figure}[t]
    \centering
    \includegraphics[width=0.5\textwidth]{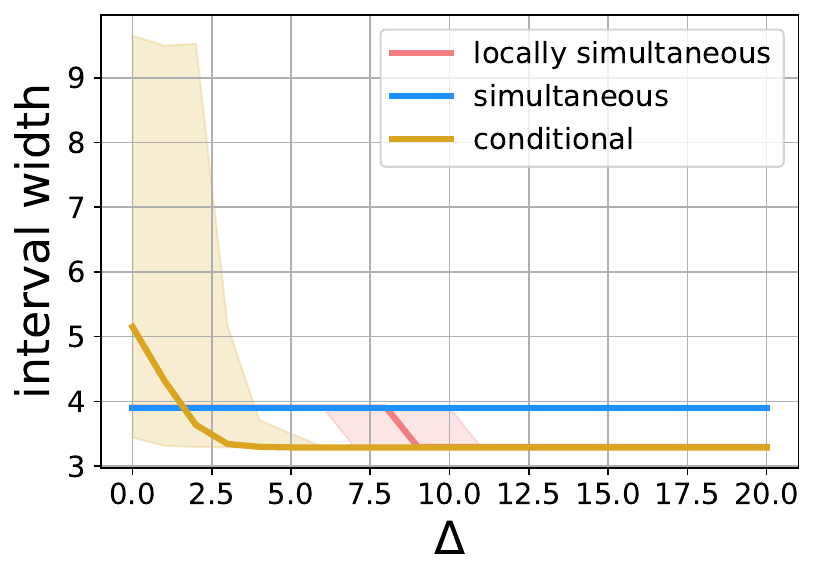}
    \caption{Interval width achieved by locally simultaneous inference, fully simultaneous inference, and conditional inference. The data is $(y_1,y_2)\sim \mathcal{N}((\mu_1,\mu_2),I_2)$ and the goal is to do inference on the mean of observation $\hat \gamma = \argmax_{\gamma\in\{1,2\}}y_\gamma$. We vary $\Delta = \mu_2 - \mu_1$. Details are in Appendix~\ref{app:figure1}.}
    \label{fig:intro_comparison}
\end{figure}

We provide a detailed comparison of locally simultaneous inference to relevant baselines in the next section. For now, we give a glimpse of the comparison using a simple illustrative example. Suppose that $y_1\sim\mathcal{N}(\mu_1,1), y_2\sim \mathcal{N}(\mu_2,1)$ are independent and we wish to do inference on the mean of observation $\hat \gamma = \argmax_{\gamma\in\{1,2\}} y_\gamma$. When the gap $\Delta = \mu_2 - \mu_1$ is near zero, the inferential question of interest is most uncertain, while large $\Delta$ corresponds to the case where the inferential question of interest is ``obvious''.
In Figure \ref{fig:intro_comparison} we plot the median, together with the $5\%$ and $95\%$ percentile, of the width of selective confidence intervals constructed via locally simultaneous inference, standard simultaneous inference, and conditional inference. We observe that for small $\Delta$ conditional inference can lead to large intervals, and as $\Delta$ grows conditional intervals approach nominal, unadjusted intervals. Simultaneous inference is insensitive to the value of $\Delta$ and delivers constant-width intervals, which are smaller than conditional intervals for small $\Delta$ due to their unconditional nature. Locally simultaneous inference adapts to the certainty of the selection like conditional inference, but is never worse than simultaneous inference.
Formally, by relying on our general theory of locally simultaneous inference, in Appendix \ref{app:figure1} we justify the following approach. Fix $\alpha = 0.1$.
Let $q^{\delta}(k)$ be the $1-\delta$ quantile of $\max_{i\in[k]}|Z_i|, Z_i\stackrel{\mathrm{i.i.d.}}{\sim}\mathcal{N}(0,1)$. Then, we have
$$P\left\{\mu_{\hat \gamma} \in \left(y_{\hat\gamma} \pm \min\{q^{0.95\alpha}(|\widehat\Gamma^+|), q^\alpha(2)\} \right)\right\}\geq 1- \alpha,$$
where $\widehat\Gamma^+ =\{1, 2\}$ when $|y_2 - y_1|\leq 2\sqrt{2}q^{0.05\alpha}(1)$ and $\widehat\Gamma^+ =\{\hat\gamma\}$ otherwise. Therefore, when $|y_2 - y_1|$ is small, locally simultaneous intervals are equal to simultaneous intervals; when $|y_2 - y_1|$ is large, only $\hat\gamma$ is deemed to be a plausible selection in hindsight, making the intervals essentially uncorrected.

\subsection{Related work}

Most existing solutions to the problem of selective inference fall under one of two categories: \emph{simultaneous} approaches and \emph{conditional} approaches. In what follows we compare locally simultaneous inference to these two families of solutions and briefly discuss the relationship to other related threads in the literature.

\paragraph{Simultaneous approaches.}
The basic principle of simultaneous approaches is to ensure simultaneous inference---the criterion of ensuring valid inferences for \emph{all} questions that could possibly be asked. More formally, if we denote by $C_{\gamma}$ a confidence region for target $\gamma\in\Gamma$, then this basic principle is captured by the inequality:
$$P\{\theta_{\hat\gamma}\not\in C_{\hat\gamma}\} \leq P\{\exists \gamma\in\Gamma:\theta_{\gamma}\not\in C_{\gamma}\}.$$
Simultaneous inference asks for $C_{\gamma}$ so that the right-hand side is bounded at a pre-specified level $\alpha\in(0,1)$; we call any method that meets this requirement a simultaneous inference method. Notice that the right-hand side has no dependence on $\hat\gamma$. Indeed, simultaneous approaches ensure valid selective inference in a selection-agnostic manner and as such are broadly applicable. Canonical examples of simultaneous inference methods include the Bonferroni correction, Holm's procedure \cite{holm1979simple}, and other related extensions \cite{hochberg1988sharper, hommel1988stagewise, strassburger2008compatible}. In the context of multivariate normal observations, simultaneous inference typically relies on estimating quantiles of the maximal z- or t-statistic~\cite{hothorn2008simultaneous, berk2013valid, genz1992numerical, genz1999numerical, bretz2001numerical}. See \cite{dickhaus2014simultaneous} for an overview of simultaneous inference methods.

Locally simultaneous inference is a principle for refining any simultaneous inference method by asking for a correction only over a smaller, \emph{data-dependent} set of targets $\widehat \Gamma^+$, rather than the whole set $\Gamma$. As such, locally simultaneous inference can itself be seen as a simultaneous inference method, one that adaptively focuses all statistical power on the data-dependent set $\widehat \Gamma^+$ and returns infinite confidence sets for all other targets $\gamma'\in \Gamma\setminus \widehat \Gamma^+$. Throughout the manuscript, when we compare locally simultaneous inference with ``simultaneous inference,'' we are implicitly referring to the simultaneous inference method that underlies the locally simultaneous construction. As we will see, locally simultaneous inference is compatible with most common simultaneous inference methods. Moreover, when we compare the power of the two approaches, we are implicitly only comparing power over the selected targets.

In its simplest form, locally simultaneous inference achieves the power gain by splitting the error budget $\alpha$ into two parts: the first is used to reduce $\Gamma$ to a smaller set of plausible targets $\widehat \Gamma^+$ and the remainder is used for simultaneous inference over $\widehat \Gamma^+$. When the whole error budget is saved for the second step, locally simultaneous inference reduces to standard simultaneous inference. If there are many candidate selections that are sufficiently implausible, then $\widehat \Gamma^+$ can be much smaller than $\Gamma$ and, consequently, locally simultaneous inference can be a significantly more powerful alternative to simultaneous inference.

We present the version with error splitting as our main proposal because it is easy to implement for a broad range of selection problems and the loss due to the split is negligible for practical choices of parameters. However, we also provide a refined version of locally simultaneous inference that comes at no cost. Formally, we show that it \emph{strictly dominates} simultaneous inference: its confidence regions are never larger and can be strictly smaller.

\begin{example}
\label{ex:timeseries}
In Figure \ref{fig:sim_vs_local} we compare standard simultaneous inference and locally simultaneous inference in a time series problem. We seek to construct a confidence band around the mean of a Gaussian process whenever its value exceeds a fixed critical threshold $T$. The set of possible targets is the set of all time step indices, $\Gamma = \{1,\dots,1000\}$, and we are interested in doing inference on $\widehat\Gamma = \{\gamma\in\Gamma: y_\gamma \geq T\}$, where $y$ is the observed time series. Simultaneous inference constructs a confidence band that contains the whole series with high probability (region between blue dashed lines) while locally simultaneous inference creates a confidence band valid only for the selected target (pink shaded region). The width of the locally simultaneous band is computed by taking a simultaneous correction only over sections of the time series that are close to exceeding $T$ and is thus smaller than the width of the simultaneous band (approximately by a factor of $1.5$).
\end{example}

\begin{example}
\label{ex:chi}
Suppose that we observe $y\sim \mathcal{N}(\mu,I_d)$ and consider estimands of the form $\theta_\gamma = \gamma^\top \mu$ for $\gamma$ chosen from the unit sphere, $\Gamma = \mathcal{S}^{d-1}$. We wish to construct a confidence interval for the data-dependent estimand $\theta_{\hat\gamma}$, where $\hat\gamma = \frac{y}{\|y\|_2}$. 
Fully simultaneous inference is required to cover all of $\{\theta_\gamma:\gamma\in\mathcal S^{d-1}\}$; this is achieved by Scheff\'e's method. Specifically, Scheff\'e's method delivers intervals of the form $C_\gamma = (\gamma^\top y \pm w)$, where $w$ is of the order $\sim \sqrt{d}$.
Locally simultaneous inference exploits the intuition that, when $y$ is far from the origin, $\hat \gamma$ seems virtually deterministic in hindsight; in particular, one would need to perturb $y$ significantly in order to incur a noticeable change in $\hat \gamma$.
More formally, we show that it in fact suffices to take a correction only over $\widehat\Gamma^+ = \{\gamma\in\mathcal {S}^{d-1}: \angle (\gamma,y) \leq \delta_d(\|y\|_2)\}$, where $\delta_d(\|y\|_2)$ is an angle determined by $\|y\|_2$ and $d$. Importantly, as $\|y\|_2\rightarrow\infty$, we get $\delta_d(\|y\|_2)\rightarrow 0$ and inference reduces to standard, uncorrected inference. In Figure \ref{fig:selective_chi}, we plot $\delta_d(\|y\|_2)$ and the resulting interval width divided by the width implied by a fully simultaneous correction for different $d$. We also illustrate the comparison between $\widehat \Gamma^+$ and $\mathcal S^{d-1}$. We include a formal expression for $\delta_d(\|y\|_2)$ with a proof in Appendix~\ref{app:chi_section}.

\begin{figure}[t]
    \centering
    \includegraphics[width=0.9\textwidth]{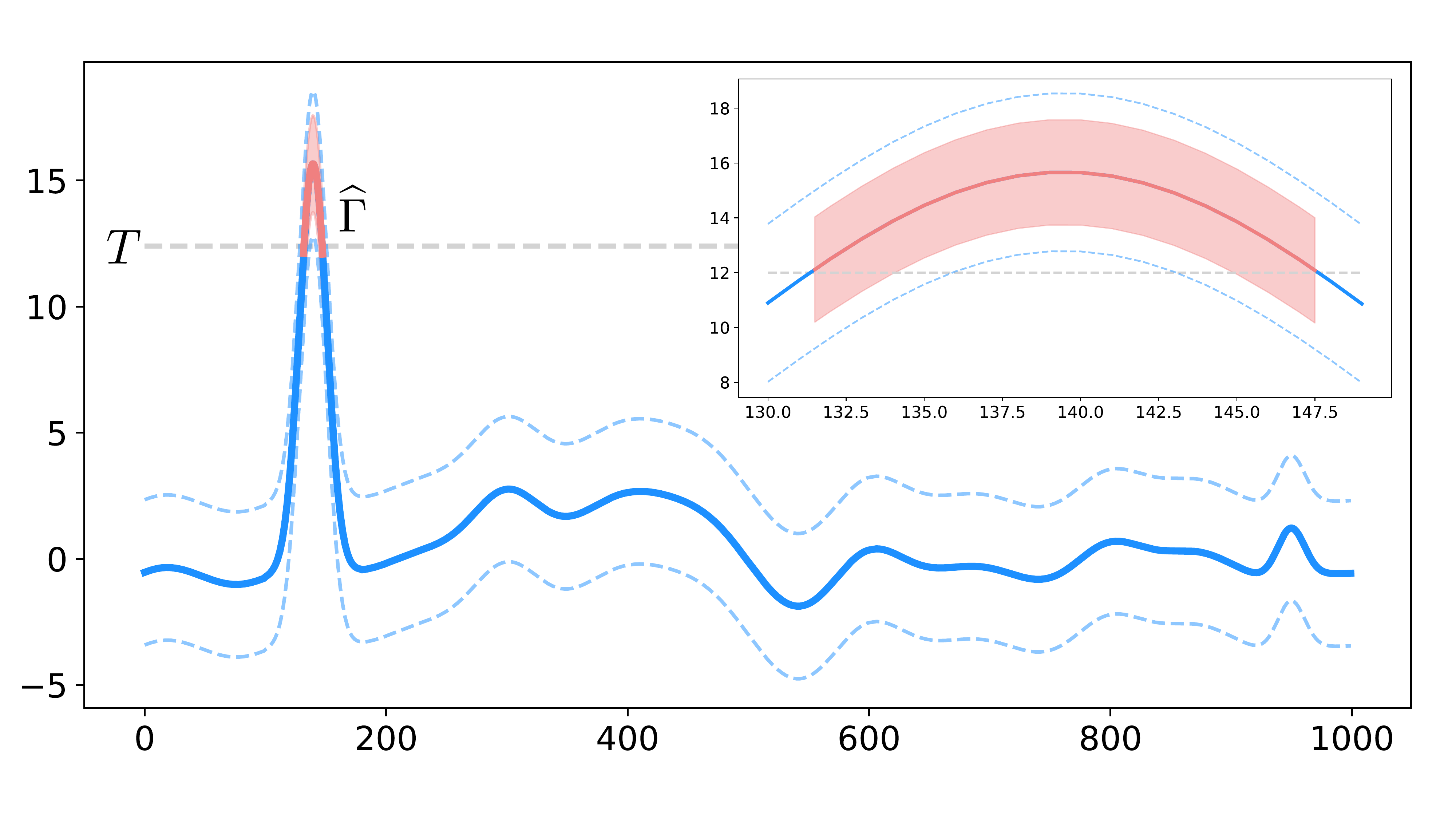}
    \caption{Comparison of fully simultaneous inference (region between blue dashed lines) and locally simultaneous inference (pink shaded region). The panel in the upper right corner zooms in around the selected region. The width of the locally simultaneous band is approximately $1.5$ times smaller.}
    \label{fig:sim_vs_local}
\end{figure}

\begin{figure}[t]
    \centering
    \includegraphics[width=0.3\textwidth]{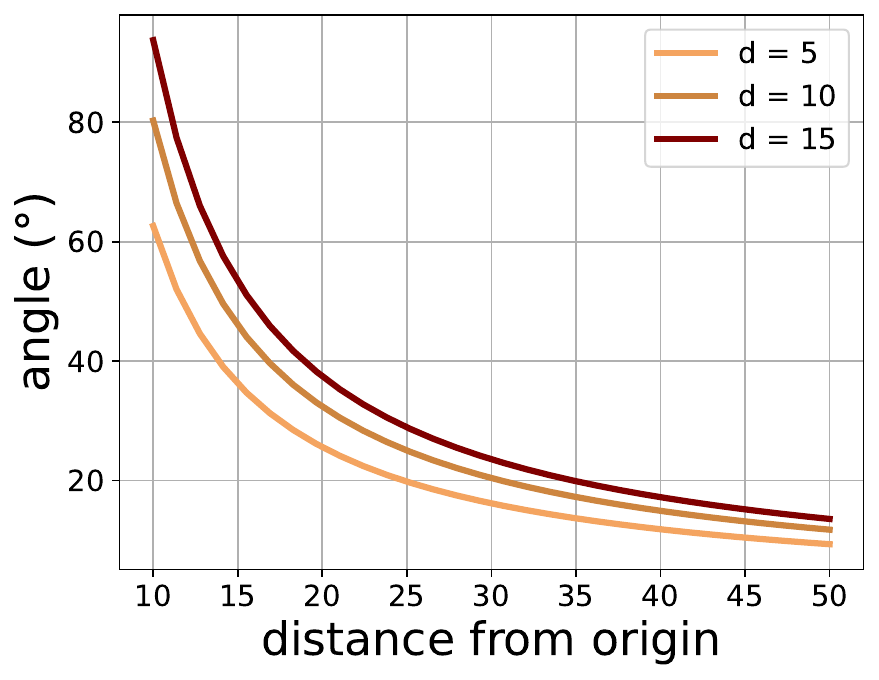}\hspace{0.5cm} \includegraphics[width=0.3\textwidth]{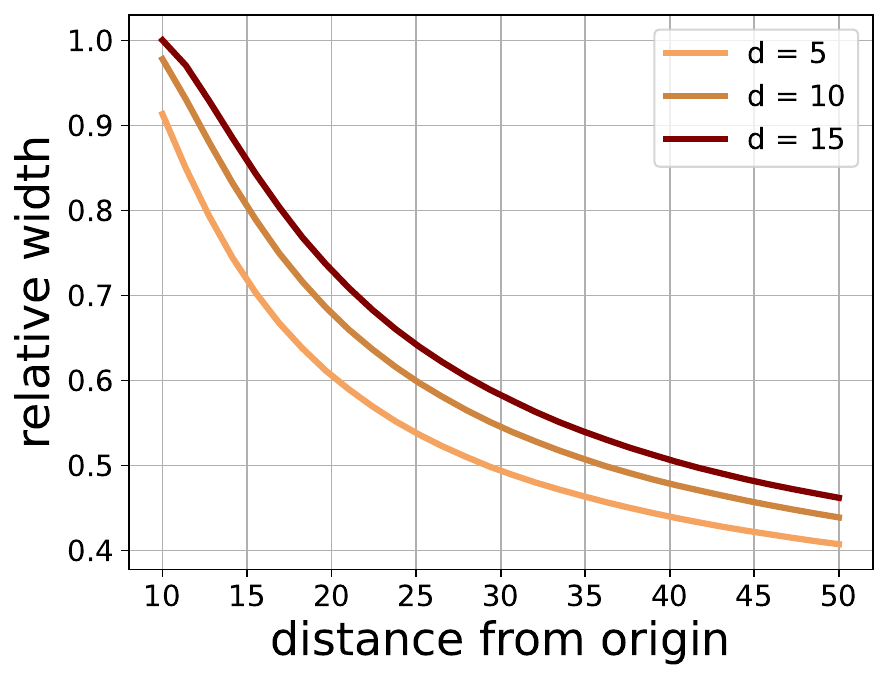}\hspace{0.5cm}
    \includegraphics[width=0.3\textwidth]{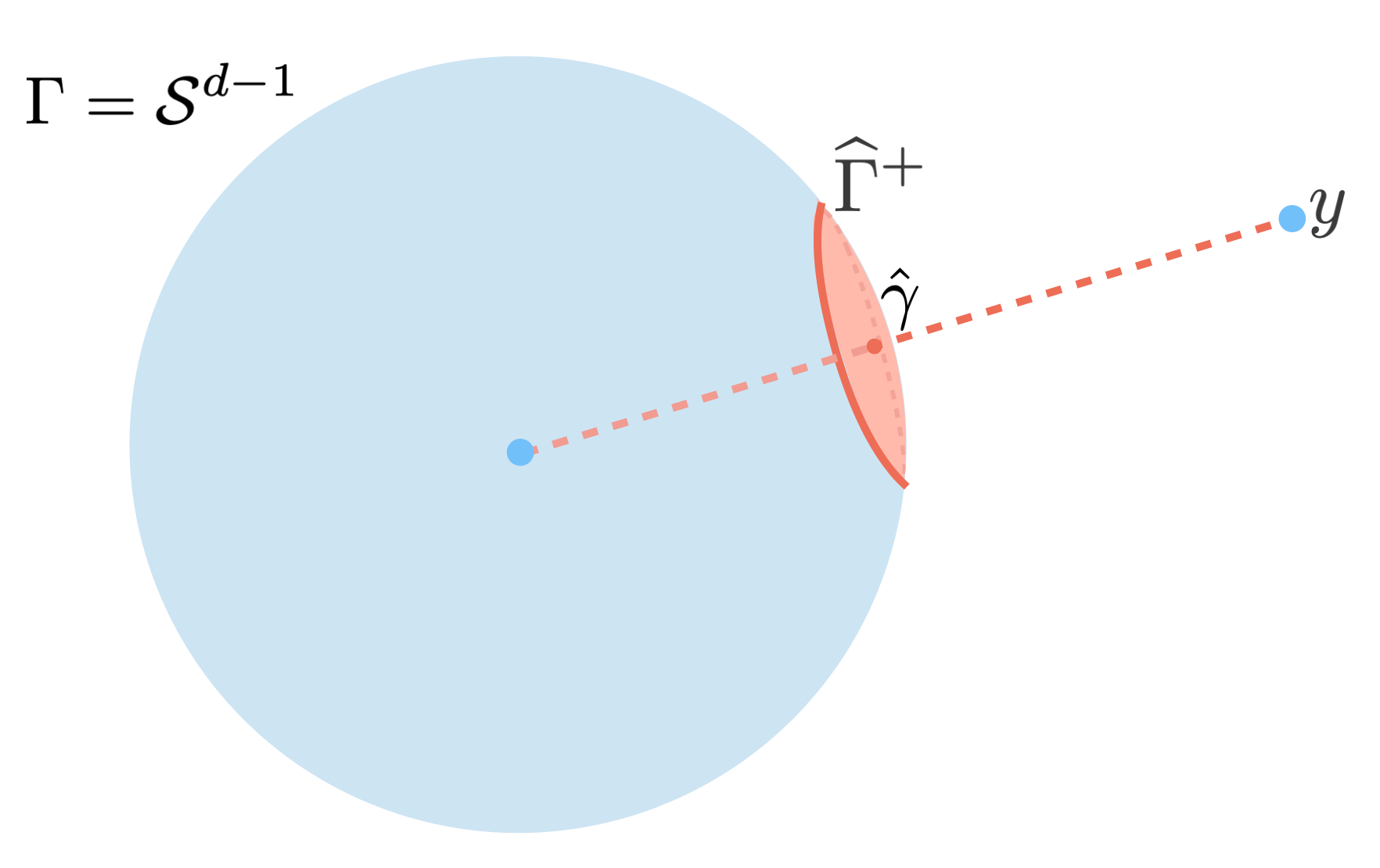}
    \caption{(Left) Angle $\delta_d(\|y\|_2)$ (in $^\circ$) for different values of $d$. Locally simultaneous inference takes a correction over $\widehat \Gamma^+ = \{\gamma\in\mathcal S^{d-1}: \angle (\gamma, y) \leq \delta_d(\|y\|_2)\}$. (Middle) Interval width after taking a correction over $\widehat \Gamma^+$, relative to a fully simultaneous correction. (Right) Illustrative comparison of $\widehat \Gamma^+$ and $\mathcal S^{d-1}$.}
    \label{fig:selective_chi}
\end{figure}
\end{example}

\paragraph{Conditional approaches.}
Conditional approaches bound the probability of error conditional on selecting a specific target:
$$P\{\theta_\gamma \not\in C_\gamma~|~\hat \gamma = \gamma\}.$$
While simultaneous methods ensure validity for arbitrary $\hat \gamma$ chosen from the set $\Gamma$, regardless of any further properties of the selection, conditional approaches adapt to the selection at hand. In particular, a crucial step in implementing a conditional correction is tractably characterizing the selection event $\{\hat\gamma = \gamma\}$. Prior work has provided such characterizations for a variety of model selection methods, such as the LASSO, forward stepwise, LARS, etc \cite{lee2016exact, tibshirani2016exact, fithian2014optimal}. This adaptivity of conditional approaches often allows them to outperform simultaneous approaches; for example, if a specific target is selected with overwhelming probability, then conditional methods yield confidence intervals that are nearly the same as uncorrected intervals. On the other hand, characterizing the selection event is difficult in general and conditional corrections are available only in certain restrictive problem settings, usually parametric exponential families. Furthermore, since the final guarantees are conditional rather than unconditional, conditional methods can yield large intervals; notably, Goeman and Solari~\cite{goeman2022conditional} showed that for every conditional method there exists a simultaneous inference method that dominates it in terms of power. Kivaranovic and Leeb~\cite{kivaranovic2021length} showed that conditionally valid intervals, even if tight, have infinite expected length for common selection problems.
To fix this issue of enlarged intervals due to conditioning, Andrews et al.~\cite{andrews2019inference} introduced a refinement of conditional inference for the problem of inference on the ``winner'' called the \emph{hybrid} method, which was subsequently generalized to other problems by McCloskey \cite{mccloskey2020hybrid}. We will use the term ``hybrid method'' to refer to the general procedure of McCloskey. The hybrid method begins by constructing simultaneous intervals for all candidate targets of inference. Then, it implements a correction conditional on both the selected target \emph{and} the event that the intervals constructed in the first step cover the target. This strategy offers unconditional guarantees only, but can lead to significant power gains over standard conditional inference. An alternative approach to fixing large conditional intervals is to randomize the selection method \cite{tian2018selective}, but this solution comes at the cost of deteriorating selection quality.

Locally simultaneous inference is adaptive to the selection strategy at hand like conditional approaches, but its technical underpinnings in the robust framework of simultaneous inference allow for broader applicability. First, compared with conditional inference, locally simultaneous inference is more easily applicable in nonparametric problem settings and in  ``continuous selection'' problems, where the probability of any specific selection event is equal to zero. Further, seeing that it aims to provide unconditional guarantees, locally simultaneous inference can be more powerful than conditional methods, especially the standard, nonhybridized approach. For example, as mentioned earlier, locally simultaneous inference is at least as powerful as simultaneous inference, which is not true of the conditional and hybrid methods. In general, however, either of locally simultaneous inference and conditional inference can be more powerful depending on the nature of the inference problem. We give further intuition as to when one approach outperforms the other in numerical simulations in Section~\ref{sec:exps}, where we compare locally simultaneous inference to the conditional and hybrid approaches.
Another important advantage of locally simultaneous inference over conditional inference is that it is easy to implement when inference is sought after a \emph{set} of targets $\widehat\Gamma \subseteq \Gamma$; e.g., one might be interested in constructing intervals for all treatments whose estimated effect exceeds a fixed threshold. It is unclear how to construct tight conditionally valid intervals that simultaneously cover a set of targets, since conditioning on the selected set could severely constrain the sample space. Finally, a convenient feature of locally simultaneous inference is that it is amenable to stable numerical implementation, as it does not require amplifying low-probability events like conditional methods.

\begin{example}
\label{ex:conditional}
To illustrate the brittleness of the conditional approach, we revisit the selection problem from Example \ref{ex:chi}: given $y\in\R^d$, we select $\hat \gamma = \frac{y}{\|y\|_2}$. We let $y\sim \mathcal N(\mu,\Sigma)$; the estimand of interest is $\theta_{\hat\gamma} = \hat\gamma^\top \mu$. Even if $\Sigma$ is known, conditional inference breaks down unless $\Sigma$ is a multiple of the identity matrix. As we show in Appendix \ref{app:example_conditional}, the conditional distribution of $y$ given $\{\hat \gamma = \gamma\}$ depends on $\mu$ only through $\gamma^\top \Sigma^{-1} \mu$. Unless $\gamma$ is an eigenvector of $\Sigma$, there will exist some $v\in\R^d$ such that $\gamma^\top \Sigma^{-1}v = 0$  but $\gamma^\top v \neq 0$, in which case $\mu$ and $\mu + tv$ give the same conditional distribution for any $t \in \mathbb{R}$, but $\gamma^\top \mu$ and $\gamma^\top (\mu + tv)$ can be arbitrarily different. In other words, $\gamma^\top \mu$ is almost surely unidentifiable in the conditional model---thus leading to infinite intervals---unless $\gamma$ happens to be an eigenvector of $\Sigma$, a measure-zero event. In contrast, simultaneous inference, as well as its local refinement, remain both powerful and tractable. Moreover, as in Example~\ref{ex:chi}, locally simultaneous intervals approach nominal intervals as $\|y\|_2\rightarrow\infty$.

\end{example}

\paragraph{Other related work.}
There are several works that study inference on the winning effect (and, more generally, the top $k$ out of $m$ observed effects) \cite{venter1988confidence, fuentes2018confidence, benjamini2019confidence}. These fall under neither of the two discussed frameworks; rather, the guarantees are ``simultaneous over the selected'' \cite{benjamini2019confidence}. In contrast to locally simultaneous inference, these analyses require independent observations and the confidence interval widths are not data-adaptive; that is, they do not depend on how ``obvious'' the selection is. Rather, the improvement over the usual simultaneous correction lies in a careful analysis of the selection event. Interestingly, these works show that an asymmetric interval construction is more appropriate when selecting the largest observations: the winning observation only has upward bias and no downward bias. Hence, only the lower endpoint of the confidence interval requires correction. Understanding how to combine such an asymmetric analysis with locally simultaneous inference is a valuable direction going forward.
It is worth noting that, for the illustrative problem in Figure \ref{fig:intro_comparison}, Benjamini et al.~\cite{benjamini2019confidence} show that no multiplicity correction is in fact necessary, but this is only true for the special case when $m=2,k=1$ and under additional regularity assumptions.

In Section \ref{sec:erm} we show that locally simultaneous inference directly implies a refinement of complexity-based generalization arguments from empirical process theory. The refinement relies on bounding the complexity of hypotheses with low empirical risk, which closely resembles the developments in the literature on \emph{localization} \cite{koltchinskii2000rademacher,bartlett2005local, koltchinskii2006local, koltchinskii2011oracle}. While conceptually similar, to the best of our knowledge a localization result in the vein of Theorem \ref{thm:risk_minimizer}, which is agnostic to the considered complexity notion, was not previously known. While this direction is not the main focus of the current work, we believe that locally simultaneous inference could have other useful implications to problems in learning theory.

Another conceptually similar existing idea is relying on the \emph{stability} of selection to correct for selective inferences \cite{zrnic2020post}. Indeed, locally simultaneous inference is most powerful when the selection is ``stable'', meaning it does not vary a lot as the data changes, since such a form of stability makes the set of plausible selections $\widehat\Gamma^+$ small. However, at a technical level our framework is entirely different, the main difference being that it does not rely on introducing randomization to the selection procedure.

Finally, the high-level idea of splitting the error budget into two parts, one for constructing an initial confidence region and the remainder for conducting simultaneous inference within the confines of the confidence region, was explored by Leeb and P{\"o}tscher~\cite{leeb2017testing}, Romano et al.~\cite{romano2014practical}, Zhao et al.~\cite{zhao2019multiple}, and others in the context of multiple testing.

\section{Locally simultaneous inference: general theory}

In this section, we state our main results in full generality. Then, in the following sections, we instantiate the results for a variety of selection problems.

We begin by introducing notation and technical preliminaries.

\subsection{Preliminaries}

We consider a possibly nonparametric family of distributions $\P$. For every distribution $P\in\P$, we have a family of possible target estimands indexed by $\gamma\in\Gamma$, $\{\theta_\gamma(P)\}_{\gamma\in \Gamma}$. For example, $\P = \{\mathcal N(\mu, I_m):\mu\in\R^m\}$ could be a location family, $\Gamma = \{1,\dots,m\}$ the set of possible target indices, and $\theta_\gamma(\mathcal{N}(\mu, I_m)) = \mu_\gamma$ asks for the coordinate of $\mu$ indexed by $\gamma$. The relevant distribution $P$ will usually be clear from the context, in which case we will simplify notation and write $\theta_\gamma\equiv \theta_\gamma(P)$.


Selective inference studies the problem of doing inference on $\{\theta_\gamma:\gamma\in \widehat{\Gamma}(y)\}$ given data $y\sim P$, where $\widehat{\Gamma}(y)$ determines a \emph{data-dependent} set of inferential targets. We will adopt the convention that $\widehat\Gamma \equiv \widehat\Gamma(y)$ when the argument $y$ is clear from the context. When there is a single selected target, we will denote it by $\hat \gamma \equiv \hat \gamma(y)$; in that case $\widehat\Gamma = \{\hat \gamma\}$.

Simultaneous inference controls type I error simultaneously for \emph{all} possible targets $\gamma\in\Gamma$: in particular, given $y\sim P$, it delivers confidence regions $\{\ci_{\gamma\cdot \Gamma}^{\alpha}\}_{\gamma\in\Gamma}$ such that
\[P\left\{\theta_\gamma \in \ci_{\gamma\cdot \Gamma}^\alpha, \ \forall \gamma\in \Gamma\right\} \geq 1-\alpha,\]
for a pre-specified error level $\alpha\in(0,1)$.
The subscript $\gamma\cdot\Gamma$ indicates that the confidence region is provided for target $\gamma$ while adjusting for the requirement of simultaneous coverage over $\Gamma$. The superscript $\alpha$ indicates the tolerated error probability. 
Sometimes the error probability will be irrelevant or clear from the context, in which case we will drop it and write $C_{\gamma\cdot\Gamma} \equiv C_{\gamma\cdot\Gamma}^\alpha$.

We give examples of two simultaneous inference methods that will be our default choices for nonparametric problems and parametric problems, respectively. We focus on these methods since they are among the simplest and most common simultaneous inference methods (though certainly more sophisticated choices exist).

\begin{example}[Bonferroni correction]
The Bonferroni correction achieves simultaneous control by using nominal (i.e., unadjusted) intervals at the corrected error level $\alpha / |\Gamma|$:
\[
C_{\gamma\cdot\Gamma}^{\text{Bonf}(\alpha)} = C_\gamma^{\text{nom}(\alpha/|\Gamma|)}.
\]
Here, $C_\gamma^{\text{nom}(\alpha)}$ are any intervals that satisfy
$$P\left\{\theta_\gamma \in C_\gamma^{\text{nom}(\alpha)}\right\} \geq 1-\alpha,$$
for a specified level $\alpha\in(0,1)$. Bonferroni-corrected intervals can be applied nonparametrically and are valid regardless of any dependencies between the different estimation problems included in $\Gamma$.
\end{example}

\begin{example}[Maximal z- or t-statistic]
\label{ex:maximal_z_or_t}
If we have prior knowledge about the dependence structure of the different estimation problems included in $\Gamma$, there are approaches that outperform the Bonferroni correction. Suppose that for each $\gamma\in\Gamma$ we observe $\hat \theta_\gamma \sim \mathcal{N}(\theta_\gamma, \sigma_\gamma^2)$ and that jointly these observations make a multivariate Gaussian vector with a known covariance matrix. Denote the known covariance matrix of $(\hat \theta_\gamma)_{\gamma\in\Gamma}$ by $\Sigma$. Then, standard simultaneous confidence intervals are obtained by simulating the $1-\alpha$ quantile of the maximal z-statistic given by:
$$\max_{\gamma\in\Gamma} \frac{|Z_\gamma|}{\sigma_\gamma},$$
where $(Z_\gamma)_{\gamma\in\Gamma}\sim \mathcal{N}(0,\Sigma)$. Denote this quantile by $q_\Gamma^\alpha$. We construct the confidence intervals as
$$C_{\gamma\cdot \Gamma}^\alpha = \left(\hat\theta_\gamma \pm q_\Gamma^\alpha \sigma_\gamma\right).$$
The validity of the intervals follows immediately from the definition of $q_\Gamma^\alpha$. When the covariance matrix of the estimates is not known exactly but can be estimated, one can similarly construct intervals by computing the $1-\alpha$ quantile of the maximal t-statistic.
\end{example}

With simultaneously valid confidence regions in hand, as long as the selected targets $\widehat\Gamma$ are guaranteed to fall within the fixed admissible set $\Gamma$, meaning $\widehat{\Gamma}\subseteq \Gamma$ almost surely, an immediate implication is that
\[P\left\{\theta_\gamma \in \ci_{\gamma\cdot \Gamma}, \ \forall \gamma\in \widehat{\Gamma}\right\} \geq P\left\{\theta_\gamma \in \ci_{\gamma\cdot \Gamma}, \ \forall \gamma\in \Gamma\right\} \geq 1-\alpha,\]
which resolves the coupling between the data and the selected targets $\widehat \Gamma$.

The basic principle of our correction is to find a \emph{data-dependent}  set of targets nested between $\widehat\Gamma$ and $\Gamma$, such that taking a simultaneous correction over the set ensures type I error control. To implement this idea, we assume that we can construct simultaneous confidence regions $\ci_{\gamma\cdot\Gamma'}$ for any desired subset $\Gamma'\subseteq \Gamma$, at any target error level $\alpha$. Formally, we have access to a family of confidence regions $\{\ci_{\gamma\cdot\Gamma'}\}_{\Gamma'\subseteq \Gamma}$ such that
\[P\left\{\theta_\gamma \in \ci_{\gamma\cdot \Gamma'}, \ \forall \gamma\in \Gamma'\right\} \geq 1-\alpha,\]
for all $\Gamma'\subseteq \Gamma$.


To ensure validity of our construction, we make a mild and natural monotonicity assumption, requiring that the confidence regions can only increase as the set of inferential targets increases.

\begin{assumption}
\label{ass:monotone}
We say that the confidence regions $\{\ci_{\gamma\cdot\Gamma'}\}_{\Gamma'\subseteq \Gamma}$ are \emph{monotone} if for all $\Gamma_1\subseteq \Gamma_2\subseteq \Gamma$ and $\gamma\in\Gamma_1$,
\[\ci_{\gamma\cdot\Gamma_1} \subseteq \ci_{\gamma\cdot\Gamma_2}.\]
\end{assumption}

We are now ready to outline the general solution based on locally simultaneous inference.

\subsection{General construction}

The core idea of our method is to narrow down the set of inference targets to a subset $\widehat\Gamma^+$ that includes only those targets that have a reasonable chance of being selected. Specifically, for some initial tolerance $\nu \leq \alpha$, we first construct a $1-\nu$ level prediction set $\Gamma_\nu$ for $\widehat\Gamma$, followed by a $1-\nu$ level confidence set $\widehat\Gamma_\nu^+$ for $\Gamma_{\nu}$. By construction, we will ensure that we have $\widehat\Gamma \subseteq \Gamma_\nu \subseteq \widehat\Gamma_\nu^+$ with probability at least $1-\nu$.

The key observation to justify our method is that, whenever $\widehat\Gamma \subseteq \Gamma_\nu \subseteq \widehat\Gamma_\nu^+$, the \emph{random} set of intervals $\{\ci_{\gamma\cdot\widehat\Gamma_\nu^+}:\; \gamma \in \widehat\Gamma\}$ enjoys greater simultaneous coverage than the \emph{fixed} set of intervals $\{\ci_{\gamma\cdot \Gamma_\nu}:\; \gamma \in \Gamma_\nu\}$. Informally, in the random set, there are fewer chances to make a mistake because $\widehat\Gamma \subseteq \Gamma_\nu$, and the intervals are also wider because $\Gamma_\nu\subseteq \widehat\Gamma_\nu^+$. This observation will allow us to bound the simultaneous coverage of the random set in terms of the simultaneous coverage of the fixed set, at least on the event where $\widehat\Gamma \subseteq \Gamma_\nu \subseteq \widehat\Gamma_\nu^+$. The remainder of this section develops these ideas in detail.

For every $P\in\P$, suppose that we can construct a set $A_\nu(P)$ that satisfies
\[P\{y \in A_\nu(P)\} \geq 1-\nu,\]
for any pre-specified $\nu\in(0,1)$. In other words, $A_\nu(P)$ is the acceptance region of a valid test for the null hypothesis $H_P: y\sim P$ at level $1-\nu$.
Intuitively, $A_\nu(P)$ can be thought of as the set of all plausible observations according to distribution $P$. When $\P = \{P_\mu\}_{\mu\in\mathcal{M}}$ is a parametric family, we will simply write $A_\nu(P_\mu) \equiv A_\nu(\mu)$.

We define the set of \emph{plausible targets} under distribution $P$ to be:
\begin{align*}
\Gamma_\nu(P) &:= \underset{y'\in A_\nu(P)}{\cup} \widehat{\Gamma}(y').
\end{align*}
Note that, unlike the realized selection $\widehat\Gamma(y)$, $\Gamma_\nu(P)$ is a \emph{fixed} set of targets. Again, when $\P = \{P_\mu\}_{\mu\in\mathcal{M}}$ is a parametric family, we will write $\Gamma_\nu(P_\mu) \equiv \Gamma_\nu(\mu)$.

Finally, we define the inversion of $A_\nu(P)$, which gives a confidence region for the true distribution $P$:
$$B_\nu(y) = \{P\in\P: y\in A_\nu(P)\}.$$

Before proving our main result, which asserts validity of locally simultaneous inference, we prove a key technical lemma that makes the core of the argument.

\begin{lemma}
\label{lemma:local_posi_tighter}
Fix $\alpha\in(0,1)$ and $\nu\in(0,\alpha)$. Let $\{\tilde \ci_{\gamma\cdot\Gamma'}\}_{\Gamma'\subseteq \Gamma}$ be a family of confidence regions such that
\begin{equation}
\label{eq:C-til}
P\left\{\theta_\gamma \in \tilde \ci_{\gamma\cdot\Gamma'},\forall\gamma\in\Gamma', y \in A_\nu(P) \right\}\geq 1-\alpha,
\end{equation}
for all $\Gamma'\subseteq\Gamma$.
Moreover, suppose that the regions are monotone (Ass.~\ref{ass:monotone}). Consider the set of targets
\begin{equation*}
\widehat{\Gamma}_\nu^+ = \underset{P'\in B_\nu(y)}{\cup} \Gamma_\nu(P') = \underset{P'\in B_\nu(y)}{\cup} \underset{y'\in A_\nu(P')}{\cup} \widehat \Gamma(y').
\end{equation*}
Then, it holds that
\[P\left\{\theta_\gamma \in \tilde \ci_{\gamma\cdot \widehat\Gamma_{\nu}^+}, \forall \gamma \in \widehat\Gamma\right\} \geq 1-\alpha.\]
\end{lemma}

\begin{proof}
First, we can write
\[
P\left\{\theta_\gamma \in \tilde \ci_{\gamma\cdot \widehat\Gamma_{\nu}^+}, \forall \gamma \in \widehat\Gamma\right\} 
\geq
P\left\{\theta_\gamma \in \tilde \ci_{\gamma\cdot \widehat\Gamma_{\nu}^+}, \forall \gamma \in \widehat\Gamma, y \in A_\nu\right\}.
\]
Now notice that, on the event $\{y\in A_\nu(P)\} = \{P \in B_\nu(y)\}$, it almost surely holds that $\widehat \Gamma \subseteq \Gamma_\nu(P) \subseteq \widehat \Gamma_\nu^+$. Using this fact, we have
\begin{align*}
P\left\{\theta_\gamma \in \tilde \ci_{\gamma\cdot \widehat\Gamma_{\nu}^+}, \forall \gamma \in \widehat\Gamma, y \in A_\nu\right\} &\geq P\left\{\theta_\gamma \in \tilde \ci_{\gamma\cdot \widehat\Gamma_{\nu}^+}, \forall \gamma \in \Gamma_\nu(P), y \in A_\nu\right\}\\
&\geq P\left\{\theta_\gamma \in \tilde \ci_{\gamma\cdot \Gamma_\nu(P)}, \forall \gamma \in \Gamma_\nu(P), y \in A_\nu\right\},
\end{align*}
where the second inequality follows by the monotonicity of the confidence regions. Since the right-hand side is at least $1-\alpha$ by the definition of $\tilde \ci_{\gamma\cdot \Gamma_\nu(P)}$, we have shown
$$P\left\{\theta_\gamma \in \tilde \ci_{\gamma\cdot \widehat\Gamma_{\nu}^+}, \forall \gamma \in \widehat\Gamma\right\} \geq 1- \alpha,$$
as desired.
\end{proof}

Therefore, Lemma \ref{lemma:local_posi_tighter} reduces the problem of constructing selective confidence regions to the problem of constructing the regions $\tilde C_{\gamma\cdot\Gamma'}$ satisfying Eq.~\eqref{eq:C-til} for every \emph{fixed} $\Gamma'$. The following theorem, providing such regions, states our main result.

\begin{theorem}
\label{thm:local_posi_main}
Fix $\alpha\in(0,1)$ and $\nu\in(0,\alpha)$. Suppose that the simultaneous confidence regions $\{\ci_{\gamma\cdot\Gamma'}\}_{\Gamma'\subseteq \Gamma}$ are monotone (Ass.~\ref{ass:monotone}). Consider the set of targets
\begin{equation}
\label{eq:local_targets}
\widehat{\Gamma}_\nu^+ = \underset{P'\in B_\nu(y)}{\cup} \Gamma_\nu(P') = \underset{P'\in B_\nu(y)}{\cup} \underset{y'\in A_\nu(P')}{\cup} \widehat \Gamma(y').
\end{equation}
Then, it holds that
\[P\left\{\theta_\gamma \in \ci_{\gamma\cdot \widehat\Gamma_{\nu}^+}^{(\alpha-\nu)}, \forall \gamma \in \widehat\Gamma\right\} \geq 1-\alpha.\]
\end{theorem}

\begin{proof}
The proof follows by an application of Lemma \ref{lemma:local_posi_tighter}. In particular, by a union bound it follows that $C^{(\alpha-\nu)}_{\gamma\cdot \Gamma'}$ is a valid choice of $\tilde C_{\gamma\cdot \Gamma'}$ in Lemma \ref{lemma:local_posi_tighter}:
\begin{align*}
P\left\{\theta_\gamma \in C^{(\alpha-\nu)}_{\gamma\cdot \Gamma'}, \forall \gamma \in \Gamma', y\in A_\nu(P)\right\} &\geq 1 - P\{\exists \gamma\in\Gamma':\theta_\gamma \not \in C^{(\alpha-\nu)}_{\gamma\cdot \Gamma'}\} - P\{y \not\in A_\nu(P)\}\\
&\geq 1-(\alpha-\nu) - \nu = 1-\alpha.
\end{align*}
\end{proof}

Intuitively, Theorem \ref{thm:local_posi_main} justifies the following refinement of simultaneous inference. Given data $y$, first construct a set of all distributions under which the observed data is plausible. Then, consider all plausible observations under those distributions; this essentially gives a collection of datasets $y'$ in a neighborhood around $y$. Finally, perform simultaneous inference over all inferential targets $\widehat\Gamma(y')$ that could be selected in this neighborhood. Despite the fact that the set of targets is constructed as a function of $y$, a simultaneous correction over this set nevertheless ensures valid selective inferences for $\widehat \Gamma(y)$.

Next, we provide a slightly different correction from that of Theorem \ref{thm:local_posi_main} that \emph{strictly dominates} simultaneous inference at error level $\alpha$, for any choice of $\nu$ (note that the correction in  Theorem~\ref{thm:local_posi_main} dominates simultaneous inference at level $\alpha-\nu$). This is achieved by carefully choosing $A_\nu(P)$. The refined correction is often easy to apply, however we find that the strategy from Theorem~\ref{thm:local_posi_main} is usually more practical as it allows choosing $A_\nu(P)$ freely. For the next result, we assume \emph{centered} confidence intervals.

\begin{assumption}
\label{ass:centered_ints}
We say that $\{C_{\gamma\cdot\Gamma'}^\alpha\}_{\gamma'\subseteq\Gamma}$ are \emph{centered} confidence intervals if
$$C_{\gamma \cdot \Gamma'}^\alpha = \left(\hat\theta_\gamma \pm q_{\Gamma'}^\alpha \cdot \hat \sigma_\gamma\right),$$
for some estimator $\hat\theta_\gamma$ and standard error $\hat \sigma_\gamma$, where $q_{\Gamma'}^\alpha$ is chosen such that $P\{\theta_\gamma \in C^\alpha_{\gamma\cdot \Gamma'}, \forall \gamma\in\Gamma'\} \geq 1- \alpha$.
\end{assumption}
Confidence intervals are often centered; for example, this is true of intervals based on the maximal z- or t-statistic, as in Example \ref{ex:maximal_z_or_t}. We denote $C_{\gamma}(q) := \left(\hat\theta_\gamma \pm q \hat \sigma_\gamma\right)$; then, $C_{\gamma\cdot \Gamma'} = C_\gamma(q_{\Gamma'}^\alpha)$ are intervals valid simultaneously over $\Gamma'$ at level $1-\alpha$. Without loss of generality we assume that $q_{\Gamma'}^\alpha$ is nonincreasing in~$\alpha$.

\begin{theorem}
\label{thm:refined}
Fix $\alpha\in(0,1)$ and $\nu\in(0,\alpha)$. Suppose that the confidence intervals are monotone (Ass.~\ref{ass:monotone}), i.e., $q_{\Gamma_1}^\alpha\leq q_{\Gamma_2}^\alpha$ for all $\Gamma_1\subseteq \Gamma_2$, and centered (Ass.~\ref{ass:centered_ints}). Let $A_\nu(P) = \{y : \theta_\gamma \in C_{\gamma}(q_\Gamma^\nu), \forall\gamma\in\Gamma\}$, and let $\widehat\Gamma_\nu^+$ denote the set of targets from Theorem \ref{thm:local_posi_main} (Eq.~\eqref{eq:local_targets}). 
Let
$$\hat q = \min\left\{q_{\widehat\Gamma_\nu^+}^{(\alpha-\nu)},q_\Gamma^\alpha\right\}.$$
Then, it holds that
$$P\left\{ \theta_\gamma \in C_{\gamma}(\hat q), \forall \gamma\in\widehat\Gamma \right\}\geq 1-\alpha.$$
\end{theorem}

\begin{proof}
Analogously to Theorem \ref{thm:local_posi_main}, we show that $C_{\gamma}\left(\min\{q^{(\alpha-\nu)}_{\Gamma'}, q^\alpha_{\Gamma}\}\right)$ is a valid choice of $\tilde C_{\gamma\cdot\Gamma'}$ in Lemma~\ref{lemma:local_posi_tighter}. Invoking Lemma \ref{lemma:local_posi_tighter} then completes the proof.

We split the analysis into two cases, depending on which term achieves the minimum.

First, suppose that $\Gamma'$ is such that $q^{(\alpha-\nu)}_{\Gamma'} \leq q^\alpha_{\Gamma}$. Then, by a union bound, we have
\begin{align*}
P\{\theta_\gamma \in \tilde C_{\gamma\cdot \Gamma'}, \forall \gamma\in \Gamma', y\in A_\nu(P)\} &\geq 1 - P\{y\not \in A_\nu(P)\} - P\{\exists \gamma\in\Gamma': \theta_\gamma \not\in \tilde C_{\gamma\cdot \Gamma'}\}\\
&= 1 - P\{y\not \in A_\nu(P)\} - P\{\exists \gamma\in\Gamma': \theta_\gamma \not\in  C_{\gamma}(q_{\Gamma'}^{(\alpha-\nu)})\}\\
&\geq 1-\nu - (\alpha-\nu)\\
&= 1-\alpha.
\end{align*}
Next, suppose that $\Gamma'$ is such that $q^\alpha_{\Gamma}\leq q^{(\alpha-\nu)}_{\Gamma'}$. Then,
\begin{align*}
P\{\theta_\gamma \in \tilde C_{\gamma\cdot \Gamma'}, \forall \gamma\in \Gamma', y\in A_\nu(P)\} &= P\{\theta_\gamma \in C_{\gamma}(q^\alpha_{\Gamma}), \forall \gamma\in \Gamma', y\in A_\nu(P)\}\\
&\geq  P\{\theta_\gamma \in C_{\gamma}(q^\alpha_{\Gamma}), \forall \gamma\in \Gamma, y\in A_\nu(P)\}\\
&= P\{\theta_\gamma \in C_{\gamma}(q^\alpha_{\Gamma}), \forall \gamma\in \Gamma\}\\
&\geq 1-\alpha,
\end{align*}
where the third step follows because, by definition, $\{\theta_\gamma \in C_{\gamma}(q^\alpha_{\Gamma}), \forall \gamma\in \Gamma\}\Rightarrow \{y\in A_\nu(P)\}$. 

Therefore, $\tilde C_{\gamma\cdot\Gamma'} = C_{\gamma}\left(\min\{q^{(\alpha-\nu)}_{\Gamma'}, q^\alpha_{\Gamma}\}\right)$ is a valid choice of $\tilde C_{\gamma\cdot\Gamma'}$ in Lemma \ref{lemma:local_posi_tighter}, as desired.
\end{proof}

Under these mild conditions, locally simultaneous inference therefore comes at \emph{no cost} in terms of power: the intervals are at least as tight as fully simultaneous intervals. Moreover, whenever $\widehat \Gamma_\nu^+$ is a strict subset of all admissible selections, they will be strictly tighter.

\section{Inference on the ``most promising'' effects}
\label{sec:winner}

We first study the problem of constructing confidence intervals for the ``most promising'' effects. We consider two instantiations of the problem: inference on the winner and the file-drawer problem.

Given data $y = (y_1,\dots,y_m)\in\R^m$, the problem of inference on the winner asks for a confidence interval for the mean of the largest entry of $y$. Formally, if we let $\theta_\gamma = \E y_\gamma$ for all $\gamma\in[m]$, the goal is to do inference on $\theta_{\hat \gamma}$, where
\begin{equation}
\label{eq:winner}
 \hat \gamma = \argmax_{\gamma\in[m]} y_\gamma.
\end{equation}
The file-drawer problem asks for a confidence region that simultaneously covers the means of all observations that exceed a critical threshold $T$. Formally, the region is required to cover $\{\theta_\gamma: \gamma\in \widehat\Gamma\}$, where
\begin{equation}
\label{eq:filedrawer}
\widehat\Gamma = \{\gamma\in[m]: y_\gamma \geq T\}.
\end{equation}

The $m$ coordinates of $y$ can correspond to, for example, the effectiveness of $m$ different treatments, in which case the selection corresponds to focusing on the single seemingly best treatment or multiple treatments that are deemed sufficiently promising. The $m$ outcomes can also correspond to measurements of a time series over $m$ time steps (e.g., blood pressure in a specified interval), in which case selection focuses on the time steps at which the series achieves extreme values. Finally, $y$ can capture an estimate of the effectiveness of a treatment on $m$ different subgroups (e.g., $m$ age groups); the selection would then ask for the effectiveness within the single subgroup or several subgroups for which the treatment seems most promising.

We consider a parametric version and a nonparametric version of the two problems. Importantly, conditional selective inference is not directly applicable in the latter setting.

\subsection{Parametric case}

We begin with the case where $\P$ is a parametric family; in particular, we take $\P =\{P_\mu\}_\mu$ to be a location family with location parameter $\mu\in \R^m$. In other words, $y= (y_1,\dots,y_m)\sim P_\mu$ can be written as $y = \mu + Z$, where $Z= (Z_1,\dots,Z_m)\sim P_0$. For simplicity of exposition we assume that the errors $Z_i$ have the same marginal symmetric zero-mean distribution $P_0^{(1)}$ (e.g., $P_0^{(1)} = \mathcal{N}(0,\sigma^2)$), however generalizing beyond this setting is straightforward. We do not assume that the errors $Z_i$ are necessarily independent, i.e. that $P_0$ is a product distribution.

For an index set $\mathcal I\subseteq [m]$, we define
$$q^{\alpha}(\mathcal I) = \inf \left\{ q : P_0\left\{\max_{i\in\mathcal I} |Z_i| \leq q\right\} \geq 1-\alpha \right\}.$$
In words, $q^\alpha(\mathcal I)$ is the $1-\alpha$ quantile of the maximum absolute error over indices in $\mathcal I$. This would be the usual interval half-width if a simultaneous correction is required over the observations in $\mathcal I$. 
This value can be loosely upper bounded by taking a Bonferroni correction over $\mathcal I$. We note that exact knowledge of $P_0$ is not necessary; being able to compute an upper bound on $q^{\alpha}(\mathcal I)$ suffices.

Note that $\theta_i = \E y_i \equiv \mu_i$ in this setting; that is, the possible estimands $\theta_i$ are coordinates of the location parameter. Therefore, for $\hat \gamma$ as in Eq.~\eqref{eq:winner}, we want to construct a confidence interval for $\mu_{\hat \gamma}$; for $\widehat \Gamma$ as in Eq.~\eqref{eq:filedrawer}, we want to construct a confidence region for $\{\mu_{\gamma}:\gamma\in \widehat \Gamma\}$.

We now apply our general result about locally simultaneous inference.

\begin{theorem}
\label{thm:gauss_seq_winner}
Fix $\alpha\in(0,1)$ and $\nu\in(0,\alpha)$.
\begin{itemize}
    \item
For the problem of inference on the winner (Eq.~\eqref{eq:winner}), let the set of plausible indices be
$$\widehat \Gamma_\nu^+ = \left\{\gamma\in[m]: y_\gamma \geq y_{\hat \gamma} - 4q^{\nu}([m])\right\}.$$
Then,
$$P_\mu\left\{\mu_{\hat \gamma} \in \left(y_{\hat \gamma} \pm \min\left\{q^{(\alpha-\nu)}(\widehat \Gamma_\nu^+), q^{\alpha}([m])\right\} \right)\right\} \geq 1-\alpha.$$
\item For the file-drawer problem (Eq.~\eqref{eq:filedrawer}), let the set of plausible indices be
$$\widehat \Gamma_\nu^+ = \left\{\gamma\in[m]: y_\gamma \geq T - 2q^{\nu}([m])\right\}.$$
Then,
$$P_\mu\left\{\mu_{\gamma} \in \left(y_{\gamma} \pm \min\left\{q^{(\alpha-\nu)}(\widehat \Gamma_\nu^+), q^{\alpha}( [m])\right\}\right),~\forall \gamma\in\widehat\Gamma \right\} \geq 1-\alpha.$$
\end{itemize}
\end{theorem}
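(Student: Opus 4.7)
\emph{Proof plan.} The plan is to instantiate \thmref{local_posi_main} for the location family $\P = \{P_\mu\}_\mu$ with the natural acceptance region
\[
A_\nu(\mu) = \{y' \in \R^m : \|y' - \mu\|_\infty \leq q^{\nu}([m])\}.
\]
Since $y - \mu$ has the distribution of $Z \sim P_0$ under $P_\mu$, the definition of $q^\nu([m])$ immediately gives $P_\mu(y \in A_\nu(\mu)) \geq 1 - \nu$, so $A_\nu(\mu)$ is a valid acceptance region in the sense required by the general theorem. The simultaneous intervals $\ci_{\gamma\cdot\Gamma'}^\alpha = (y_\gamma \pm q^\alpha(\Gamma'))$ are trivially nested in $\Gamma'$ because $q^\alpha(\cdot)$ is monotone in its argument, so \assumpref{nested} is satisfied.

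Next I would unwind the set $\widehat\Gamma_\nu^+$ defined in the general theorem. The inverted confidence set is $B_\nu(y) = \{\mu' : \|y - \mu'\|_\infty \leq q^\nu([m])\}$, and by the triangle inequality every $y'$ lying in $A_\nu(\mu')$ for some $\mu' \in B_\nu(y)$ satisfies $\|y' - y\|_\infty \leq 2 q^\nu([m])$. Hence the $\widehat\Gamma_\nu^+$ produced by \thmref{local_posi_main} is contained in the set of all selections realized under an $\ell_\infty$-perturbation of $y$ of magnitude at most $2q^\nu([m])$.

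I would now bound this latter union separately for the two selection rules. For the winner, if $\gamma = \argmax_i y_i'$ for some feasible $y'$, then $y_\gamma' \geq y_{\hat\gamma}'$, and inserting the two-sided $\ell_\infty$ perturbation bound yields $y_\gamma \geq y_{\hat\gamma} - 4q^\nu([m])$; the factor $4$ arises because $y_\gamma$ and $y_{\hat\gamma}$ can each be perturbed by up to $2q^\nu([m])$ in opposite directions. For the file-drawer rule, $\gamma \in \widehat\Gamma(y')$ means $y_\gamma' \geq T$, so a single perturbation gives $y_\gamma \geq T - 2q^\nu([m])$ — factor $2$ now, because only $y_\gamma$ is noisy relative to the deterministic threshold $T$. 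In both cases the set $\widehat\Gamma_\nu^+$ stated in the theorem is a \emph{superset} of the one produced by \thmref{local_posi_main}, so by the nestedness of the intervals, $q^{\alpha-\nu}$ on the theorem's set is no larger than $q^{\alpha-\nu}(\widehat\Gamma_\nu^+)$, and a direct invocation of \thmref{local_posi_main} delivers the claimed $1-\alpha$ coverage. The only substantive step is this $4$-vs-$2$ bookkeeping; the rest is pure set algebra combined with the general theorem.
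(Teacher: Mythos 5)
Your proposal is correct and follows essentially the same route as the paper's proof: the $\ell_\infty$-box acceptance region $A_\nu(\mu)$, the triangle-inequality collapse of the double union to selections under $2q^\nu([m])$-perturbations of $y$, and the same $4$-vs-$2$ bookkeeping for the winner versus the file-drawer rule. The only cosmetic differences are that you settle for a superset of the plausible-target set where the paper asserts exact equality (containment is all that validity requires, and you invoke nestedness correctly to close that gap), and that you explicitly verify Assumption~\ref{ass:nested}, which the paper leaves implicit.
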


Theorem \ref{thm:gauss_seq_winner} formalizes the intuition that one should only have to add the ``nearly selected'' observations to the simultaneous correction, if the goal is to construct a valid confidence region around the selected ones. When there are many observations that are far from promising, then $\widehat \Gamma_\nu^+$ can be much smaller than~$[m]$.

We note the file-drawer problem asks for a confidence region around \emph{all} parameters that exceed the selection threshold; in contrast, the conditional approach of \citet{lee2016exact} provides inference for one real-valued parameter at a time. It is unclear how to generalize it to the problem of inference on multiple parameters without resorting to a trivial solution such as a Bonferroni correction over all estimands, which ignores the dependencies between the different estimation problems. The same observation applies to the hybrid method~\cite{andrews2019inference, mccloskey2020hybrid}. In contrast, since locally simultaneous inference builds on standard simultaneous inference, it is able to adapt to the dependencies at hand.

\subsection{Nonparametric case}

We show that essentially the same reasoning as in the parametric case applies to nonparametric settings.

For each of the $m$ candidates, we assume that we have $n$ i.i.d. observations that are bounded in $[0,1]$. More formally, we observe $n$ i.i.d. samples $y^{(1)},\dots,y^{(n)}$ drawn from a distribution $P$ with $\text{supp}(P)\subseteq [0,1]^m$. As before, we denote the $m$-dimensional vector of means by $\theta = \E y^{(1)}$.

In the problem of inference on the winner, we would like to do inference on $\theta_{\hat \gamma}$, where
\begin{equation}
\label{eq:winner_np}
\hat \gamma = \argmax_{\gamma\in[m]} y_\gamma := \argmax_{\gamma\in[m]}  \frac{1}{n}\sum_{j=1}^n y_\gamma^{(j)}.
\end{equation}
In the file-drawer problem, we would like to do inference on $\{\theta_{\gamma}:\gamma\in\widehat\Gamma\}$, where
\begin{equation}
\label{eq:filedrawer_np}
\widehat\Gamma = \{\gamma\in[m]: y_\gamma \geq T\} := \left\{\gamma\in[m] :  \frac{1}{n}\sum_{j=1}^n y_\gamma^{(j)} \geq T\right\}.
\end{equation}

Let $w_n^\alpha$ be any valid bound on the deviation of the empirical average of $n$ i.i.d. random variables $X_1,\dots,X_n\in [0,1]$ from their mean. Formally, $w_n^\alpha$ satisfies
$$P\left\{ \E X_1 \in \left(\frac{1}{n} \sum_{i=1}^n X_i \pm w_n^\alpha \right)\right\}\geq 1- \alpha.$$
For example, a standard choice of $w_n^\alpha$ is obtained from Hoeffding's inequality:
$$w_n^\alpha = \sqrt{\frac{\log(2/\alpha)}{2n}}.$$
Tighter choices of $w_n^\alpha$ are generally possible, e.g. by applying Bentkus' \cite{bentkus2004hoeffding}, Bernstein's \cite{bernstein1924modification}, or Bennett's inequality \cite{bennett1962probability}. 
Furthermore, for every $\gamma\in[m]$, we let $C_{\gamma}^\alpha$ be a confidence region for $\theta_\gamma$ valid at level $1-\alpha$. In our nonparametric experiments, we will take $C_{\gamma}^\alpha$ to be the betting-based confidence intervals due to \citet{waudby2020estimating}.

Similarly as in the parametric setting, we ensure valid selective inference by only requiring simultaneous control---here achieved by taking a Bonferroni correction---over the selected and nearly selected observations.

\begin{theorem}
\label{thm:nonparametric_winner}
Fix $\alpha\in(0,1)$ and $\nu\in(0,\alpha)$. Assume that $C_{\gamma}^{\alpha_1} \supseteq C_{\gamma}^{\alpha_2}$ for all $\alpha_1,\alpha_2\in(0,1)$ such that $\alpha_1 \leq \alpha_2$.
\begin{itemize}
    \item For the problem of inference on the winner (Eq.~\eqref{eq:winner_np}), let the set of plausible indices be
$$\widehat \Gamma_\nu^+ = \left\{\gamma\in[m]: y_\gamma \geq  y_{\hat \gamma} - 4w_n^{\nu/m} \right\}.$$
Then,
$$P\left\{\theta_{\hat \gamma} \in C_{\hat\gamma}^{(\alpha-\nu)/|\widehat\Gamma_\nu^+|} \right\} \geq 1-\alpha.$$
\item For the file-drawer problem (Eq.~\eqref{eq:filedrawer_np}), let the set of plausible indices be
$$\widehat \Gamma_\nu^+ = \left\{\gamma\in[m]: y_\gamma \geq T - 2w_n^{\nu/m}\right\}.$$
Then,
$$P\left\{\theta_{\gamma} \in C_{\gamma}^{(\alpha-\nu)/|\widehat \Gamma_\nu^+|},~\forall \gamma\in\widehat\Gamma \right\} \geq 1-\alpha.$$
\end{itemize}
\end{theorem}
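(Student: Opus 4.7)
The plan is to apply Theorem \ref{thm:local_posi_main} with a carefully chosen acceptance region. Since both selection rules depend on the data only through the coordinate-wise empirical means $y_\gamma = \frac{1}{n}\sum_{j=1}^n y_\gamma^{(j)}$, a natural choice is
$$A_\nu(P) = \left\{(y^{(j)})_{j=1}^n : |y_\gamma - \theta_\gamma(P)| \leq w_n^{\nu/m} \text{ for all } \gamma \in [m]\right\}.$$
By the defining property of $w_n^{\nu/m}$ applied coordinatewise and a union bound over the $m$ coordinates, $P((y^{(j)})_j \in A_\nu(P)) \geq 1 - \nu$, so $A_\nu$ qualifies as the acceptance region of a level-$(1-\nu)$ test of $H_P$.

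With this choice, $B_\nu(y) = \{P : |y_\gamma - \theta_\gamma(P)| \leq w_n^{\nu/m}, \forall \gamma\}$. The key uniform neighborhood bound then follows from two applications of the triangle inequality: for any $P' \in B_\nu(y)$ and any $y' \in A_\nu(P')$, we have $|y'_\gamma - y_\gamma| \leq 2w_n^{\nu/m}$ for every $\gamma$. This reduces computing $\widehat\Gamma_\nu^+$ to studying which indices can possibly be selected among empirical-mean vectors $y'$ lying in this uniform neighborhood of $y$.

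Next I would translate the neighborhood bound into the claimed upper bound on $\widehat\Gamma_\nu^+$. For the winner problem, if $\gamma \in \widehat\Gamma(y')$ then $y'_\gamma \geq y'_{\hat\gamma(y)}$, and combining $y'_\gamma \leq y_\gamma + 2w_n^{\nu/m}$ with $y'_{\hat\gamma(y)} \geq y_{\hat\gamma(y)} - 2w_n^{\nu/m}$ yields $y_\gamma \geq y_{\hat\gamma(y)} - 4w_n^{\nu/m}$. For the file-drawer problem, $\gamma \in \widehat\Gamma(y')$ requires $y'_\gamma \geq T$, and $y_\gamma + 2w_n^{\nu/m} \geq y'_\gamma \geq T$ forces $y_\gamma \geq T - 2w_n^{\nu/m}$. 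In both cases the true $\widehat\Gamma_\nu^+$ from Theorem \ref{thm:local_posi_main} is contained in the claimed superset.

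Finally, I would verify Assumption \ref{ass:nested} for the Bonferroni-corrected family $C_{\gamma\cdot\Gamma'}^{\alpha-\nu} = C_\gamma^{(\alpha-\nu)/|\Gamma'|}$: when $\Gamma_1 \subseteq \Gamma_2$ we have $(\alpha-\nu)/|\Gamma_1| \geq (\alpha-\nu)/|\Gamma_2|$, so the hypothesis $C_\gamma^{\alpha_1} \supseteq C_\gamma^{\alpha_2}$ for $\alpha_1 \leq \alpha_2$ yields the nesting property. An invocation of Theorem \ref{thm:local_posi_main}, combined with the observation that enlarging $\widehat\Gamma_\nu^+$ to the stated superset only widens the simultaneous CIs (again by nesting), delivers both conclusions. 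The main technical nuance will be the bookkeeping behind the factor $4$ for the winner (slack absorbed on both the candidate $\gamma$ and the observed winner) versus $2$ for the file-drawer (comparison to a deterministic threshold $T$), and correctly handling the ``slackening'' from the true union to the stated superset via the monotonicity of the Bonferroni family.
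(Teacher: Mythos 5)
Your proposal is correct and follows essentially the same route as the paper: the paper's proof simply invokes the argument of Theorem \ref{thm:gauss_seq_winner} with the $\ell_\infty$-box acceptance region on the empirical means (at per-coordinate level $\nu/m$ via a union bound), derives the $2w_n^{\nu/m}$ neighborhood by the same two-step union/triangle argument, and then applies Theorem \ref{thm:local_posi_main} with a Bonferroni correction over $\widehat\Gamma_\nu^+$. Your write-up is in fact more explicit than the paper's, and your care in noting that passing to a superset of the exact plausible set is harmless by nestedness is a valid (and correct) refinement of the same argument.
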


\section{Inference after model selection}

We next consider the problem of inference after data-driven model selection. We first state a general implication of locally simultaneous inference in this context and then specialize this result to selection via the LASSO.

To set up the problem, suppose that we have a fixed design matrix $X\in\R^{n\times d}$ and a corresponding vector of outcomes $y\in\R^n$, where $y\sim P_\mu$. 
We assume $P_\mu$ is a location family, that is, $y\sim P_\mu \Leftrightarrow y \stackrel{d}{=} \mu + Z$, where $Z\sim P_0$ has mean zero.

We want to select a \emph{model} $\hat M \equiv \hat M(y)$ corresponding to a subset of the $d$ features, and then regress the outcome onto the selected features. Following the proposal of \citet{berk2013valid}, for a fixed model $M\subseteq[d]$ we take the so-called \emph{projection parameter} to be the target of inference. This parameter is obtained by approximating the outcome using the columns in $X$ indexed by $M$:
$$\theta_{M} := \argmin_\theta \E \left\|y - X_{M} \theta\right\|^2_2 = X_M^+ \mu,$$
where $X_M^+$ denotes the pseudoinverse of $X_M$. We denote the empirical counterpart of $\theta_M$ by $\hat \theta_M := X_M^+ y$. For a data-driven choice of model $\hat M$, the inferential target is therefore $\theta_{\hat M}$. We use $\theta_{j\cdot M}$ to denote the entry of $\theta_M$ corresponding to feature $j$. Note that, in general, $\theta_{j\cdot M}\neq \theta_{j\cdot M'}$ for two different models $M,M'$.

The set of possible estimands in this context is all possible values of $\theta_{j\cdot M}$. The natural index set $\Gamma$ for these estimands is given by all feature--model pairs: $\Gamma = \{(j,M):j\in M, M\in \M\}$, where $\M$ corresponds to all admissible feature selections, which is often $2^{[d]}$. The selected targets are the regression coefficients in the selected model, i.e. $\widehat \Gamma = \{(j,\hat M): j\in \hat M\}$.

To apply a locally simultaneous correction, we need to compute the augmented set of targets $\widehat \Gamma_\nu^+$. The key step in doing so is to find \emph{all plausible models}, which we will denote by $\widehat \M_\nu^+$; after we have $\widehat \M_\nu^+$, we apply a simultaneous correction in the vein of \citet{berk2013valid}, called the \emph{PoSI} correction, over $\widehat \M_\nu^+$.
Intuitively, $\widehat \M_\nu^+$ is the set of all models that could be selected on outcome vectors similar to $y$. Again, we note that this set is data-dependent; Berk et al., on the other hand, consider a deterministic set of possible models.


We will focus on methods that use $X^\top y$ as a sufficient statistic, which includes most common selection methods such as the LASSO, forward stepwise, etc. For such methods, a natural choice for the plausible set $A_\nu(\mu)$ is outcome vectors for which $X^\top y \approx X^\top \mu$. We formalize this in Corollary \ref{cor:linear_reg} below.

To state the result, for a set of contrasts $\V$, we define
$$q^{\alpha}(\V) = \inf\left\{q: P_0\left\{\sup_{v\in\V} |v^\top Z|\leq q\right\} \geq 1-\alpha \right\},$$
where $Z\sim P_0$. Let also $e_{j\cdot M}\in\R^{|M|}$ be the canonical vector with entry $1$ corresponding to feature $j\in M$ and $\hat\sigma_{j\cdot M} = \sqrt{e_{j\cdot M}^\top (X_M^\top X_M)^{-1} e_{j\cdot M}}$.

\begin{corollary}
\label{cor:linear_reg}
Fix $\alpha\in(0,1)$ and $\nu\in(0,\alpha)$. Let
$$\widehat\M_\nu^+ = \left\{\hat M(y'): \left\|X^\top y - X^\top y' \right\|_\infty \leq 2q^{\nu}\left(\{X_j\}_{j=1}^d\right) \right\}$$
and
$$\widehat \V_\nu^+ = \left\{\frac{e_{j\cdot M}^\top X_M^+}{\hat \sigma_{j\cdot M}}:M\in \widehat \M_\nu^+, j\in M\right\}.$$
Then,
$$P_\mu\left\{\theta_{j\cdot \hat M} \in \left(\hat\theta_{j\cdot \hat M} \pm q^{(\alpha - \nu)} \left(\widehat \V_\nu^+\right)\hat \sigma_{j\cdot M}  \right),~\forall j \in \hat M \right\} \geq 1 - \alpha.$$
\end{corollary}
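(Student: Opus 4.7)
The plan is to apply Theorem \ref{thm:local_posi_main} using a plausibility region that exploits the fact that $\hat M(y)$ depends on $y$ only through the sufficient statistic $X^\top y$. Concretely, I would take
\[A_\nu(\mu) = \left\{y' \in \R^n : \|X^\top y' - X^\top \mu\|_\infty \leq q^{\nu}\!\left(\{X_j\}_{j=1}^d\right)\right\},\]
so that, writing $y = \mu + Z$ with $Z \sim P_0$, the definition of $q^\nu$ gives $P_\mu\{y \in A_\nu(\mu)\} \geq 1 - \nu$. The inversion is $B_\nu(y) = \{\mu' : \|X^\top y - X^\top \mu'\|_\infty \leq q^{\nu}(\{X_j\}_{j=1}^d)\}$.

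Next I would use a triangle-inequality argument to control the set of plausible targets. For any $\mu' \in B_\nu(y)$ and $y' \in A_\nu(\mu')$,
\[\|X^\top y - X^\top y'\|_\infty \leq \|X^\top y - X^\top \mu'\|_\infty + \|X^\top \mu' - X^\top y'\|_\infty \leq 2\, q^{\nu}\!\left(\{X_j\}_{j=1}^d\right).\]
Since $\hat M$ is a function of $X^\top y$ only, this shows $\hat M(y') \in \widehat\M_\nu^+$, and hence the set $\widehat\Gamma_\nu^+$ furnished by Theorem \ref{thm:local_posi_main} with selection map $\widehat\Gamma(y') = \{(j,\hat M(y')) : j \in \hat M(y')\}$ satisfies $\widehat\Gamma_\nu^+ \subseteq \{(j,M) : j \in M,\, M \in \widehat\M_\nu^+\}$.

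For the family of simultaneous confidence regions I would take, for every $\Gamma' \subseteq \Gamma$ and level $\alpha'$,
\[C_{(j,M)\cdot \Gamma'}^{\alpha'} = \left(\hat\theta_{j\cdot M} \pm q^{\alpha'}(\V(\Gamma'))\,\hat\sigma_{j\cdot M}\right), \qquad \V(\Gamma') := \left\{\frac{e_{j\cdot M}^\top X_M^+}{\hat\sigma_{j\cdot M}} : (j,M) \in \Gamma'\right\}.\]
Because $\hat\theta_{j\cdot M} - \theta_{j\cdot M} = e_{j\cdot M}^\top X_M^+ Z$ and every $v \in \V(\Gamma')$ has unit $\ell_2$ norm, simultaneous coverage over $\Gamma'$ is exactly the event $\{\sup_{v \in \V(\Gamma')} |v^\top Z| \leq q^{\alpha'}(\V(\Gamma'))\}$, which has $P_0$-probability at least $1-\alpha'$. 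The nestedness assumption is immediate since $\Gamma_1 \subseteq \Gamma_2$ implies $\V(\Gamma_1) \subseteq \V(\Gamma_2)$ and hence $q^{\alpha'}(\V(\Gamma_1)) \leq q^{\alpha'}(\V(\Gamma_2))$.

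Putting these pieces together, Theorem \ref{thm:local_posi_main} yields $P_\mu\{\theta_{j\cdot \hat M} \in C_{(j,\hat M)\cdot \widehat\Gamma_\nu^+}^{\alpha-\nu},\, \forall j \in \hat M\} \geq 1-\alpha$, and by the containment from the second step together with nestedness each such interval is a subset of $(\hat\theta_{j\cdot \hat M} \pm q^{\alpha-\nu}(\widehat\V_\nu^+)\,\hat\sigma_{j\cdot \hat M})$, giving the corollary. I expect the main obstacle to be designing $A_\nu$ so that the resulting $\widehat\M_\nu^+$ is characterized entirely by $y$ rather than by the unknown $\mu$; the triangle-inequality step is what does this, absorbing both the randomness in $\mu'$ and in $y'$ into a single $2q^\nu$ envelope on $X^\top y - X^\top y'$.
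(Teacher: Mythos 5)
Your proposal is correct and follows essentially the same route as the paper's proof: the same acceptance region $A_\nu(\mu)$ based on $X^\top y$, the same triangle-inequality argument yielding the $2q^\nu$ envelope, and the same reduction to Theorem \ref{thm:local_posi_main}. The only difference is that you explicitly verify the simultaneous validity and nestedness of the Berk-et-al.-style intervals, which the paper instead delegates to a citation.
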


Therefore, unlike the PoSI method \cite{berk2013valid}, we take a simultaneous correction only over models that seem plausible in hindsight. The correction of Corollary \ref{cor:linear_reg} is at least as tight as the PoSI correction at error level $\alpha-\nu$; note that we can in principle obtain a correction that is at least as tight as the PoSI correction at error level $\alpha$, by invoking the refined analysis of Theorem \ref{thm:refined}. However, this refined correction would require computing both the full PoSI correction and the local correction $q^{(\alpha - \nu)} \left(\widehat \V_\nu^+\right)$, which can be far more computationally demanding than computing only the local correction. As a result, we feel that the correction of Corollary \ref{cor:linear_reg} is more practical.

As a warmup, we instantiate Corollary \ref{cor:linear_reg} for marginal screening, which admits a simple, explicit characterization of $\widehat\M_\nu^+$. Then we study selection via the LASSO.

\begin{example}[Marginal screening] Marginal screening is a simple feature selection method that selects $\hat M = \{\hat i_1,\dots,\hat i_k\}$, where $\hat i_j$ is the $j$-th largest inner product $|X_i^\top y|$, for a pre-specified $k\in [d]$.

Let $c_{(j)}$ denote the $j$-th largest inner product $|X_i^\top y|$. Then, it is not difficult to see that $\widehat\M_\nu^+$ consists of all subsets of size $k$ of the set
$$\left\{i\in[d]:|X_i^\top y| \geq c_{(k)} - 4q^{\nu}(\{X_j\}_{j=1}^d)\right\}.$$

In words, all variables $i$ with $|X_i^\top y|$ within a $4q^{\nu}(\{X_j\}_{j=1}^d)$ margin of $c_{(k)}$ have a plausible chance of being selected. As a result, taking a simultaneous correction over them suffices to get valid inference, while all other variables can be searched through ``for free.''
\end{example}



\subsection{Model selection via the LASSO}

We discuss a method for locally simultaneous inference after model selection via the LASSO. While we focus on the LASSO, the method can be applied to any selection procedure where the selection event admits a polyhedral representation, such as forward stepwise \cite{tibshirani2016exact}. We will elaborate on this point later in the section.

Recall that the LASSO solves the following penalized regression problem:
$$\hat\beta(y) = \argmin_\beta \frac{1}{2} \|y - X\beta\|_2^2 + \lambda \|\beta\|_1,$$
and selects $\hat M = \{i\in[d]: \hat\beta(y)_i\neq 0 \}$. We will write $\hat\beta(y) \equiv \hat \beta$ when the argument is clear from the context.

The key step in applying Corollary \ref{cor:linear_reg} is to find the set of plausible models $\widehat\M_\nu^+$. More precisely, denoting $\B_\nu^\infty = \{y':\|X^\top y - X^\top y'\|_\infty \leq 2q^{\nu}(\{X_j\}_{j=1}^d)\}$ the relevant neighboring outcome vectors, the set of plausible models is $\widehat\M_\nu^+ = \{\hat M(y'):y'\in\B_\nu^\infty\}$. To simplify notation we will denote by $s_\nu = 2q^{\nu}(\{X_j\}_{j=1}^d)$ the radius of $\B_\nu^\infty$.

To find all possible models in $\B_\nu^\infty$, we apply the polyhedral characterization of the LASSO selection event due to \citet{lee2016exact}. Denoting by $\hat s = \mathrm{sign}\left(\hat\beta_{\hat M}\right)$ the signs of the selected variables in the LASSO solution, Lee et al. show that
$$\{\hat M = M, \hat s = s\} = \left\{
\begin{pmatrix} A_0^+(M,s)\\
A_0^-(M,s)\\
A_1(M,s)
\end{pmatrix} y < 
\begin{pmatrix} b_0^+(M,s)\\
b_0^-(M,s)\\
b_1(M,s)
\end{pmatrix}
\right\},$$
for any fixed model-sign pair $(M,s)$, where
\begin{align*}
&A_0^+(M,s) = \frac{1}{\lambda} X^\top_{M^c}(I - \Pi_M), \quad b_0^+(M,s) = \mathbf{1} - X_{M^c}^\top (X_M^\top)^+ s;\\
&A_0^-(M,s) = - \frac{1}{\lambda} X^\top_{M^c}(I - \Pi_M), \quad b_0^-(M,s) = \mathbf{1} + X_{M^c}^\top (X_M^\top)^+ s;\\
&A_1(M,s) = - \mathrm{diag}(s)(X_M^\top X_M)^{-1}X_M^\top, \quad b_1(M,s) = - \lambda \mathrm{diag}(s) (X_M^\top X_M)^{-1} s.
\end{align*}
Here, $\Pi_M := X_M (X_M^\top X_M)^{-1} X_M^\top$.
We will denote the polyhedron above by $P(M,s)$.

At a high level, our approach to finding $\widehat \M_\nu^+$ is the following. The Lee et al. characterization shows that the set of outcome vectors for which a model $M$ and sign vector $s$ are realized is a polyhedron. Moreover, for each active constraint of the polyhedron, meaning that the constraint is not redundant in defining the polyhedron, we know exactly which model-sign pair is on the other side of the face (depending on the constraint corresponding to the active face). The basic idea of our procedure is to compute the model-sign pair (and the corresponding polyhedron) at the data $y$, and then recursively move to neighboring polyhedra until the whole box $\B^\infty_\nu$ is tiled by the visited polyhedra. The set $\widehat \M_\nu^+$ is then simply all the models recorded in the visited polyhedra.

The described principle is agnostic to the fact that the polyhedron characterizes the LASSO selection event specifically. In particular, it works for \emph{any} selection procedure that admits a polyhedral representation. Just like in the case of the LASSO, the goal is to enumerate all polyhedra contained in $\B_\nu^\infty$, which encode the different plausible selection events, and this is precisely what our method accomplishes.

In what follows we discuss rules for determining the set of neighboring model-sign polyhedra given the current model-sign polyhedron, which make the core of our procedure. We rely on two types of rules: \emph{exact} screening rules and \emph{safe} screening rules. Exact screening rules are necessary and sufficient to screen out ``irrelevant'' variables, i.e. those whose inclusion/exclusion does not change when going from the current model-sign region to any neighboring model-sign region: they either remain in the model with the same sign in all neighboring polyhedra or they never enter the model. Safe screening rules are not exact but provide sufficient conditions for screening; we combine them with exact rules to improve computational efficiency. Our safe rules resemble prior work on variable elimination for the LASSO \cite{ghaoui2010safe, tibshirani2012strong}, but are fundamentally different as they rely on properties of $\B_\nu^\infty$. It is worth mentioning that the safe rules are LASSO-specific; the exact rules work for general selection strategies with a polyhedral characterization. 

We use $\B(M,s)$ to denote the set of model-sign pairs whose corresponding polyhedra neighbor, i.e. share a face with, $P(M,s)$.

\paragraph{Exact screening rules.} Exact screening rules proceed by checking for each variable $i\in[d]$ if it can change its inclusion/exclusion status when going from the current model-sign polyhedron $P(M,s)$ to any neighboring polyhedron. In other words, for each variable $i\in M$, they check if there exists a pair $(M',s')\in \B(M,s)$ such that $i\not\in M'$; similarly, for each $i\in M^c$, they check if there exists a pair $(M',s')\in \B(M,s)$ such that $i\in M'$, and they additionally identify the corresponding sign of variable $i$ if such a pair exists.


The core idea of exact screening rules is to find the minimal representation of $P(M,s)\cap \B_\nu^\infty$. That is, the goal is to prune all redundant constraints coming from $P(M,s)$; the inequalities that remain are ``active'' and indicate that the variables corresponding to those constraints can enter or leave the model in one of the neighboring polyhedra. In Algorithm \ref{alg:exact_screening} in the Appendix we use a standard solution to finding a minimal polyhedral representation, which relies on solving one linear program for each constraint whose redundancy is being checked.

\paragraph{Safe screening rules.}
Safe rules serve to speed up the search for a minimal representation of a polyhedron corresponding to a model-sign pair.
For all $y'\in P(M,s)$, the LASSO optimality conditions imply that the LASSO solution is locally linear, namely
$$\hat \beta(y') = \beta_{(M,s)}(y') := (X_M^\top X_M)^{-1}(X_M^\top y' - \lambda s).$$
Note that, while $\beta_{(M,s)}(y')$ is equal to the LASSO solution for $y'\in P(M,s)$, it can be computed for $y'\not\in P(M,s)$. We use this characterization to design the safe screening rules.

\begin{lemma}[Safe exclusion]
\label{lemma:safe_Mc}
Fix a model-sign pair $(M,s)$. Let
$$\mathcal{I}_{\mathrm{safe}}^-(M,s) := \left\{j\in M^c: |X_j^\top (y - X_M\beta_{(M,s)}(y))| < \lambda - s_\nu \left(1 + \|X_j^\top X_M(X_M^\top X_M)^{-1}\|_1\right)\right\}.$$
Then, for any $j\in \mathcal{I}_{\mathrm{safe}}^-(M,s)$, variable $j$ cannot enter the model in any of the neighboring polyhedra:
$$\forall (M',s')\in \B(M,s),~ j\not\in M'.$$
\end{lemma}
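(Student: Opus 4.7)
The plan is to leverage the polyhedral description of the LASSO selection event (specifically the KKT-type constraints for unselected variables) and show that the hypothesis forces those constraints to remain slack everywhere in $\B_\nu^\infty$, which in turn precludes $j$ from flipping into the model on any face shared with a neighboring polyhedron.

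First I would recall the exact form of the constraints in $P(M,s)$ that correspond to variable $j\in M^c$ staying out of the model. From the block $A_0^{\pm}(M,s)$ given in the Lee et al.\ representation, the two constraints for $j$ collapse to $|f_j(y')| \leq \lambda$, where
\[
f_j(y') := X_j^\top\bigl(y' - X_M\,\beta_{(M,s)}(y')\bigr) = X_j^\top(I - \Pi_M)\,y' + \lambda\,X_j^\top X_M(X_M^\top X_M)^{-1} s.
\]
This is a straightforward algebraic rearrangement using $\beta_{(M,s)}(y') = (X_M^\top X_M)^{-1}(X_M^\top y' - \lambda s)$ and $(X_M^\top)^+ = X_M(X_M^\top X_M)^{-1}$. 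The key geometric observation is that if $(M',s') \in \B(M,s)$ has $j\in M'$, then $P(M,s)$ and $P(M',s')$ share a face on which exactly one of the two sign-specific versions of the constraint for $j$ is tight, i.e.\ $|f_j(y')| = \lambda$ at every point of that shared face.

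Next I would establish a Lipschitz-type bound on $f_j$ in the box $\B_\nu^\infty$. Note that $f_j(y')$ depends on $y'$ only through $X_j^\top y'$ and $X_M^\top y'$; letting $u = X^\top y' - X^\top y$, linearity gives
\[
f_j(y') - f_j(y) = u_j - X_j^\top X_M(X_M^\top X_M)^{-1} u_M.
\]
Since $y' \in \B_\nu^\infty$ means $\|u\|_\infty \leq s_\nu$, Hölder's inequality yields
\[
\bigl|f_j(y') - f_j(y)\bigr| \leq s_\nu\Bigl(1 + \bigl\|X_j^\top X_M(X_M^\top X_M)^{-1}\bigr\|_1\Bigr).
\]
Combining this with the hypothesis $|f_j(y)| < \lambda - s_\nu(1 + \|X_j^\top X_M(X_M^\top X_M)^{-1}\|_1)$ and the triangle inequality gives $|f_j(y')| < \lambda$ \emph{strictly} for every $y' \in \B_\nu^\infty$.

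Finally I would conclude: since no point of $\B_\nu^\infty$ can make $|f_j|$ reach $\lambda$, the two hyperplanes that would separate $P(M,s)$ from a neighbor $(M',s')$ with $j\in M'$ never meet $\B_\nu^\infty$. Hence for every $(M',s') \in \B(M,s)$ one has $j \notin M'$, which is the claim. I do not expect a serious obstacle here; the only delicate point is making sure the statement ``variable $j$ flips only by an $A_0^{\pm}$-face becoming tight'' is argued cleanly, which is immediate from the polyhedral decomposition: crossing from $P(M,s)$ into a neighbor that includes $j$ corresponds precisely to that constraint tightening, because the other blocks $A_1$ govern sign-flips of already-included variables, not entry of new ones.
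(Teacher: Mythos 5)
Your proposal is correct and follows essentially the same route as the paper's proof: both reduce to showing $|X_j^\top(y' - X_M\beta_{(M,s)}(y'))| < \lambda$ throughout $\B_\nu^\infty$ via the identity $f_j(y') - f_j(y) = X_j^\top(I-\Pi_M)(y'-y)$ and the $\ell_1$--$\ell_\infty$ duality bound, then invoke the Lee et al.\ polyhedral characterization to argue that $j$ entering a neighboring model would force that constraint to be tight at some point of the box. The only cosmetic difference is that the paper computes the maximum of the perturbation term exactly (establishing an equivalence) whereas you only need the one-sided H\"older bound, which suffices.
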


\begin{lemma}[Safe inclusion]
\label{lemma:safe_M}
Fix a model-sign pair $(M,s)$. Let
$$\mathcal{I}_{\mathrm{safe}}^+(M,s) = \left\{j\in M: |\beta_{j\cdot (M,s)}(y)| > s_\nu \left\|e_{j\cdot (M,s)}^\top (X_M^\top X_M)^{-1}\right\|_1 \right\}.$$
Then, for any $j\in \mathcal{I}_{\mathrm{safe}}^+(M,s)$, variable $j$ cannot exit the model in any of the neighboring polyhedra: 
$$\forall (M',s')\in \B(M,s),~j\in M'.$$
\end{lemma}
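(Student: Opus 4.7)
The plan is to exploit the local linearity of the LASSO solution inside $P(M,s)$: for every $y' \in P(M,s)$ the LASSO coefficient of $j \in M$ equals $\beta_{j\cdot(M,s)}(y')$. From the polyhedral characterization written out in the excerpt, the only constraints defining $P(M,s)$ that can eject a variable $j \in M$ from the active set are the rows of $A_1(M,s)$, and the row corresponding to $j$ is tight exactly when $\beta_{j\cdot(M,s)}(y')=0$. So if I can show that $\beta_{j\cdot(M,s)}(y')$ keeps the sign $s_j$ throughout the box $\B_\nu^\infty$, then this particular facet of $P(M,s)$ cannot be reached from inside $\B_\nu^\infty$, and consequently every algorithmically relevant neighbor $(M',s')$ of $(M,s)$ must still contain $j$.

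The quantitative estimate reduces to a one-line H\"older bound. The closed form $\beta_{(M,s)}(y') = (X_M^\top X_M)^{-1}(X_M^\top y' - \lambda s)$ yields
\[
\beta_{j\cdot(M,s)}(y') - \beta_{j\cdot(M,s)}(y) \;=\; e_{j\cdot M}^\top (X_M^\top X_M)^{-1} X_M^\top (y' - y).
\]
For $y' \in \B_\nu^\infty$ we have $\|X^\top(y - y')\|_\infty \leq s_\nu$, and hence $\|X_M^\top(y - y')\|_\infty \leq s_\nu$ since the columns of $X_M$ form a subset of the columns of $X$. H\"older's inequality then gives
\[
\bigl|\beta_{j\cdot(M,s)}(y') - \beta_{j\cdot(M,s)}(y)\bigr| \;\leq\; s_\nu \bigl\|e_{j\cdot M}^\top (X_M^\top X_M)^{-1}\bigr\|_1.
\]
Combined with the hypothesis $j \in \I_{\mathrm{safe}}^+(M,s)$, this forces $\beta_{j\cdot(M,s)}(y')$ and $\beta_{j\cdot(M,s)}(y)$ to agree in sign and to both be nonzero for every $y' \in \B_\nu^\infty$, so the hyperplane $\{y':\beta_{j\cdot(M,s)}(y')=0\}$ misses $\B_\nu^\infty$ entirely.

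Putting the two ingredients together, the $j$-th $A_1$ facet of $P(M,s)$ cannot be crossed without leaving $\B_\nu^\infty$. Since the recursive procedure only enumerates neighbors whose polyhedron meets $\B_\nu^\infty$, any such $(M',s')\in\B(M,s)$ is reached by crossing some other facet of $P(M,s)$, namely a row of $A_0^\pm$ indexed by $M^c$ or a row of $A_1$ indexed by some $j'\in M\setminus\{j\}$, none of which can drop $j$ from the active set. Thus $j \in M'$, as claimed. The main point to handle carefully is this last block-structure observation about the polyhedral characterization, verifying from the explicit forms of $A_0^\pm$ and $A_1$ that dropping $j \in M$ really is equivalent to the $j$-th row of $A_1$ becoming tight; once that is in hand, everything else is just the elementary H\"older estimate above.
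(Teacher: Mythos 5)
Your proof is correct and follows essentially the same route as the paper's: both use the affine formula $\beta_{(M,s)}(y')-\beta_{(M,s)}(y)=(X_M^\top X_M)^{-1}X_M^\top(y'-y)$, bound the perturbation over $\B_\nu^\infty$ by $s_\nu\|e_{j\cdot(M,s)}^\top(X_M^\top X_M)^{-1}\|_1$ via $\ell_1$--$\ell_\infty$ duality, and conclude that the hyperplane $\{\beta_{j\cdot(M,s)}(y')=0\}$ --- which is exactly the $A_1^j$ facet from the Lee et al.\ characterization, the only facet whose crossing drops $j$ --- cannot meet the box. The only cosmetic difference is that the paper states the bound as an exact equivalence with $\min_{y'\in\B_\nu^\infty}|\beta_{j\cdot(M,s)}(y')|>0$ while you prove only the one implication actually needed.
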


\begin{algorithm}[t]
\SetAlgoLined
\SetKwInOut{Input}{input}
\Input{design matrix $X$, outcome vector $y$, penalty $\lambda$, error level $\alpha$, parameter $\nu\in(0,\alpha)$}
\textbf{output:} set of plausible models $\widehat \M_\nu^+$\newline
Compute width of $\B^\infty_\nu$: $s_\nu = 2q^{\nu}(\{X_j\}_{j=1}^d)$\newline
Compute LASSO solution: $\hat \beta = \argmin_\beta \frac{1}{2}\|y - X\beta\|_2^2 + \lambda \|\beta\|_1$\newline
Let $\hat M = \mathrm{supp}(\hat \beta)$, $\hat s = \mathrm{sign}(\hat \beta_{\hat M})$\newline
Initialize $\P_\mathrm{todo} \leftarrow \{(\hat M,\hat s)\}, \P_\nu^+ \leftarrow \emptyset$\newline
\While{$\P_{\mathrm{todo}}\neq \emptyset$}
{\ Take any pair $(M,s)\in \P_{\mathrm{todo}}$\newline
Update $(M,s)$ as visited: $\P_{\mathrm{todo}} \leftarrow \P_{\mathrm{todo}} \setminus \{(M,s)\}$, $\P_\nu^+ \leftarrow \P_\nu^+ \cup \{(M,s)\}$\newline
$\mathcal{I}_{\mathrm{safe}}(M,s) \leftarrow \mathrm{SafeScreening}(X,y,(M,s))$ \quad (Alg.~\ref{alg:safe_screening})\newline
$\B(M,s) \leftarrow \mathrm{ExactScreening}(X,y,(M,s), \mathcal{I}_{\mathrm{safe}}(M,s))$ \quad (Alg.~\ref{alg:exact_screening})\newline
$\P_{\mathrm{todo}} \leftarrow \P_{\mathrm{todo}} \cup (\B(M,s) \setminus \P_\nu^+)$
}
Return $\widehat \M_\nu^+ = \{M: \exists s \text{ s.t. } (M,s)\in \P_\nu^+\}$
\caption{Locally simultaneous inference for the LASSO}
\label{alg:local_lasso}
\end{algorithm}

Lemma \ref{lemma:safe_Mc} and Lemma \ref{lemma:safe_M} show that the safe screening subroutine (stated explicitly in Algorithm~\ref{alg:safe_screening} in the Appendix) is valid. Putting everything together, we formalize the guarantees of Algorithm \ref{alg:local_lasso} in Theorem \ref{thm:local_lasso}.


\begin{theorem}
\label{thm:local_lasso}
Algorithm \ref{alg:local_lasso} returns exactly the set of plausible models, i.e.
$$\widehat \M_\nu^+ = \left\{\hat M(y'): \|X^\top y - X^\top y'\|_\infty \leq 2q^{\nu}(\{X_j\}_{j=1}^d)\right\}.$$
\end{theorem}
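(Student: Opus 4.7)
The plan is to show two directions: every model returned by the algorithm is of the form $\hat M(y')$ for some $y' \in \B_\nu^\infty$, and every such $\hat M(y')$ appears in the algorithm's output. The first direction is essentially immediate once one checks that the exact screening step is sound: any $(M,s)$ that the algorithm records satisfies $P(M,s)\cap \B_\nu^\infty \neq \emptyset$ (since $(\hat M, \hat s)$ satisfies this as $y$ lies in both, and neighboring pairs found by ExactScreening by definition share a face with $P(M,s)$ that lies inside $\B_\nu^\infty$). So $M = \hat M(y')$ for any $y'$ in this intersection.

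The main work is the other direction: completeness of the neighbor-traversal. Here I would argue as follows. The collection of polyhedra $\{P(M,s)\}_{(M,s)}$ tiles $\R^n$, as established by \citet{lee2016exact}: for almost every $y'\in\R^n$, there is a unique model-sign pair $(M',s')$ with $y'\in P(M',s')$. Since $\B_\nu^\infty$ is a convex (hence connected) box, the induced cover $\{P(M',s')\cap \B_\nu^\infty\}$ forms a connected polyhedral complex inside $\B_\nu^\infty$. Consequently, for any model-sign pair $(M^\star, s^\star)$ with $P(M^\star, s^\star)\cap \B_\nu^\infty\neq \emptyset$, there exists a finite chain $(\hat M,\hat s) = (M_0, s_0), (M_1,s_1), \ldots, (M_k,s_k) = (M^\star, s^\star)$ in which consecutive polyhedra share a facet whose relative interior lies inside $\B_\nu^\infty$. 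The key lemma to verify is that a facet of $P(M_i,s_i)$ that intersects the interior of $\B_\nu^\infty$ corresponds to a non-redundant constraint of $P(M_i,s_i)\cap \B_\nu^\infty$, hence is detected by ExactScreening, and so $(M_{i+1},s_{i+1})\in \B(M_i,s_i)$.

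Given this, a straightforward induction on $i$ shows that each $(M_i,s_i)$ is added to $\P_\mathrm{todo}$ at some point and eventually moved to $\P_\nu^+$ before termination; termination itself is guaranteed because the number of model-sign pairs is finite and no pair is revisited by the check $\B(M,s)\setminus \P_\nu^+$. In particular $(M^\star, s^\star)\in \P_\nu^+$ at the end, so $M^\star$ appears in the returned set, proving completeness.

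It remains to justify using the safe screening output $\mathcal I_{\mathrm{safe}}(M,s)$ to prune ExactScreening: Lemmas \ref{lemma:safe_Mc} and \ref{lemma:safe_M} show that variables in $\mathcal I_{\mathrm{safe}}^\pm(M,s)$ cannot change status across any facet of $P(M,s)$ touching $\B_\nu^\infty$, so removing the associated constraints before the linear-programming redundancy check leaves $\B(M,s)$ unchanged. The main obstacle I expect is the connectedness step — specifically, arguing cleanly that a facet of $P(M_i,s_i)$ meeting the interior of $\B_\nu^\infty$ is identified by the minimal-representation LP, rather than being ``hidden'' behind box constraints — and handling measure-zero exceptional $y'$ (those on polyhedron boundaries), for which one either perturbs slightly or defines $\hat M(y')$ using the canonical sign pattern so that the tiling statement applies unambiguously.
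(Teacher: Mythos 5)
The paper never actually writes out a proof of this theorem: it is asserted in the main text as ``putting everything together,'' i.e.\ as a direct consequence of the algorithm's design, the minimal-representation interpretation of the exact screening LPs, and Lemmas~\ref{lemma:safe_Mc} and~\ref{lemma:safe_M}. Your proposal supplies exactly the argument the paper leaves implicit, and it is the right one: soundness by induction on the traversal (every visited polyhedron meets $\B_\nu^\infty$), completeness by combining the Lee--Taylor tiling of $\R^n$ with the convexity of $\B_\nu^\infty$ to produce a facet-chain from $(\hat M,\hat s)$ to any plausible $(M^\star,s^\star)$, the key lemma that a facet whose relative interior meets the interior of $\B_\nu^\infty$ is a non-redundant constraint of $P(M,s)\cap\B_\nu^\infty$ and hence is flagged by the LP, and the observation that the safe rules only remove constraints that the lemmas certify as redundant. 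This is a correct formalization rather than a different route. Two small remarks. First, for your argument to go through (and for the ``exactly'' in the statement to hold), the LPs in ExactScreening must be read as optimizing over $\bigl(P(M,s)\cap\B_\nu^\infty\bigr)$ with the tested constraint removed -- the box constraints must be present -- which is what the paper intends (``find the minimal representation of $P(M,s)\cap\B_\nu^\infty$'') even though the notation $P^{\setminus\{j\}}(M,s)$ does not display them; you implicitly assume this, and you should state it. Second, your closing caveat about measure-zero $y'$ is genuine but benign for the LASSO: on a facet where $\hat\beta_j$ crosses zero the realized model is that of one of the two adjacent full-dimensional cells, so no model arises only on a lower-dimensional set except in degenerate position, which both you and the paper are entitled to exclude. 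With those two points made explicit, your proof is complete.
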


Putting together Theorem \ref{thm:local_lasso} and Corollary \ref{cor:linear_reg}, we conclude that it suffices to take a simultaneous correction in the sense of \citet{berk2013valid} at error level $\alpha-\nu$, \emph{only} over the local model set $\widehat\M_\nu^+$, to get a valid confidence region for $\theta_{\hat M}$.

\section{Population risk of empirical risk minimizer}
\label{sec:erm}

Finally, we consider an application to empirical process theory. In standard theory, the population risk of an empirical risk minimizer is typically bounded by relying on some notion of complexity, such as VC-dimension or Rademacher complexity, of the hypothesis class over which the empirical risk is optimized. We show that locally simultaneous inference implies a bound on the population risk that relies only on the complexity of those hypotheses that could \emph{plausibly} be empirical risk minimizers in light of the data, while all clearly suboptimal hypotheses, no matter how complex, come at virtually no cost. This application serves as another example of how locally simultaneous inference is applicable in nonparametric settings.

To set the problem up formally, suppose we have a dataset $\D = \{z_i\}_{i=1}^n \sim P^n$, which should typically be thought of as consisting of feature--outcome pairs. We are interested in the \emph{empirical risk minimizer} on this dataset within a hypothesis class $\F$:
$$\hat f = \argmin_{f\in\F} R_n(f,\D) := \argmin_{f\in\F} \frac{1}{n} \sum_{i=1}^n \ell(f,z_i),$$
where $\ell(f,z)$ measures the loss incurred by predicting on point $z$ using hypothesis $f$ (for example, if $z = (x,y)$ is a feature--outcome pair, we could have $\ell(z,f) = \mathbf{1}\{y \neq f(x)\}$).

We would like to bound the \emph{population risk} of $\hat f$. For a fixed hypothesis $f\in\F$, we define its population risk as:
\begin{equation*}
R(f,P) = \E_{z\sim P} \ell(f,z).
\end{equation*}
A standard way of bounding $R(\hat f, P)$ is to apply a \emph{simultaneous correction} over $\F$:
$$R(\hat f, P) = R_n(\hat f, \D) + (R(\hat f, P) -  R_n(\hat f, \D)) \leq R_n(\hat f, \D) + \sup_{f\in\F} |R(f,P) -  R_n(f,\D)|.$$
Therefore, it suffices to bound the so-called \emph{generalization gap}, $ \sup_{f\in\F} |R(f,P) -  R_n(f,\D)|$, to get a valid upper bound on $R(\hat f, P)$. Such a bound follows by relying on the concentration of $\sup_{f\in\F} |R(f,P) - R_n(f,\D)|$ around its mean, where the latter is controlled by relying on a complexity measure of $\F$, such as VC-dimension or Rademacher complexity.

The general result on locally simultaneous inference implies that it suffices to bound the generalization gap over \emph{near empirical risk minimizers} on $\D$. Notably, this is a data-dependent hypothesis class. If there are many hypotheses whose empirical risk on $\D$ is far worse than $R_n(\hat f,\D)$, then this set could be significantly smaller than $\F$.

In the following we let $\texttt{Gap}_n(\F)$ denote any valid upper bound on $\E  \sup_{f\in\F} |R(f,P) - R_n(f,\D)|$. Such a bound typically follows from a complexity argument. For example, the classical symmetrization argument proves that
$\texttt{Gap}_n(\F) = 2\mathcal R_n(\F)$
is a valid upper bound on $\E  \sup_{f\in\F} |R(f,P) - R_n(f,\D)|$, where $\mathcal R_n(\F)$ denotes the Rademacher complexity of $\{\ell(f,\cdot)\}_{f\in\F}$.

\begin{theorem}
\label{thm:risk_minimizer}
Fix $\alpha\in(0,1)$ and $\nu\in(0,\alpha)$. Assume that $|\ell(f,z)|\leq 1$ for all $z$ and $f\in\F$. Consider the data-dependent hypothesis class:
$$\widehat \F_\nu^+ = \left\{f\in\F:  R_n(f,\D) \leq  R_n(\hat f, \D) + 4 \emph{\texttt{Gap}}_n(\F) + 4 \sqrt{\frac{2}{n} \log\left(\frac{1}{\nu}\right)}\right\}.$$
Then,
$$
P\left\{R(\hat f,P) \leq R_n(\hat f,\D) + \emph{\texttt{Gap}}_n (\widehat \F_\nu^+ ) + \sqrt{\frac{2}{n} \log\left(\frac{1}{\alpha - \nu}\right)}\right\} \geq 1 - \alpha.
$$
\end{theorem}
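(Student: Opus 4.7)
The plan is to apply Theorem~\ref{thm:local_posi_main} with the hypothesis class $\F$ playing the role of $\Gamma$, the population risks $\theta_f := R(f,P)$ serving as the estimands, the dataset $\D$ as the observation, and the ERM $\hat f$ as the single selected target. For the simultaneous confidence regions I would use the one-sided intervals
$$C_{f \cdot \F'}^{\alpha'} = \left(-\infty,\; R_n(f,\D) + \texttt{Gap}_n(\F') + \sqrt{\tfrac{2}{n}\log(1/\alpha')}\right],$$
valid for any $\F' \subseteq \F$ and $\alpha' \in (0,1)$. Their simultaneous coverage over $\F'$ follows from McDiarmid applied to $\sup_{f\in\F'}(R(f,P) - R_n(f,\D))$: since $|\ell|\leq 1$, changing one sample perturbs the supremum by at most $2/n$, giving $\exp(-nt^2/2)$ tails; combining this with $\E \sup_{f\in\F'}(R(f,P) - R_n(f,\D)) \leq \texttt{Gap}_n(\F')$ yields the stated bound. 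Monotonicity of these regions in $\F'$ (Assumption~\ref{ass:nested}) is inherited from the monotonicity of $\texttt{Gap}_n$.

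Next I would specify the acceptance region. The natural choice is
$$A_\nu(P) = \left\{\D : \sup_{f\in\F}|R(f,P) - R_n(f,\D)| \leq \texttt{Gap}_n(\F) + \sqrt{\tfrac{2}{n}\log(1/\nu)}\right\},$$
which by the same McDiarmid argument (now applied to the two-sided supremum, whose expectation is directly controlled by $\texttt{Gap}_n(\F)$) satisfies $P(A_\nu(P)) \geq 1-\nu$. Writing $\epsilon_\nu := \texttt{Gap}_n(\F) + \sqrt{(2/n)\log(1/\nu)}$ and inverting, for any $P' \in B_\nu(\D)$ and $\D' \in A_\nu(P')$ both $\sup_f|R_n(f,\D) - R(f,P')| \leq \epsilon_\nu$ and $\sup_f|R_n(f,\D') - R(f,P')| \leq \epsilon_\nu$ hold, so by the triangle inequality $\sup_f |R_n(f,\D') - R_n(f,\D)| \leq 2\epsilon_\nu$.

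The decisive step is then a second triangle argument on the ERM optimality. If $f = \hat f(\D')$ for some $\D' \in A_\nu(P')$ and $P' \in B_\nu(\D)$, then $R_n(f,\D') \leq R_n(\hat f,\D')$, and applying the $2\epsilon_\nu$ bound twice gives
$$R_n(f,\D) \;\leq\; R_n(f,\D') + 2\epsilon_\nu \;\leq\; R_n(\hat f,\D') + 2\epsilon_\nu \;\leq\; R_n(\hat f,\D) + 4\epsilon_\nu,$$
which is exactly the containment $\widehat\Gamma_\nu^+ \subseteq \widehat\F_\nu^+$. Invoking Theorem~\ref{thm:local_posi_main} at level $\alpha$ with parameter $\nu$ yields $P\{R(\hat f,P) \in C_{\hat f \cdot \widehat\Gamma_\nu^+}^{\alpha-\nu}\} \geq 1-\alpha$, and nested monotonicity bumps $\texttt{Gap}_n(\widehat\Gamma_\nu^+) \leq \texttt{Gap}_n(\widehat\F_\nu^+)$, giving the stated bound.

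The main obstacle is really the separation of the two McDiarmid applications (one-sided for the simultaneous confidence intervals, two-sided for the acceptance region) and the triangle chain in the previous paragraph, which is what forces the factor of $4$ in the definition of $\widehat\F_\nu^+$: two factors come from the $\D \to P' \to \D'$ triangle inequality on empirical risks, and the ERM optimality step contributes the other two by propagating the inequality back from $\hat f(\D')$ to $\hat f(\D)$. Once this chain is in place, the rest is bookkeeping against Theorem~\ref{thm:local_posi_main}.
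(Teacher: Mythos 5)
Your proposal is correct and follows essentially the same route as the paper's proof: the same acceptance region based on the uniform deviation bound, the same McDiarmid/bounded-differences concentration, the same triangle-inequality chain through $\D \to P' \to \D'$ combined with ERM optimality to obtain the $4G$ slack, and the final invocation of Theorem~\ref{thm:local_posi_main}. You in fact spell out details the paper leaves implicit (the one-sided form of the simultaneous confidence regions and the monotonicity of $\texttt{Gap}_n$ needed for Assumption~\ref{ass:nested}), so nothing is missing.
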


Theorem \ref{thm:risk_minimizer} relies on the fact that the generalization gap of the empirical risk minimizer, $|R_n(\hat f,\D) - R(\hat f,P)|$, can be bounded using a data-dependent complexity term $\texttt{Gap}_n(\widehat \F_\nu^+)$.

\section{Numerical evaluation}
\label{sec:exps}

We compare locally simultaneous inference to its underlying simultaneous inference method, the conditional method due to Lee et al.~\cite{lee2016exact}, and the hybrid refinement of conditional inference due to Andrews et al.~\cite{andrews2019inference} and McCloskey~\cite{mccloskey2020hybrid}. Code for reproducing the experiments is available in the repository at \href{https://github.com/tijana-zrnic/locally-simultaneous-inference}{this link}.

Throughout we apply the version of locally simultaneous inference from Theorem \ref{thm:local_posi_main} with $\nu=0.1\alpha$.
In all figures comparing interval widths we plot the median width over $100$ trials, together with the $5\%$ and $95\%$ percentile, plotted as error bars around the median. The target error level is $\alpha = 0.1$ throughout.

\begin{figure}[t]
    \centering
    \includegraphics[width=0.24\textwidth]{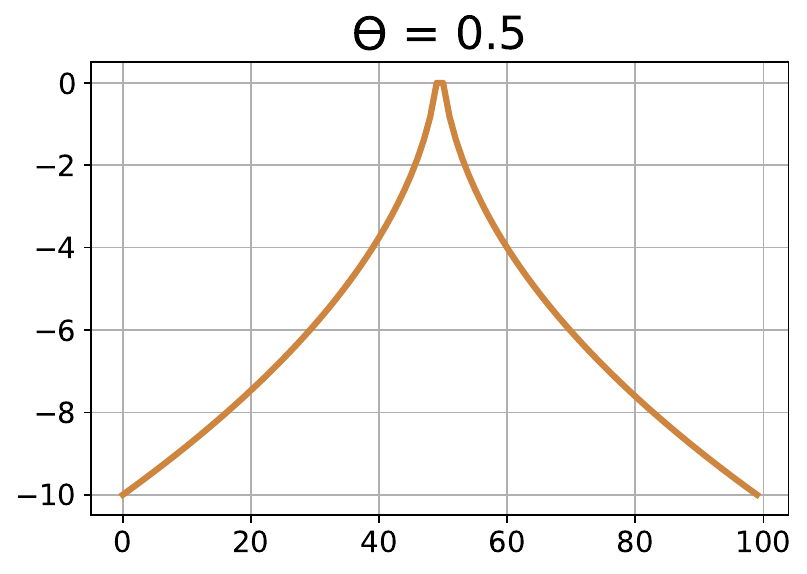}
    \includegraphics[width=0.24\textwidth]{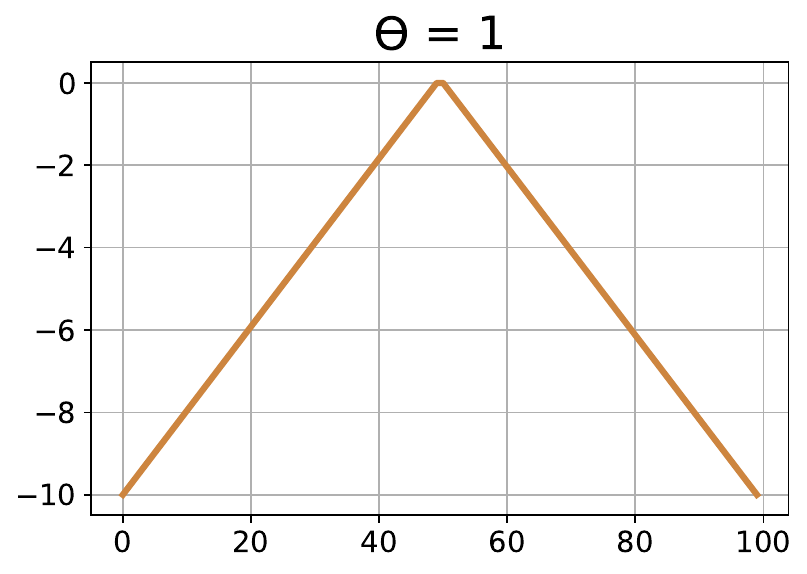}
    \includegraphics[width=0.24\textwidth]{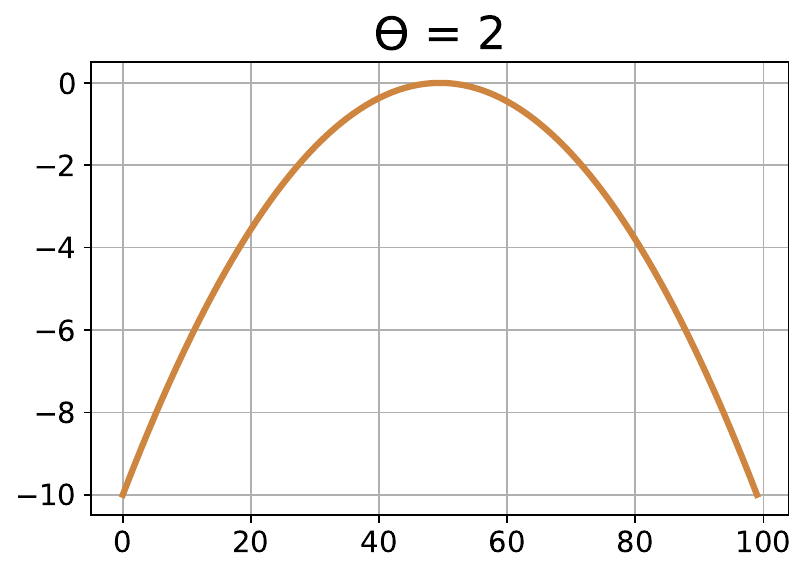}
    \includegraphics[width=0.24\textwidth]{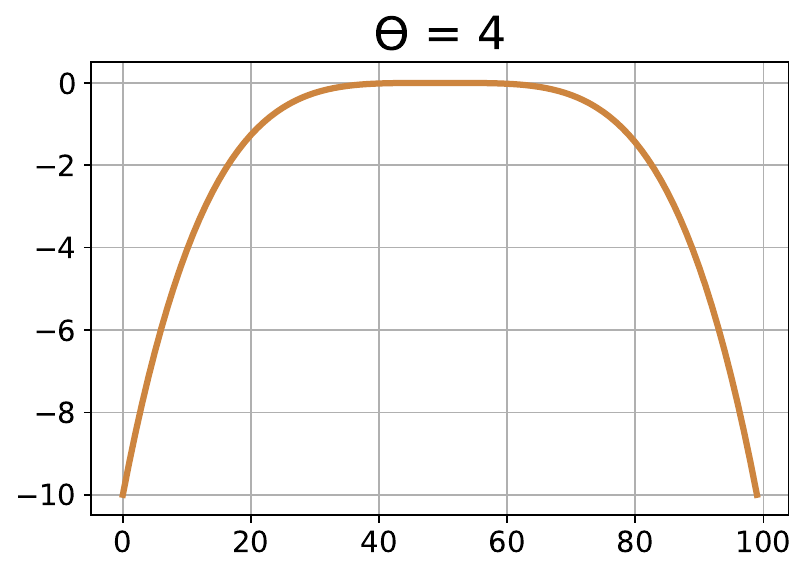}
    \caption{Mean outcome $\mu$ for different problem parameters $\theta$, at scale level $C = 10$.}
\label{fig:mu_examples}
\end{figure}

\subsection{Inference on the winner}
\label{sec:iow_exp}

We begin by studying the problem of inference on the winner from Section \ref{sec:winner}. We generate the mean outcome $\mu$ as a smooth curve; this simulates a setting where nearby entries of $\mu$ are similar, such as when the data is a time series or when neighboring entries of $\mu$ correspond to outcomes in neighboring subgroups (e.g., neighboring age groups). We vary the shape of the mean outcome vector $\mu$, thereby making inference more or less challenging for the different methods. We let $\mu_i \propto -|i - 0.5(m+1)|^\theta$ for $i\in[m]$, where $\theta>0$ varies the sharpness of $\mu$. Small $\theta$ corresponds to the case where the winner stands out, while large $\theta$ makes the mean outcome flat, implying that many observations have a plausible chance of being selected as the winner. The other tuning parameter is $C>0$: we rescale the mean $\mu$ so that the difference between the minimum and maximum entry of $\mu$ is equal to $C$. When $C$ is large, $\mu$ gets ``stretched out'' and, as a result, there are fewer candidates that can plausibly be selected. In Figure \ref{fig:mu_examples} we plot the shape of $\mu$ for different values of $\theta$, at $C = 10$.

\begin{figure}[t]
    \centering
    \includegraphics[width=0.24\textwidth]{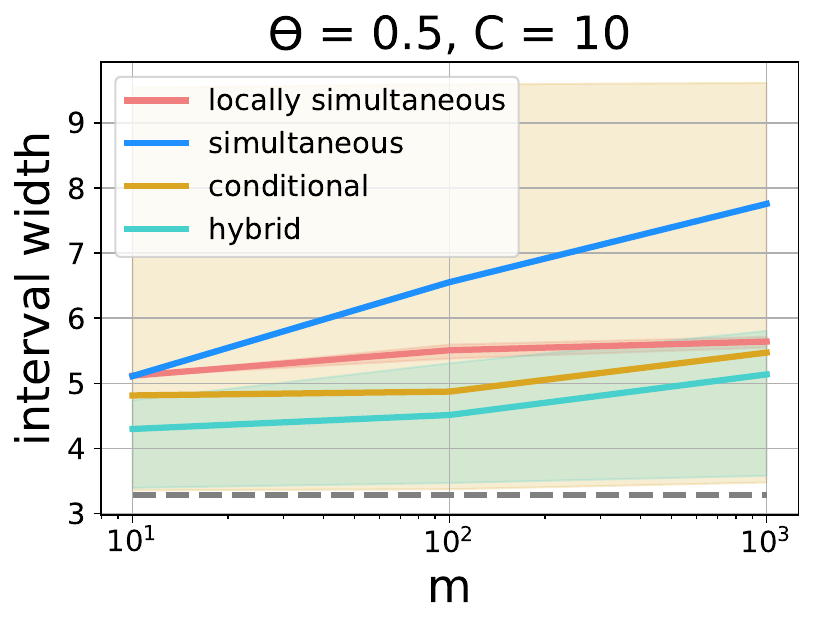}
    \includegraphics[width=0.24\textwidth]{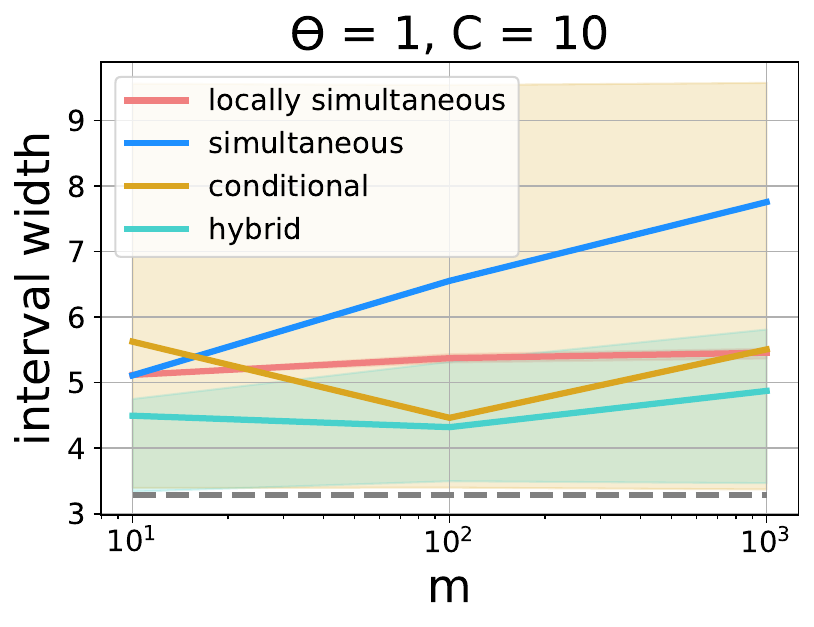}
    \includegraphics[width=0.24\textwidth]{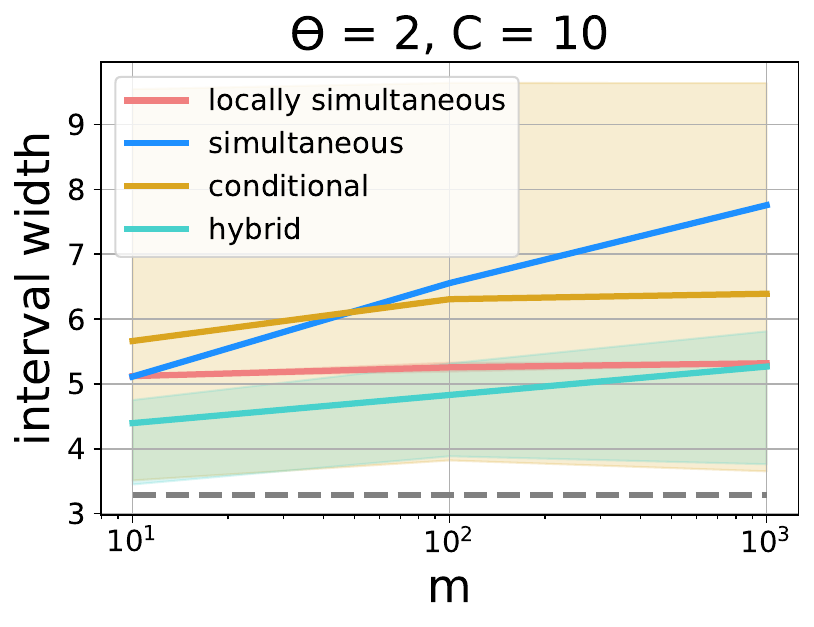}
    \includegraphics[width=0.24\textwidth]{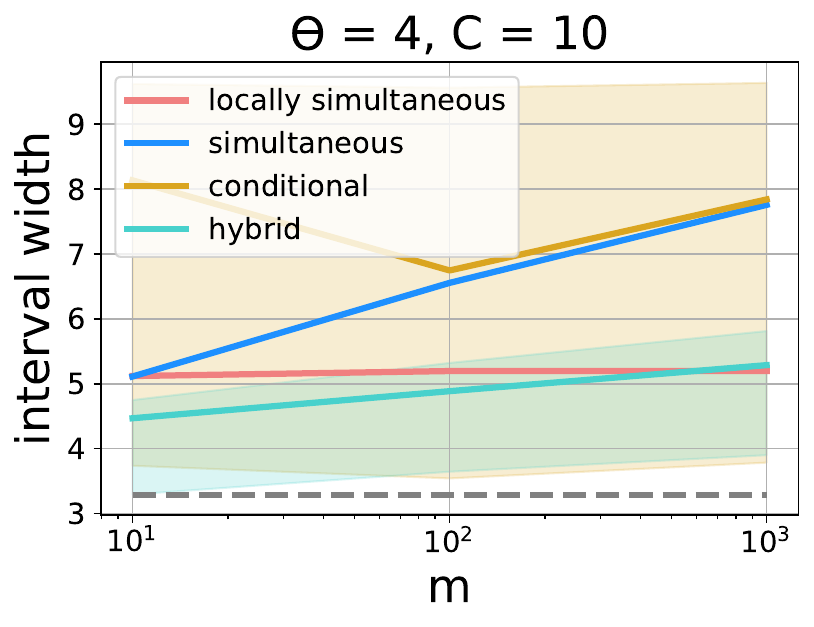}
    \includegraphics[width=0.24\textwidth]{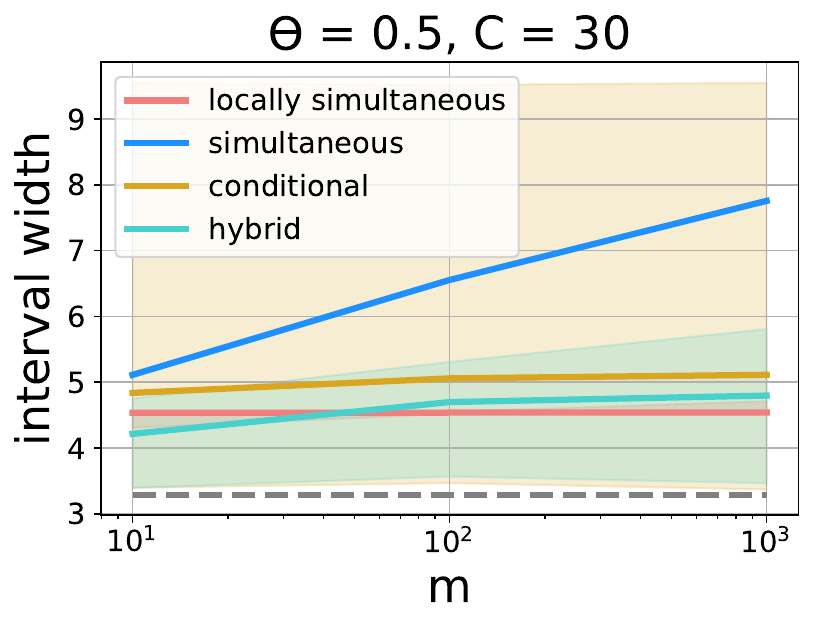}
    \includegraphics[width=0.24\textwidth]{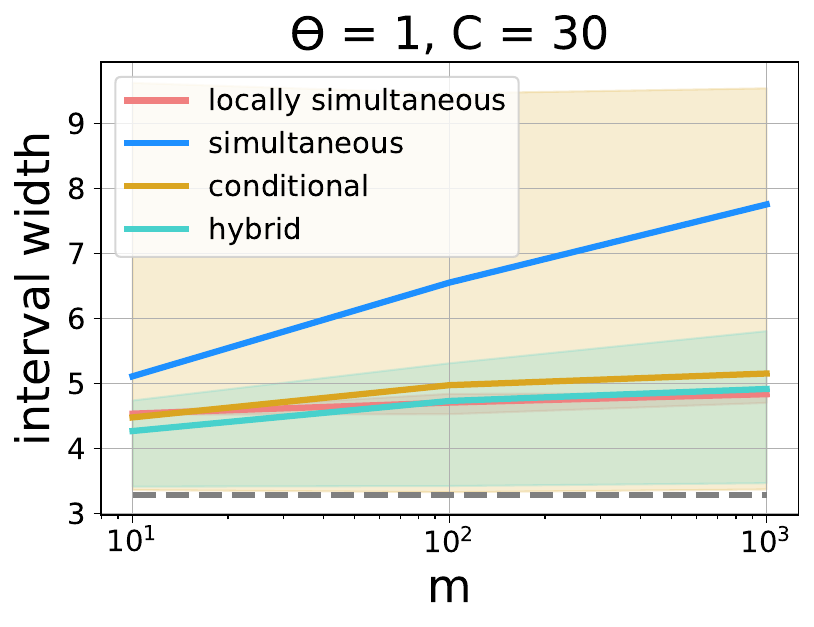}
    \includegraphics[width=0.24\textwidth]{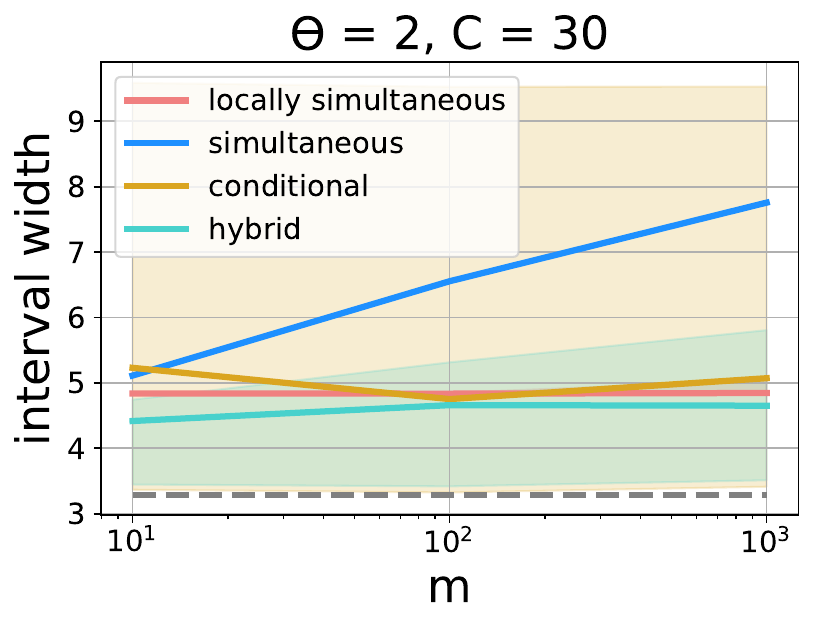}
    \includegraphics[width=0.24\textwidth]{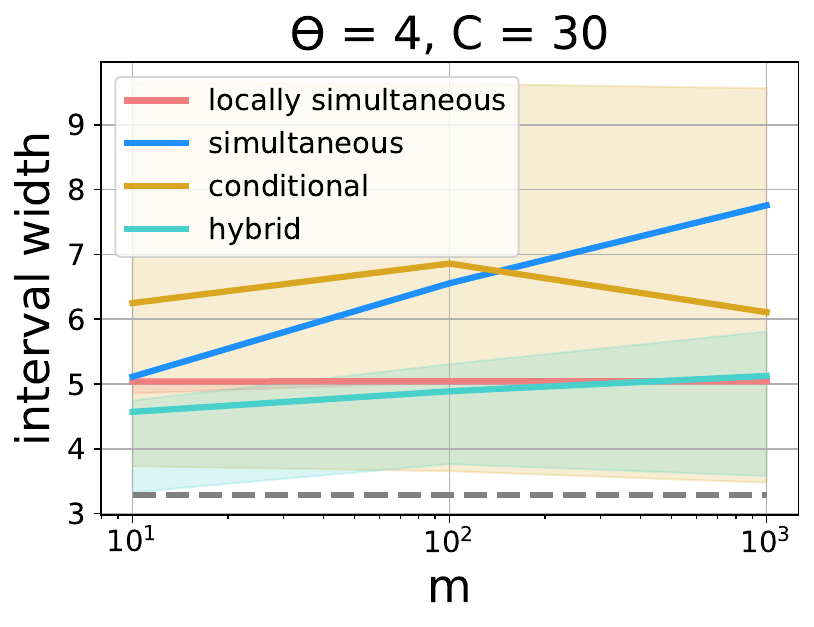}
    \includegraphics[width=0.24\textwidth]{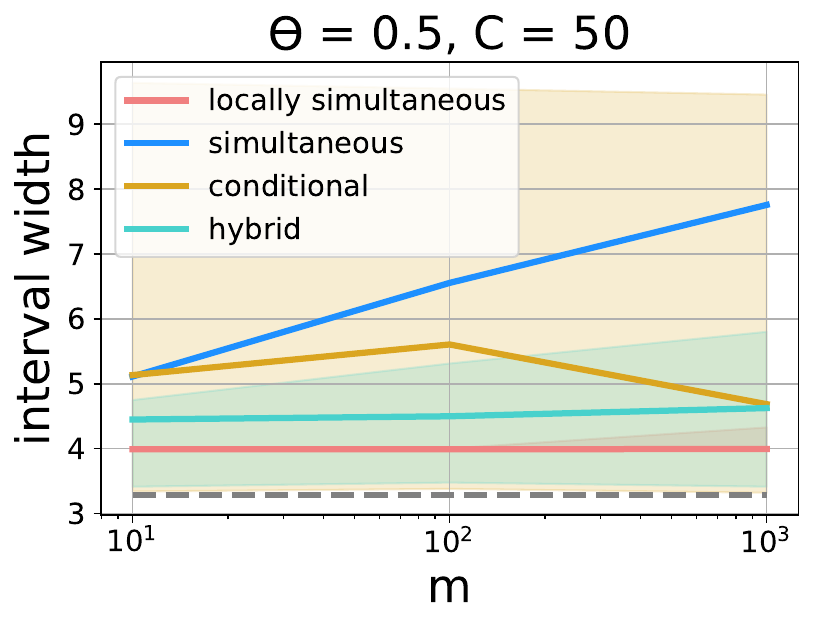}
    \includegraphics[width=0.24\textwidth]{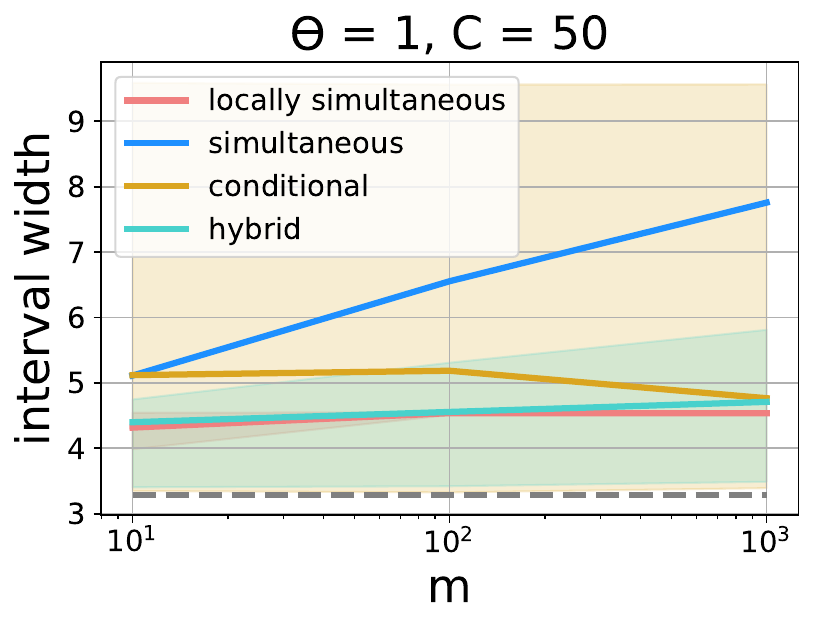}
    \includegraphics[width=0.24\textwidth]{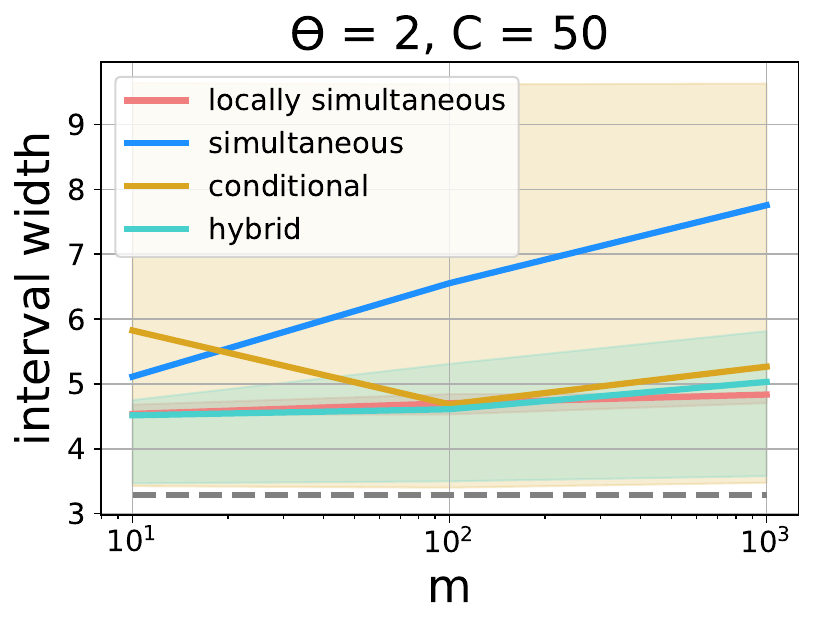}
    \includegraphics[width=0.24\textwidth]{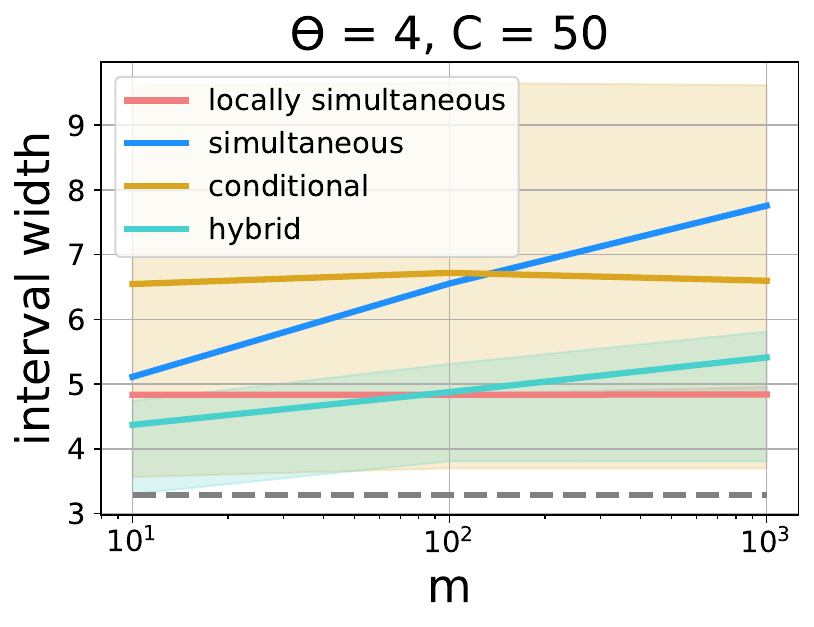}
    \caption{Interval width achieved by locally simultaneous, fully simultaneous, conditional, and hybrid inference in the problem of inference on the winner. The dashed line corresponds to nominal interval widths.}
\label{fig:smooth_mu}
\end{figure}

\paragraph{Parametric case.} In the first setting, we generate the vector of observations as $y = \mu + \xi$, where $\xi\sim \cN(0,I_m)$. In Figure \ref{fig:smooth_mu}, we plot the interval width resulting from locally simultaneous, simultaneous, conditional, and hybrid inference for varying $\theta\in\{0.5,1,2,4\}$, $C\in\{10,30,50\}$, and $m\in\{10,100, 1000\}$. The mean $\mu$ has range $C$ at $m=10$ and for higher $m$ it is not renormalized to range $C$; the purpose of increasing $m$ is to demonstrate the behavior of the different methods when the number of irrelevant observations (i.e., those far from the winning observation) increases. We observe that conditional inference exhibits high variability for all problem parameters, and as $\theta$ grows---meaning $\mu$ becomes flat---the median intervals become large. Simultaneous inference is by construction only sensitive to changes in $m$, and its intervals grow with $m$ despite the fact that only the number of irrelevant observations grows. Locally simultaneous inference is most sensitive to changes in $C$: as $\mu$ is stretched over a larger range, the method finds fewer plausible candidates and thus leads to smaller intervals. Moreover, it is virtually insensitive to increasing $m$. The hybrid approach exhibits high variability like the conditional approach (albeit to a more moderate extent) and its intervals grow with $m$ because, as $m\rightarrow \infty$, the hybrid method reduces to standard conditional inference.

\begin{figure}[t]
    \centering
    \includegraphics[width=0.24\textwidth]{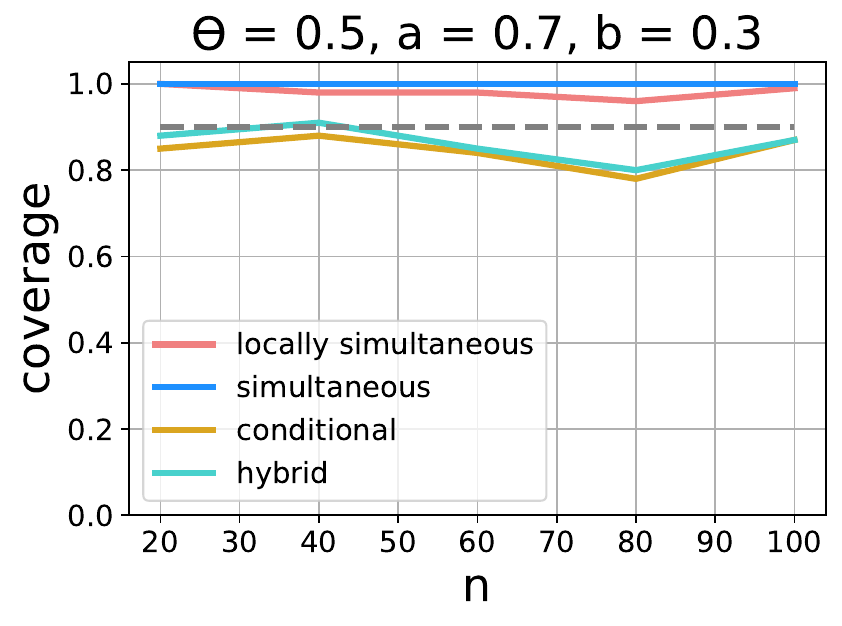}
    \includegraphics[width=0.24\textwidth]{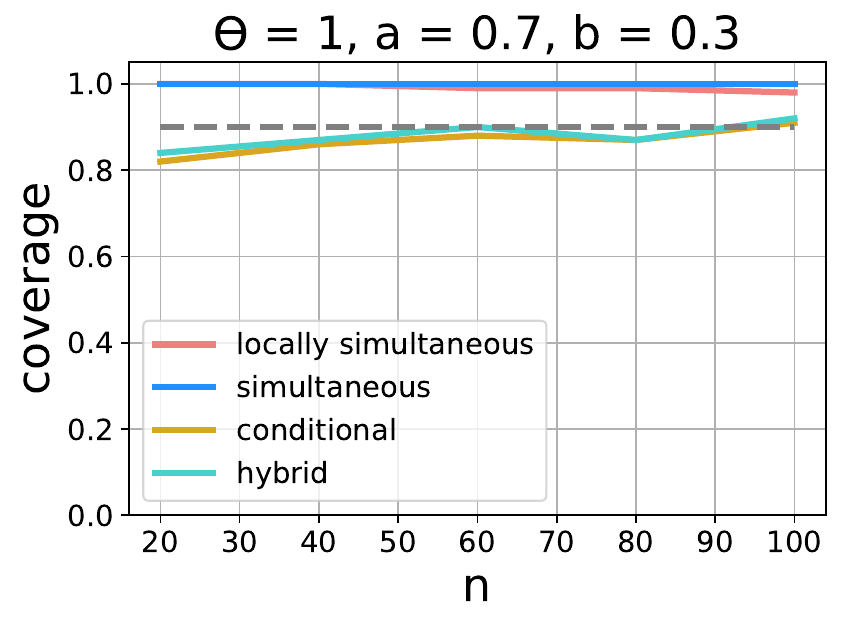}
    \includegraphics[width=0.24\textwidth]{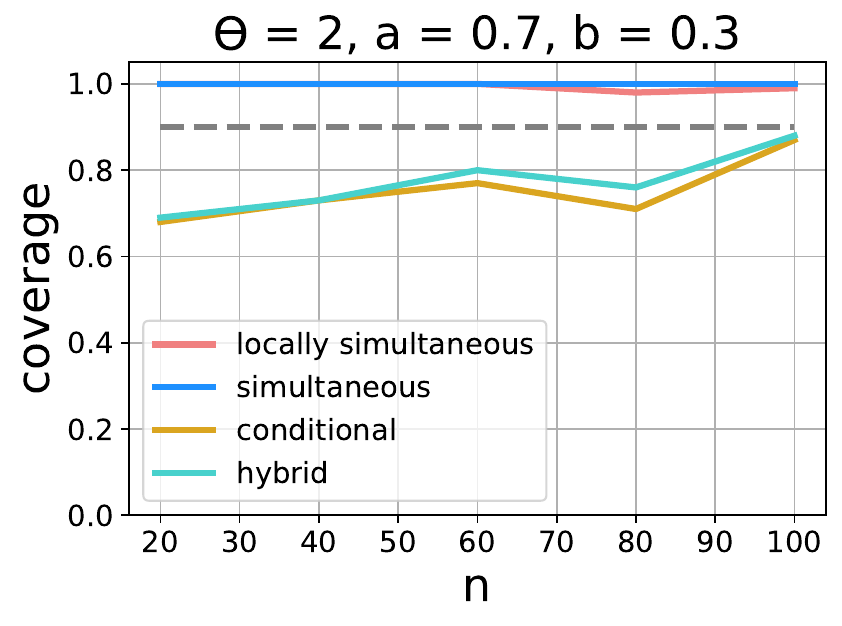}
    \includegraphics[width=0.24\textwidth]{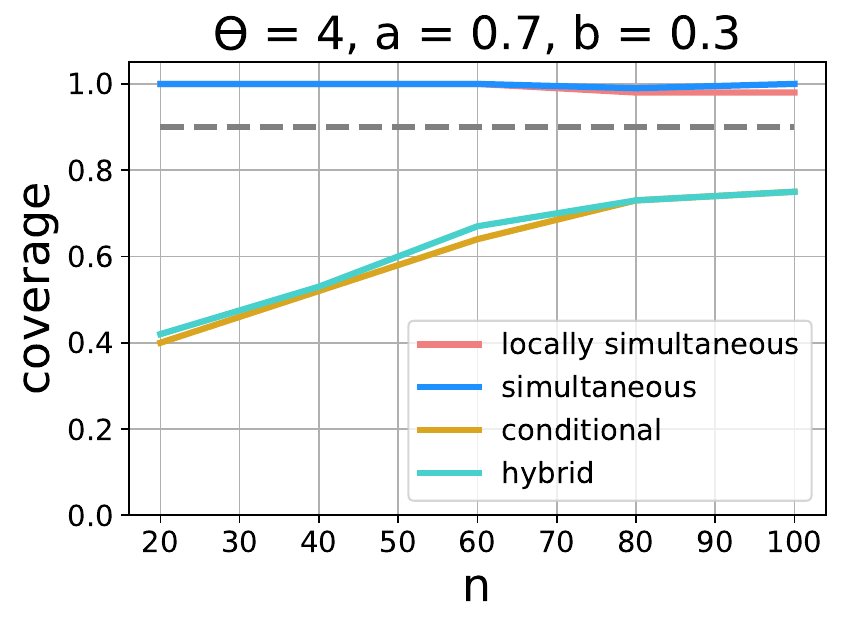}
    \includegraphics[width=0.24\textwidth]{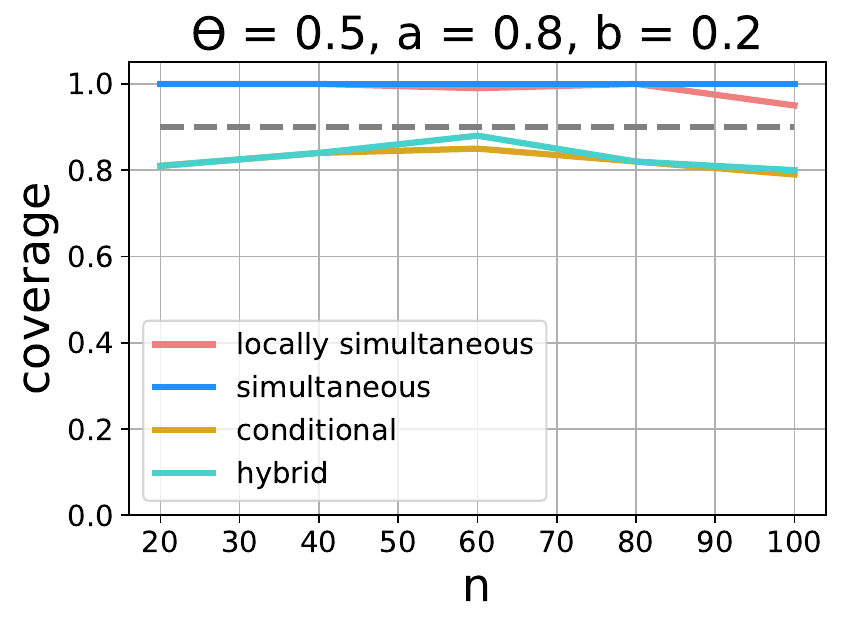}
    \includegraphics[width=0.24\textwidth]{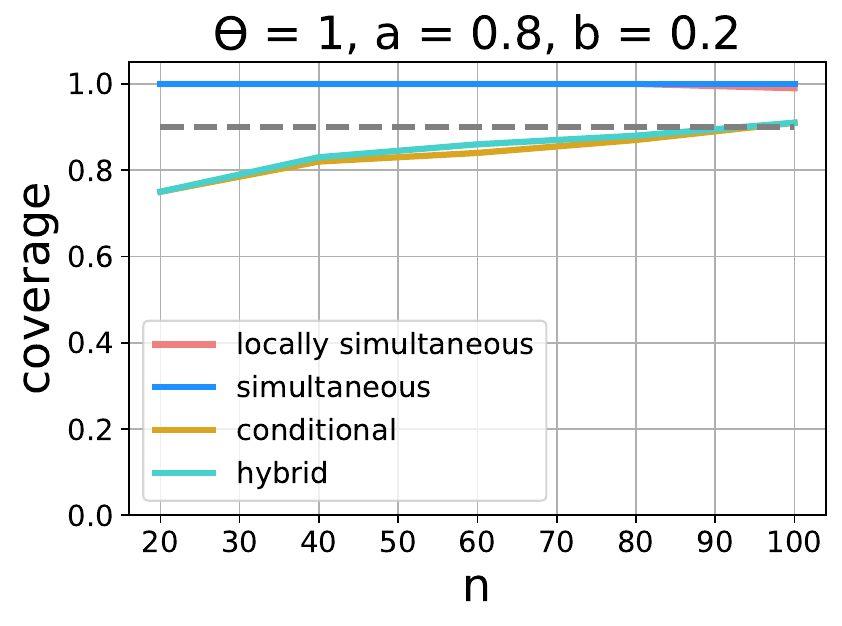}
    \includegraphics[width=0.24\textwidth]{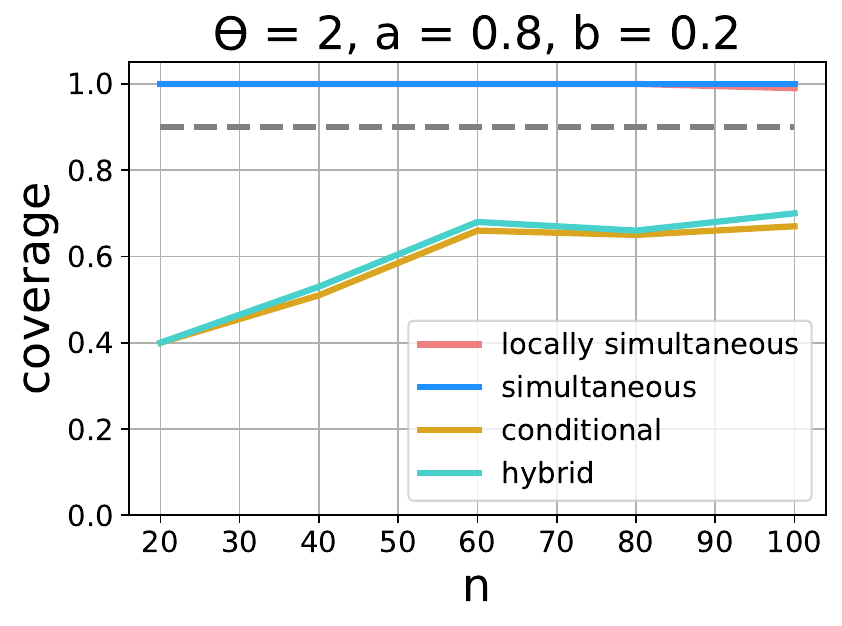}
    \includegraphics[width=0.24\textwidth]{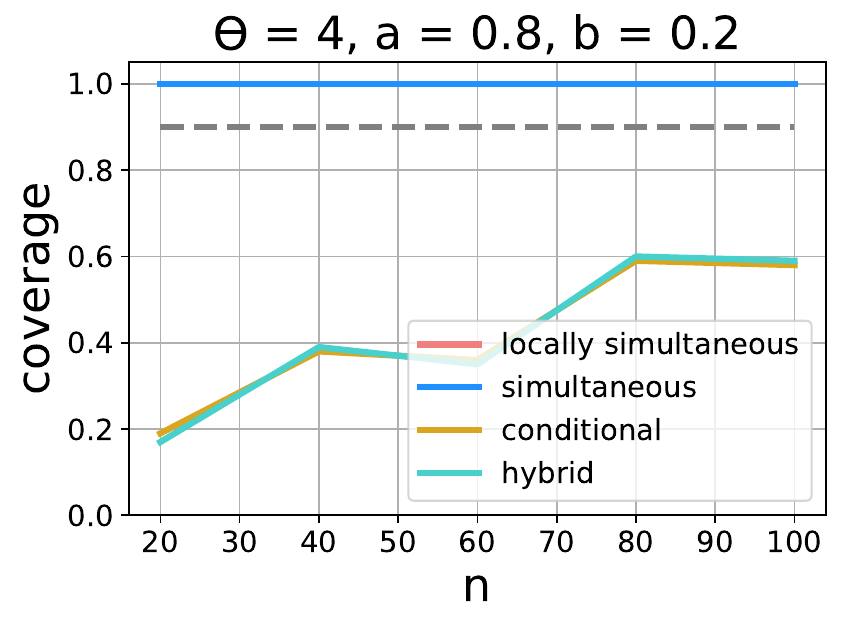}
    \includegraphics[width=0.24\textwidth]{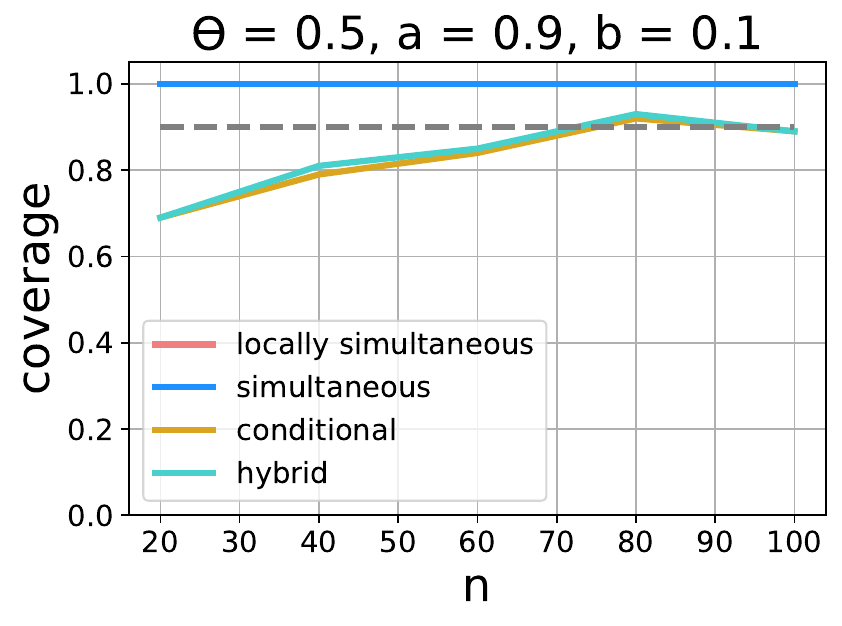}
    \includegraphics[width=0.24\textwidth]{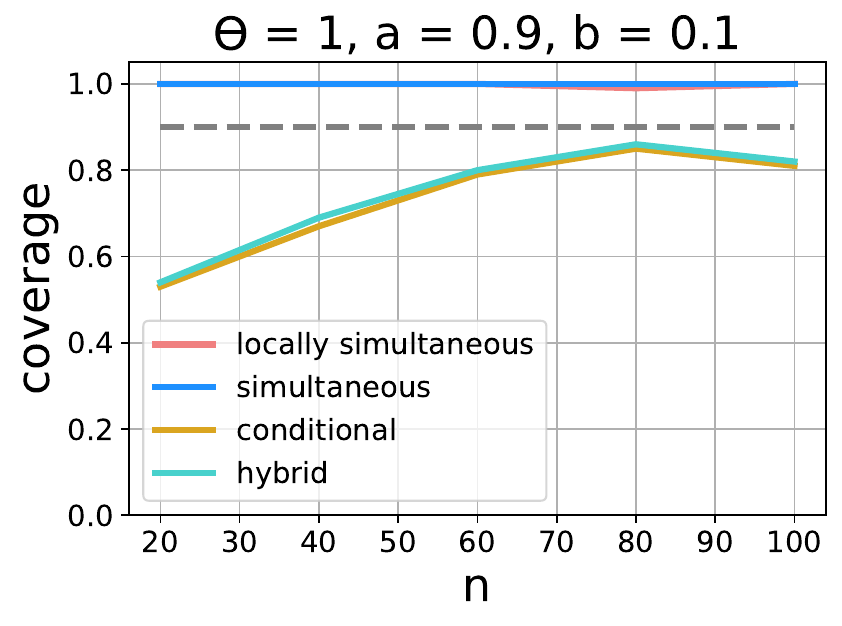}
    \includegraphics[width=0.24\textwidth]{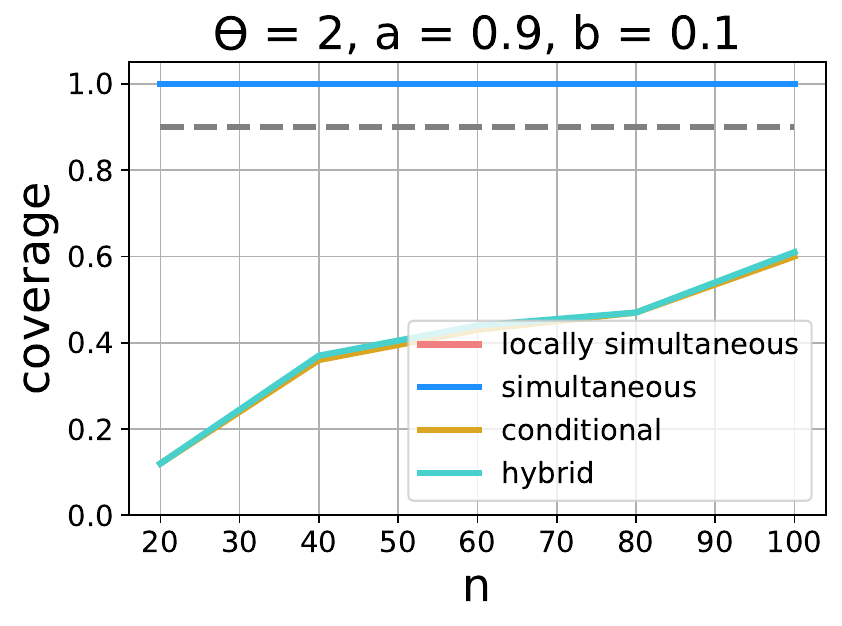}
    \includegraphics[width=0.24\textwidth]{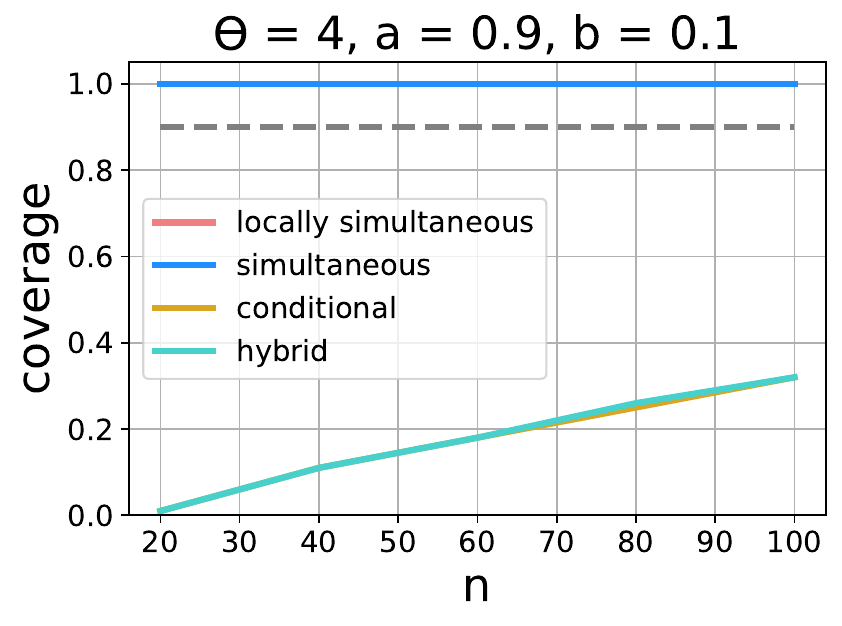}
    \caption{Coverage of locally simultaneous, fully simultaneous, conditional, and hybrid inference when the noise is sampled from $\mathrm{Beta}(a,b)$. The conditional and hybrid approaches use a normal approximation; the locally simultaneous and fully simultaneous approaches use nonparametric, finite-sample-valid confidence intervals due to Waudby-Smith and Ramdas~\cite{waudby2020estimating}. The target coverage is $0.9$, indicated by the dashed line.}
\label{fig:smooth_mu_np}
\end{figure}

\paragraph{Nonparametric case.} We emphasized that locally simultaneous inference is rigorously applicable in nonparametric settings, while conditional approaches are not. Still, it might seem like a reasonable heuristic to apply conditional inference after a normal approximation based on the CLT. We test this heuristic empirically, comparing to a nonparametric application of locally and fully simultaneous inference. We observe that the heuristic application of conditional methods can severely undercover the target.

We fix $C=20$, $m = 100$, and vary $\theta$ to obtain the mean vector $\mu$. Given $\mu$, we generate $n$ i.i.d. samples $y^{(1)},\dots,y^{(n)}$, where $y^{(j)} = \mu + \xi^{(j)}$ and $\xi^{(j)}$ has i.i.d. entries sampled from $\mathrm{Beta}(a,b)$. To apply the locally and fully simultaneous methods, we use the betting-based confidence intervals by Waudby-Smith and Ramdas~\cite{waudby2020estimating} (Theorem 3), together with a Bonferroni correction. To form the acceptance region of the locally simultaneous method, we use the Bentkus concentration inequality \cite{bentkus2004hoeffding}.
In Figure~\ref{fig:smooth_mu_np} we plot the coverage of all four approaches for varying $a,b$, and sample size $n$. We observe that, as $\theta$ grows, the conditional methods have diminishing coverage. This confirms the need for a more robust, nonparametrically applicable correction. In contrast, the two simultaneous methods have valid coverage and typically overcover, which is to be expected given the use of nonparametric concentration inequalities. In Figure \ref{fig:int_width_np} we plot the interval width implied by the four methods. The conditional methods yield much smaller intervals, but this comes at the cost of invalid coverage, as shown in Figure \ref{fig:smooth_mu_np}. The locally simultaneous intervals are consistently smaller than the fully simultaneous intervals, with the improvement being more pronounced when there are few plausible candidates, that is, when $\theta$ is small. Moreover, as $n$ grows, the locally simultaneous intervals gradually approach the conditional intervals; this makes sense seeing that the coverage of the conditional methods improves with $n$.

\begin{figure}[t]
    \centering
    \includegraphics[width=0.24\textwidth]{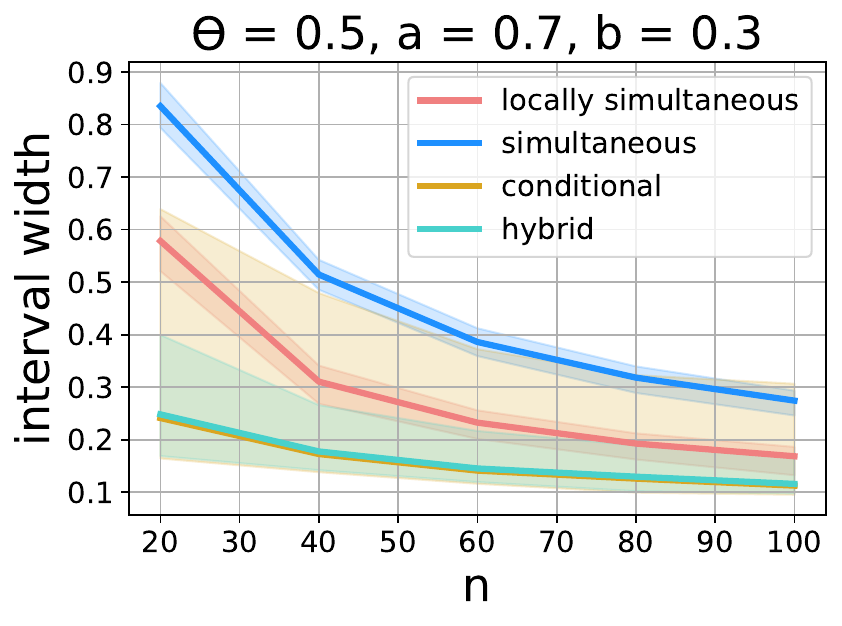}
    \includegraphics[width=0.24\textwidth]{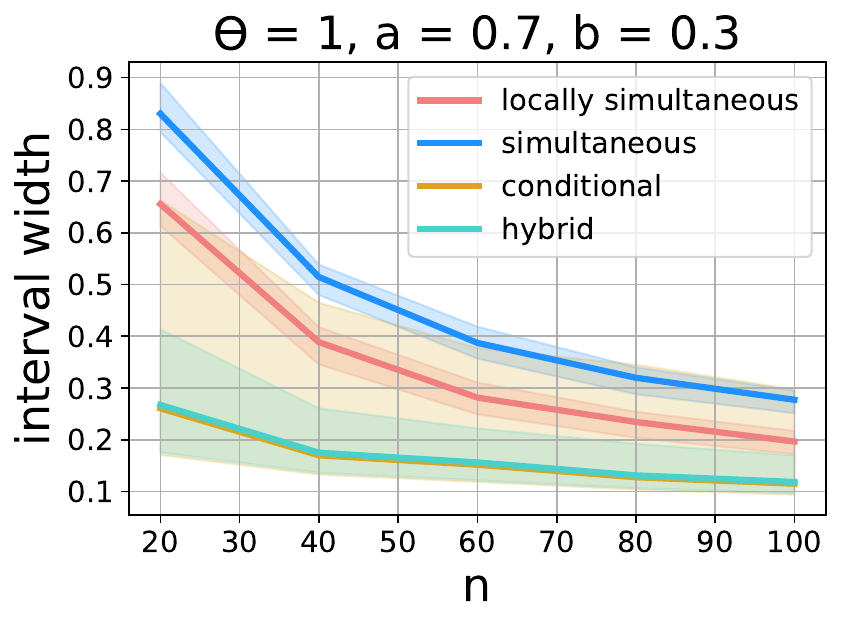}
    \includegraphics[width=0.24\textwidth]{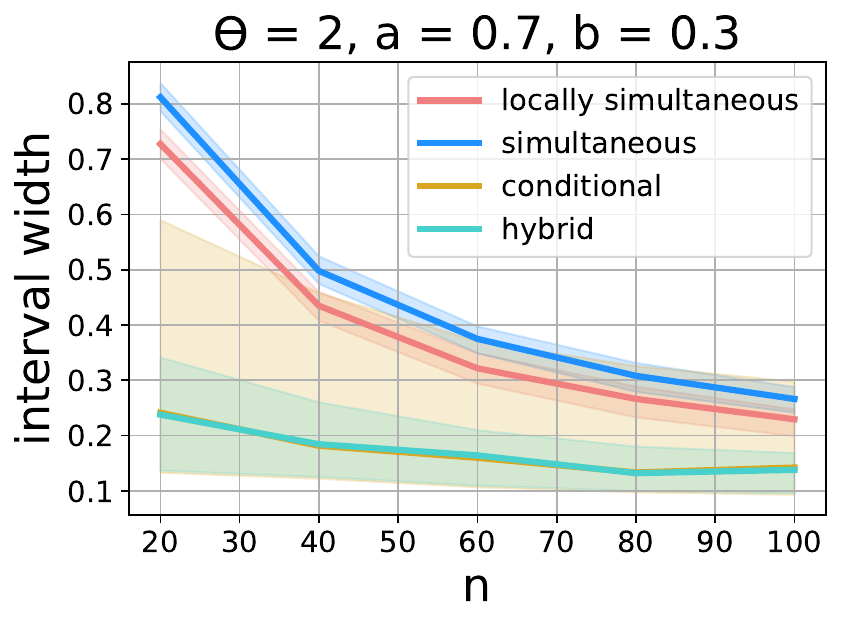}
    \includegraphics[width=0.24\textwidth]{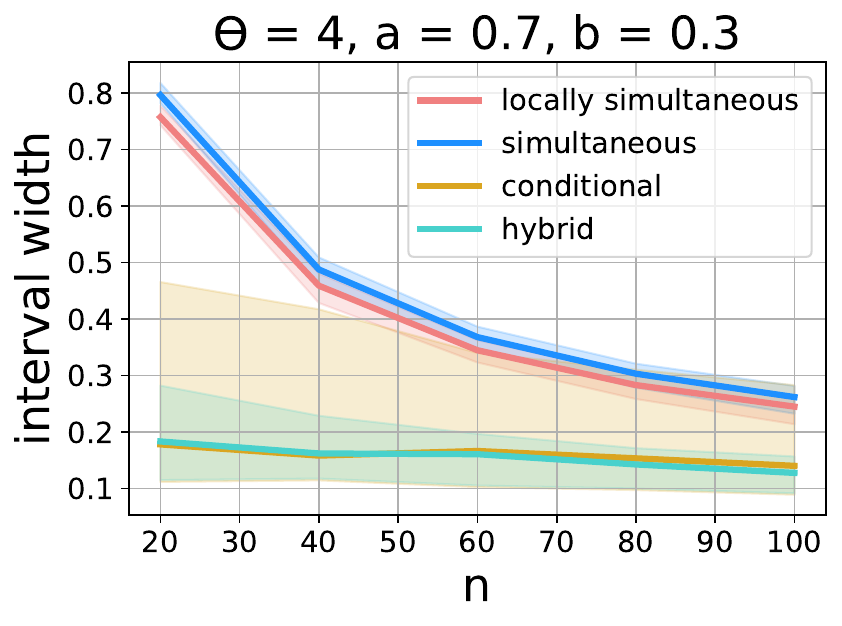}
    \includegraphics[width=0.24\textwidth]{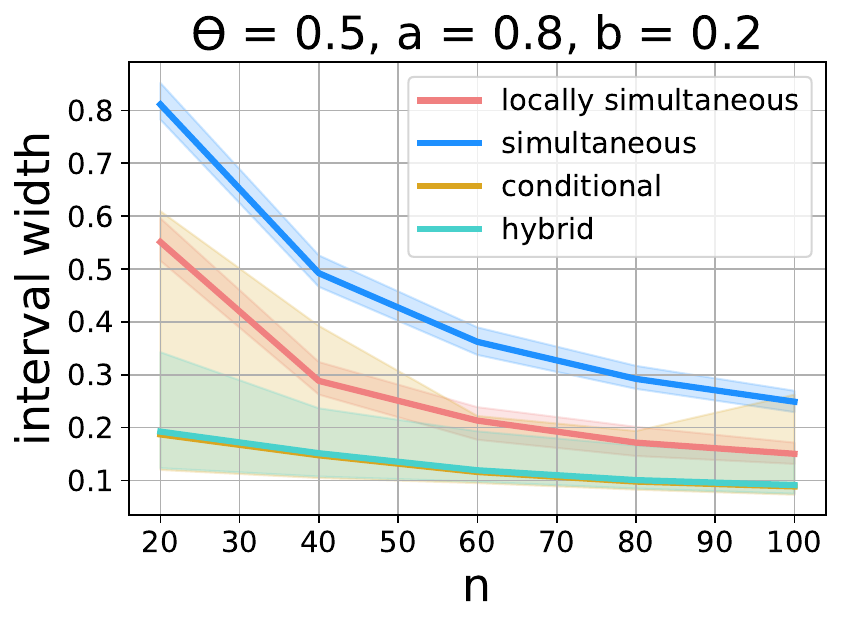}
    \includegraphics[width=0.24\textwidth]{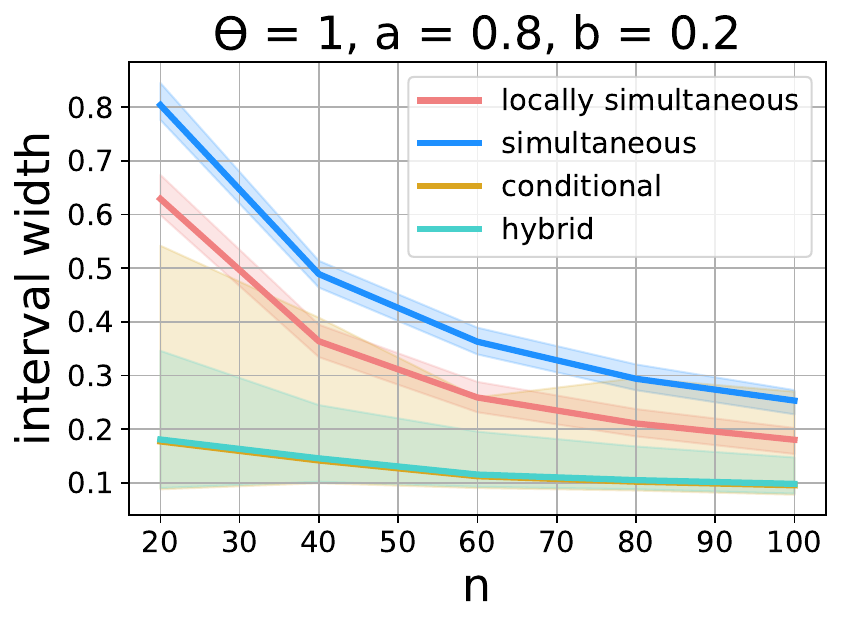}
    \includegraphics[width=0.24\textwidth]{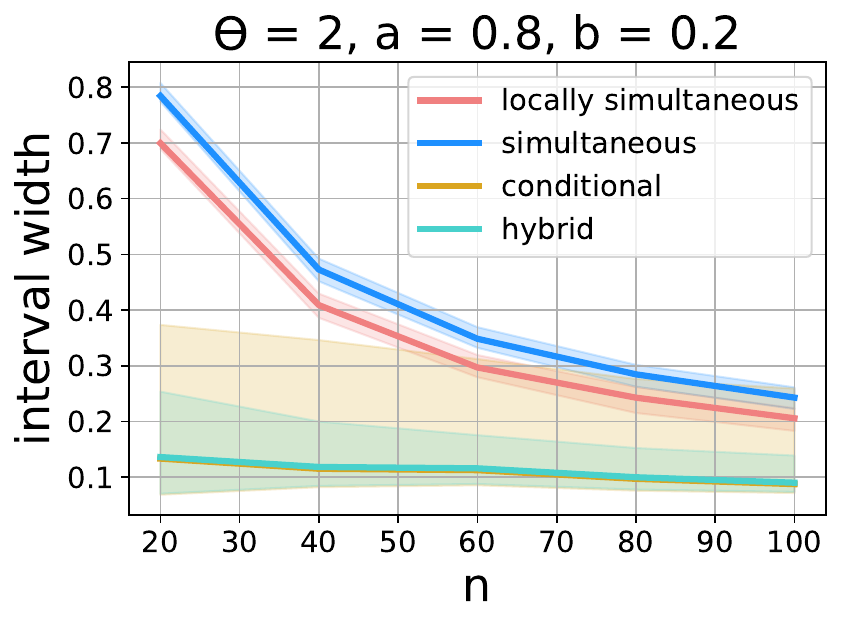}
    \includegraphics[width=0.24\textwidth]{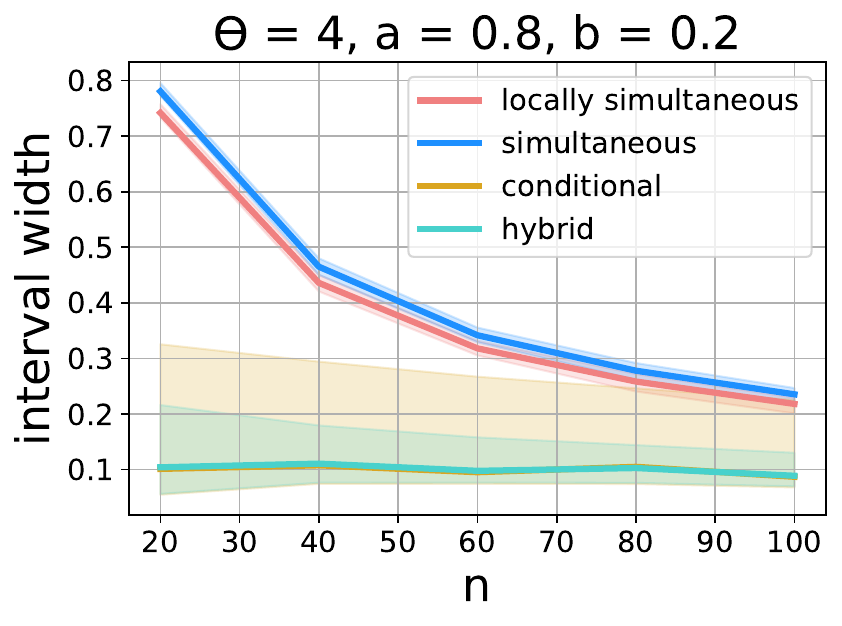}
    \includegraphics[width=0.24\textwidth]{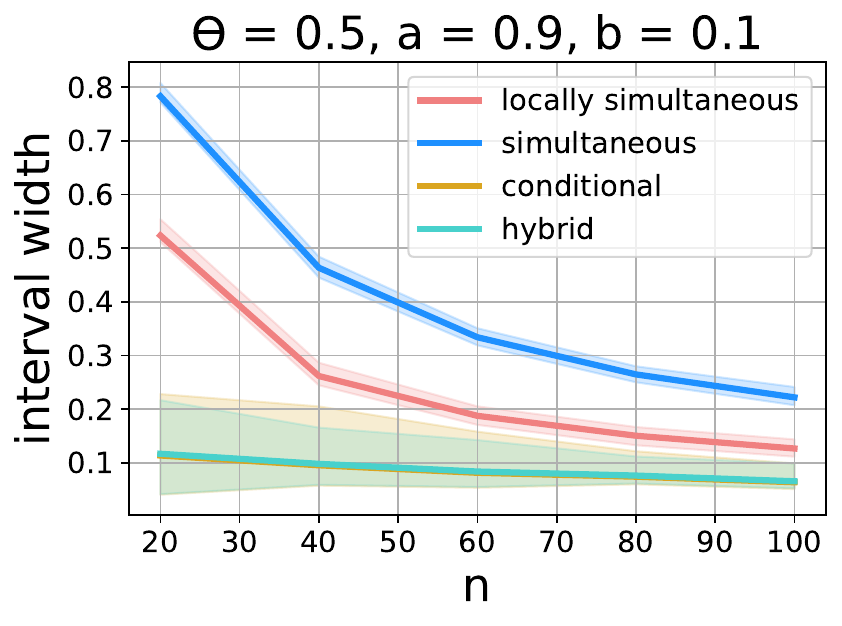}
    \includegraphics[width=0.24\textwidth]{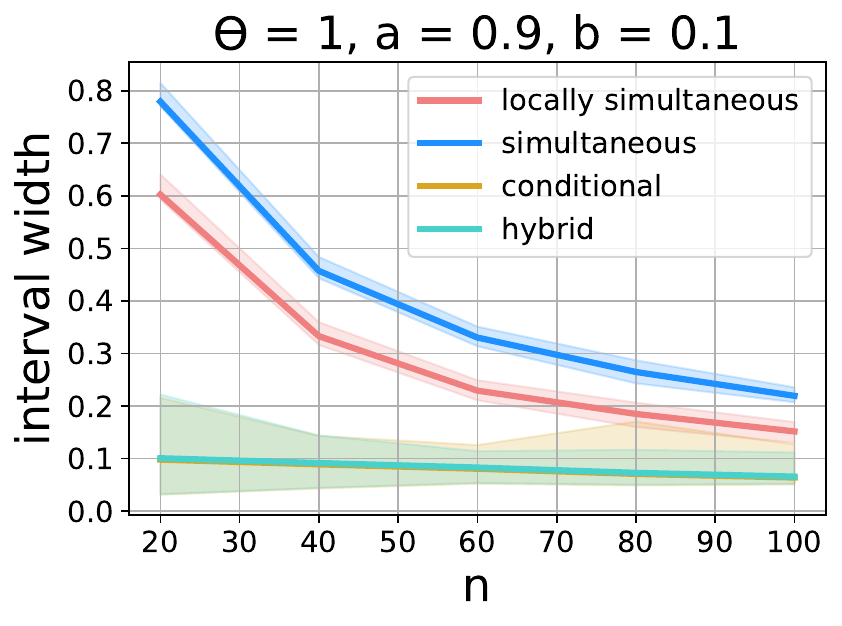}
    \includegraphics[width=0.24\textwidth]{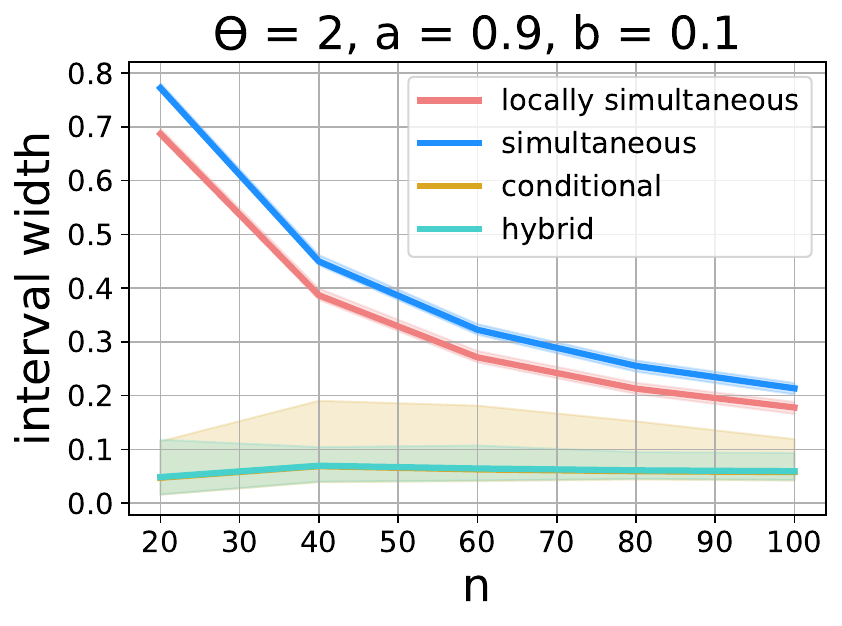}
    \includegraphics[width=0.24\textwidth]{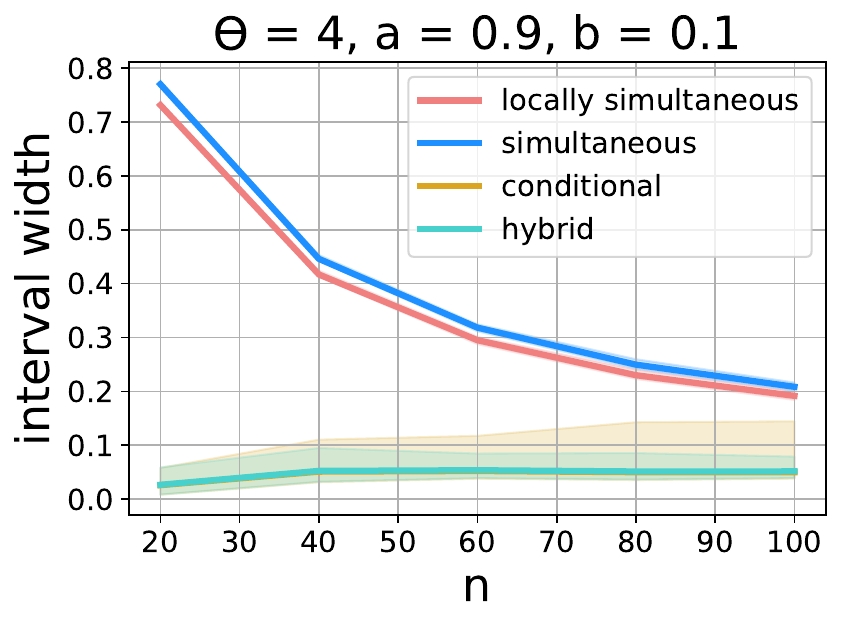}
    \caption{Interval width achieved by locally simultaneous, fully simultaneous, conditional, and hybrid inference when the noise is sampled from $\mathrm{Beta}(a,b)$. The conditional and hybrid approaches use a normal approximation; the locally simultaneous and fully simultaneous approaches use nonparametric, finite-sample-valid confidence intervals due to Waudby-Smith and Ramdas~\cite{waudby2020estimating}.}
\label{fig:int_width_np}
\end{figure}

\begin{figure}[t]
    \centering
    \includegraphics[width=0.24\textwidth]{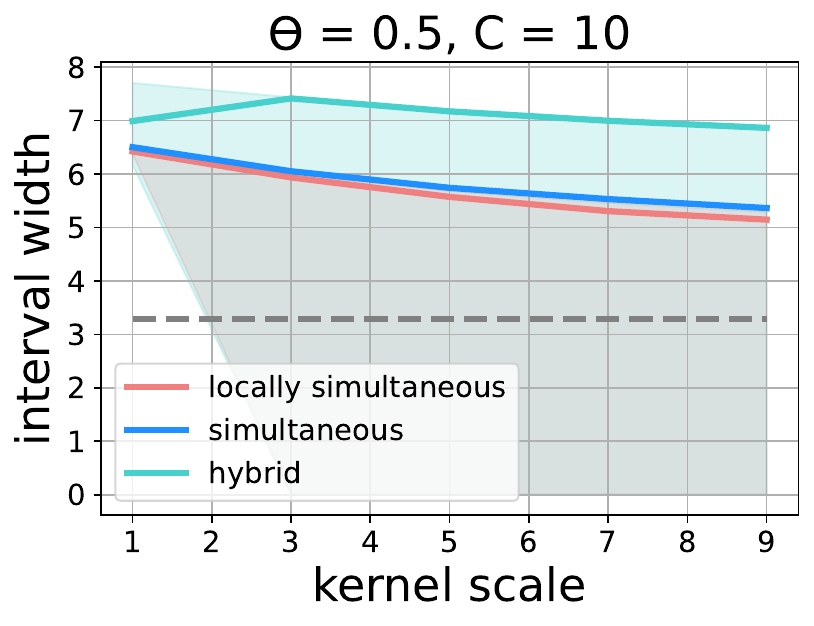}
    \includegraphics[width=0.24\textwidth]{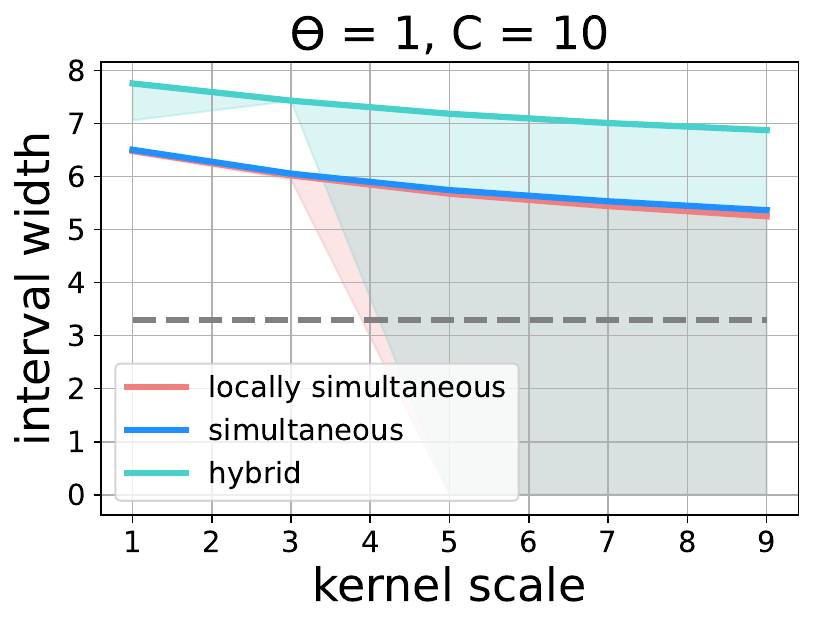}
    \includegraphics[width=0.24\textwidth]{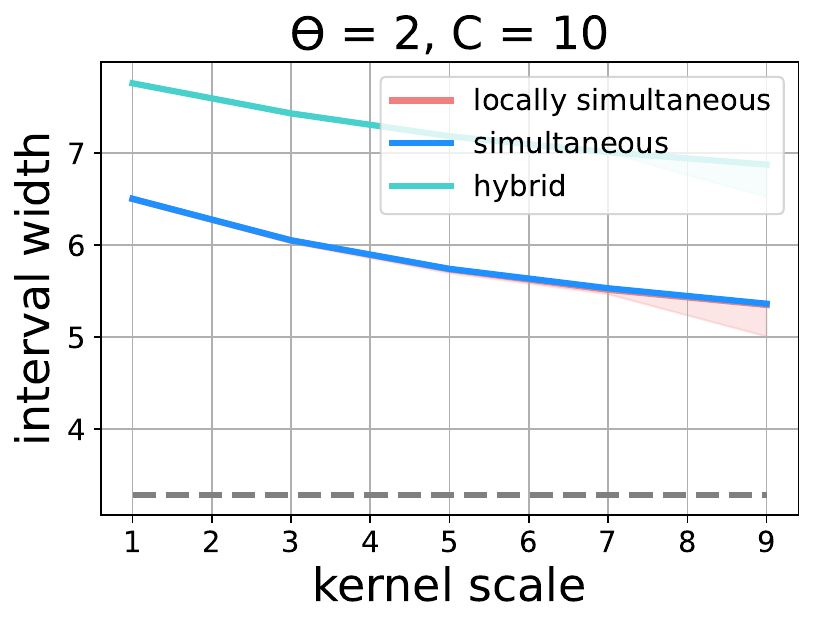}
    \includegraphics[width=0.24\textwidth]{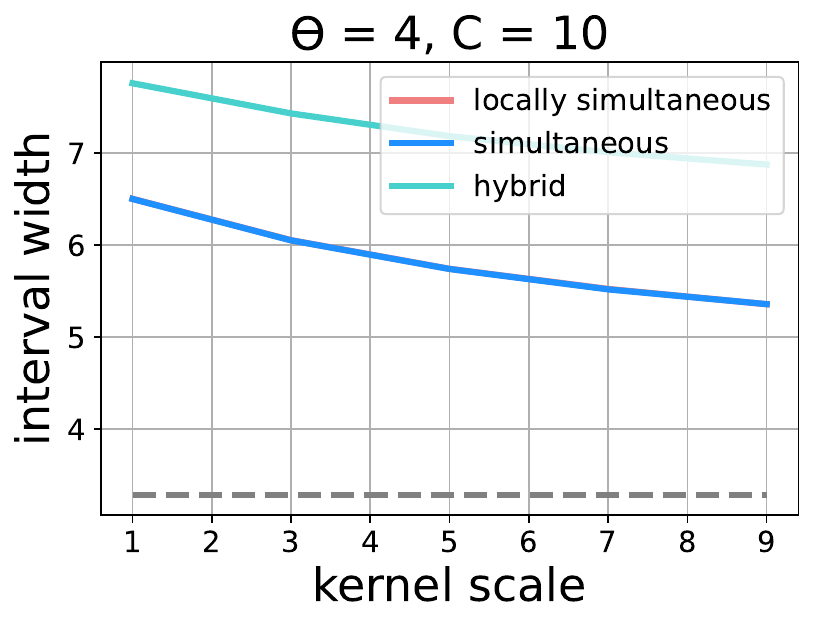}
    \includegraphics[width=0.24\textwidth]{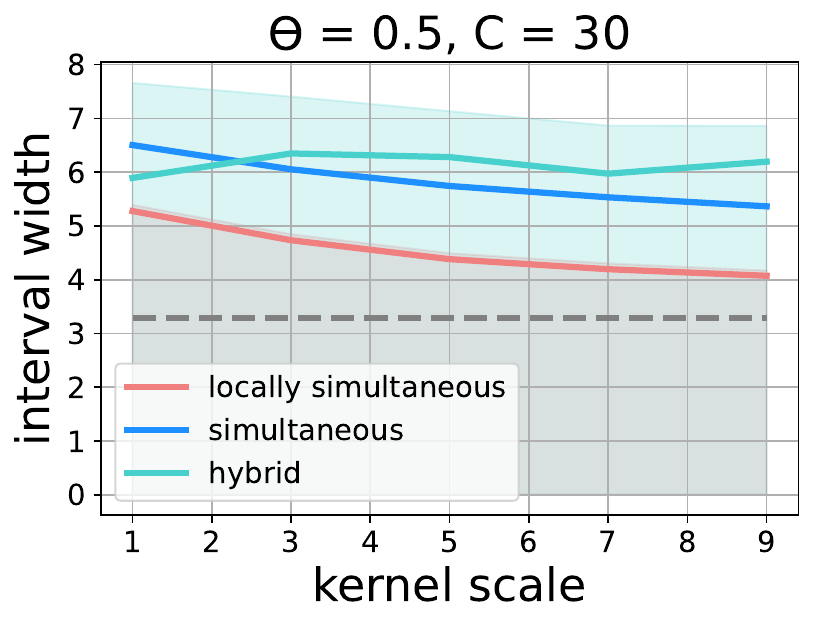}
    \includegraphics[width=0.24\textwidth]{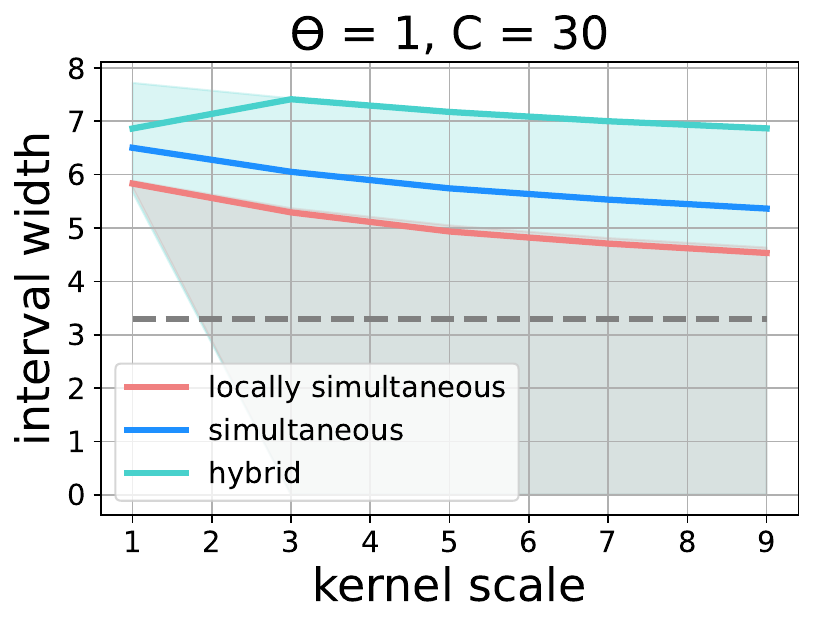}
    \includegraphics[width=0.24\textwidth]{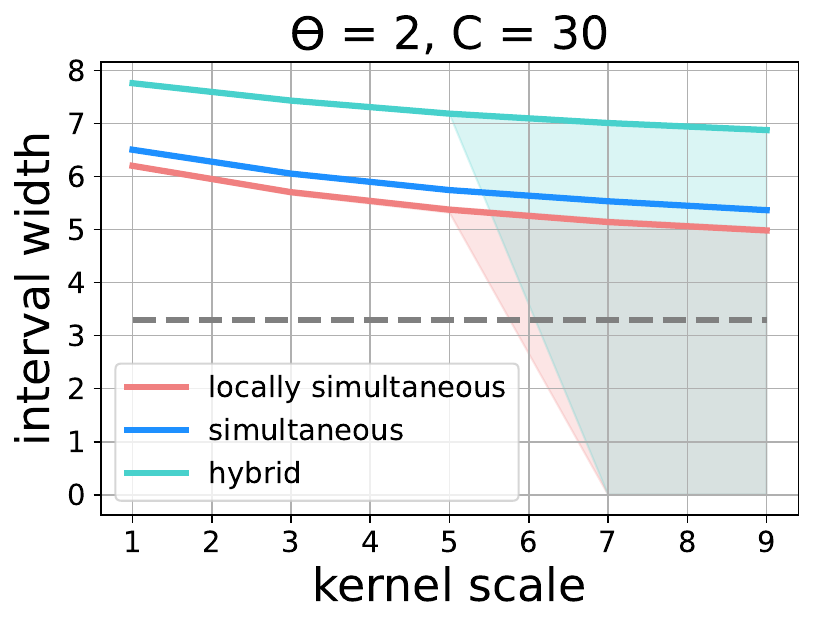}
    \includegraphics[width=0.24\textwidth]{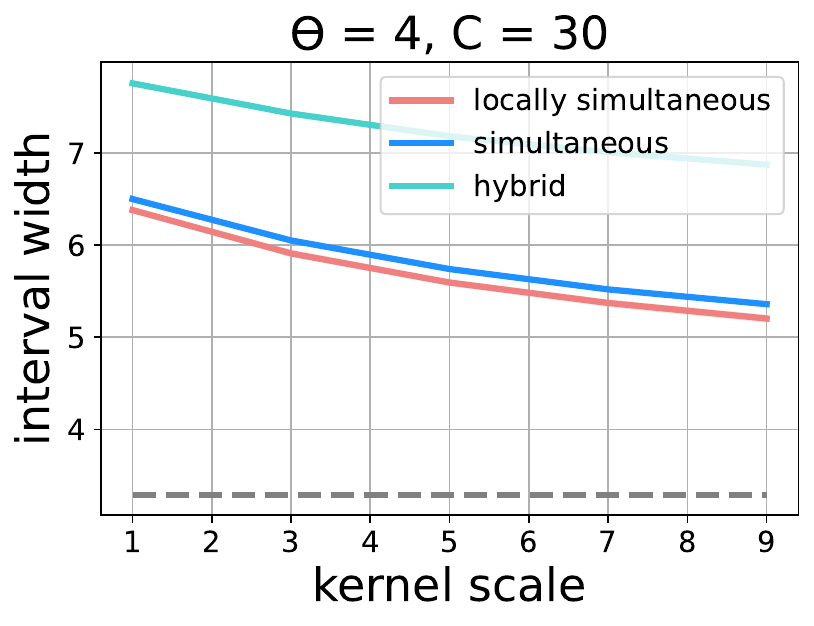}
       \includegraphics[width=0.24\textwidth]{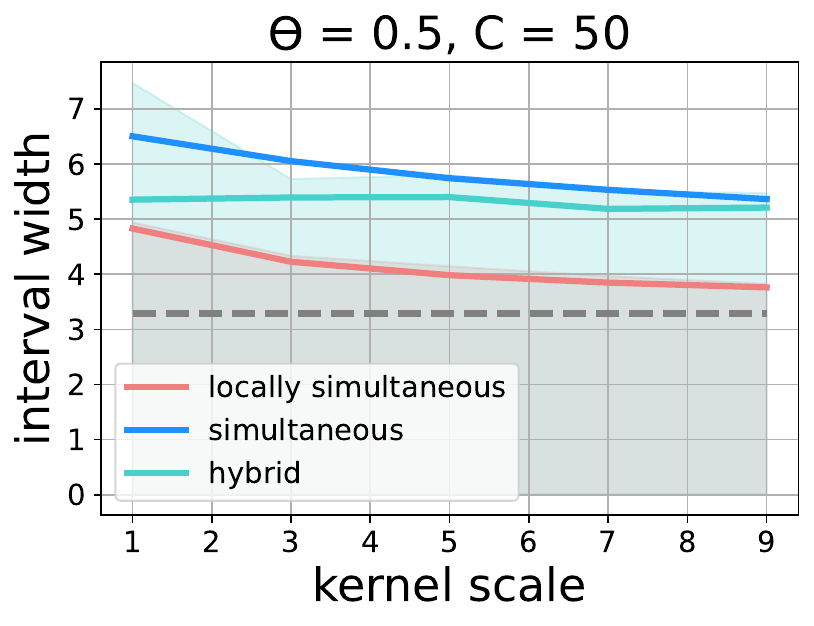}
    \includegraphics[width=0.24\textwidth]{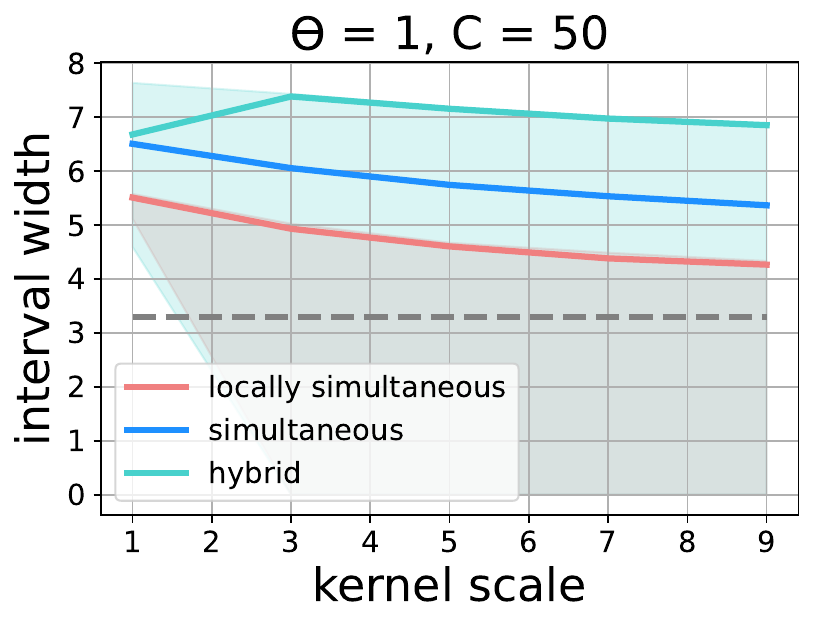}
    \includegraphics[width=0.24\textwidth]{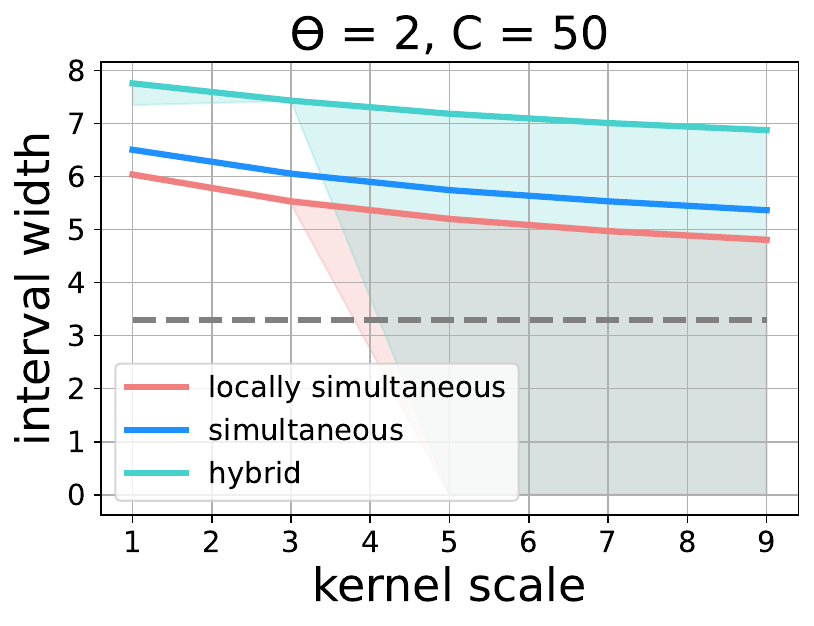}
    \includegraphics[width=0.24\textwidth]{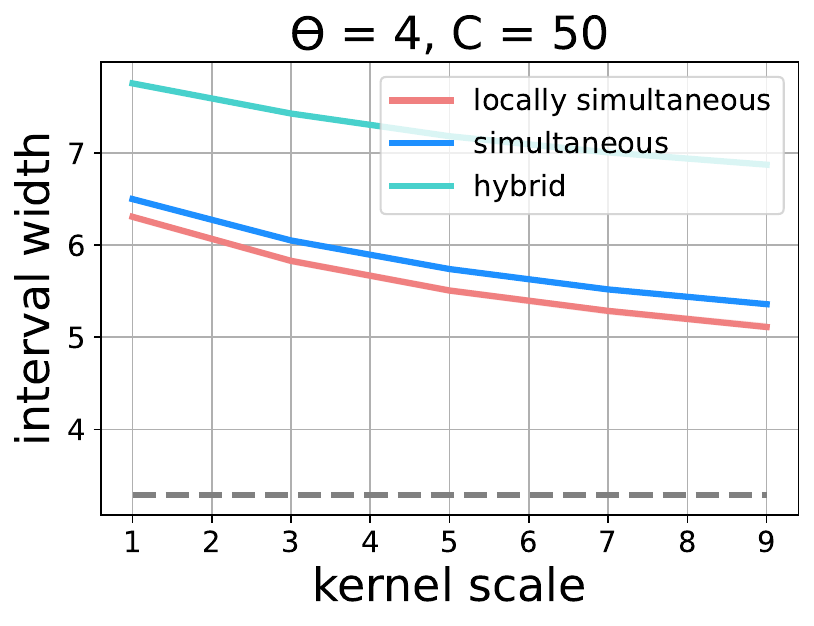}
    \caption{Interval width achieved by locally simultaneous, fully simultaneous, and hybrid inference in the file-drawer problem. Conditional inference achieves much wider intervals and is thus not included in the plots. The dashed line corresponds to nominal interval widths.}
\label{fig:inf_above_threshold}
\end{figure}

\subsection{File-drawer problem}
\label{sec:filedrawer_exps}

The next problem we consider is the file-drawer problem from Section \ref{sec:winner}. As alluded to earlier, the conditional and hybrid approaches provide inference for one real-valued parameter at a time and it is unclear how to generalize them to multi-dimensional problems without resorting to a Bonferroni correction. In contrast, locally simultaneous inference is able to adapt to the dependencies in the data.

To demonstrate this, we let $y = \mu + \xi$, where $\xi \sim \mathcal{N}(0,\Sigma)$ is a Gaussian noise process with the RBF kernel, $\Sigma_{ij} = \exp\left(-\frac{|i-j|^2}{2\phi^2}\right)$; $\phi$ is the key parameter that we vary. As $\phi$ gets larger, the errors become more dependent. We generate $\mu$ as in Section \ref{sec:iow_exp}, again varying $\theta$ and $C$.

First, we observe that the conditional approach is exceptionally fragile in this problem setting: its intervals are consistently much larger than the intervals of the other competitors, often even of a different order of magnitude. We include example plots in Appendix \ref{app:file_drawer_conditional}. For this reason, we omit the conditional approach from the comparison.
In Figure \ref{fig:inf_above_threshold} we plot the interval widths of locally simultaneous, simultaneous, and hybrid inference. We set $T=-1$ and vary the kernel scale $\phi$, as well as $\theta$ and $C$, which control the shape of $\mu$. We combine the hybrid method with a Bonferroni correction over the selected set. We observe that the simultaneous and locally simultaneous methods are indeed able to adapt to the kernel scale. Moreover, as in the previous problem setting, increasing $\theta$ makes the problem more challenging for the hybrid method, and as $C$ increases the problem becomes easier for locally simultaneous inference.

\begin{figure}[t]
    \centering
    \includegraphics[width=0.32\textwidth]{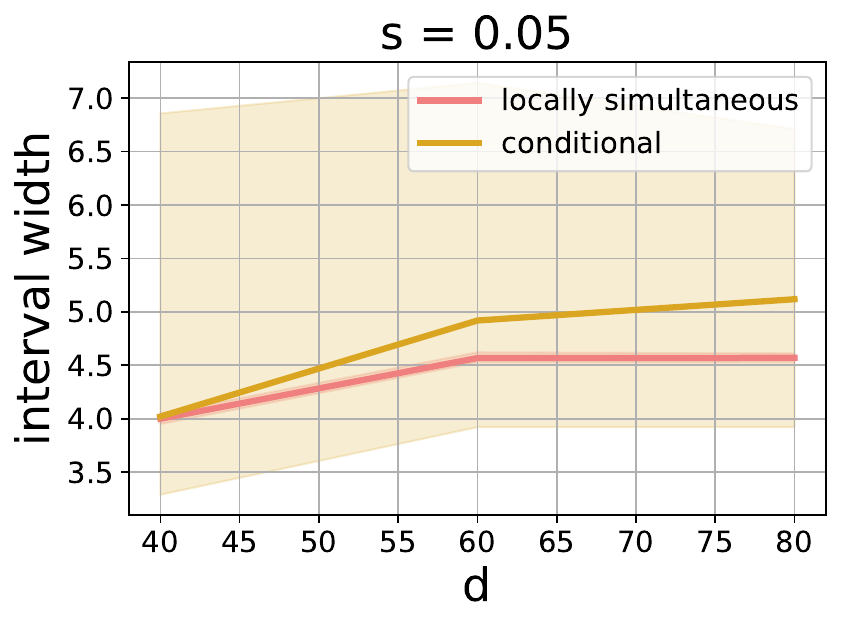}
    \includegraphics[width=0.32\textwidth]{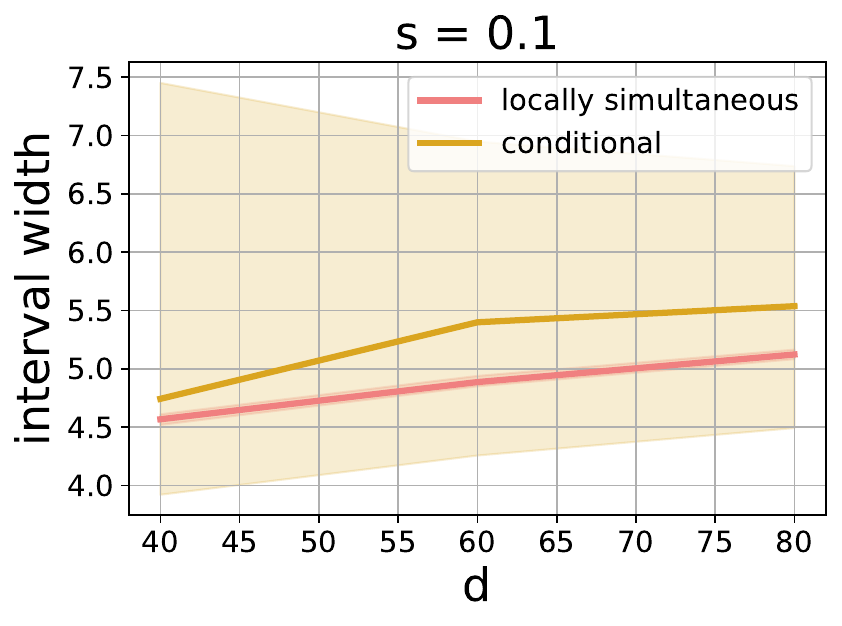}
    \includegraphics[width=0.32\textwidth]{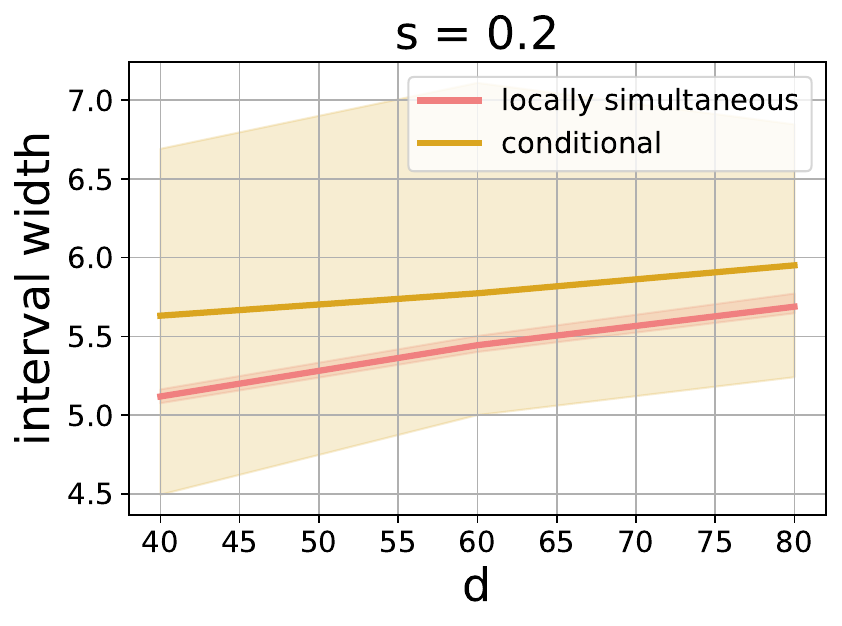}
    \caption{Interval width achieved by locally simultaneous and conditional inference in the problem of inference after selection via the LASSO, when the true underlying signal is $s$-sparse.}
\label{fig:lasso}

\includegraphics[width=0.32\textwidth]{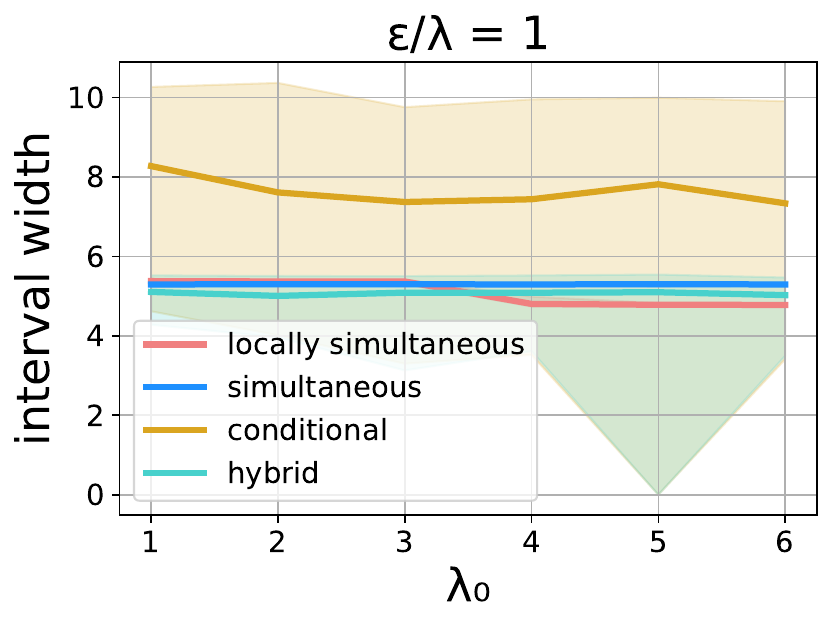}
    \includegraphics[width=0.32\textwidth]{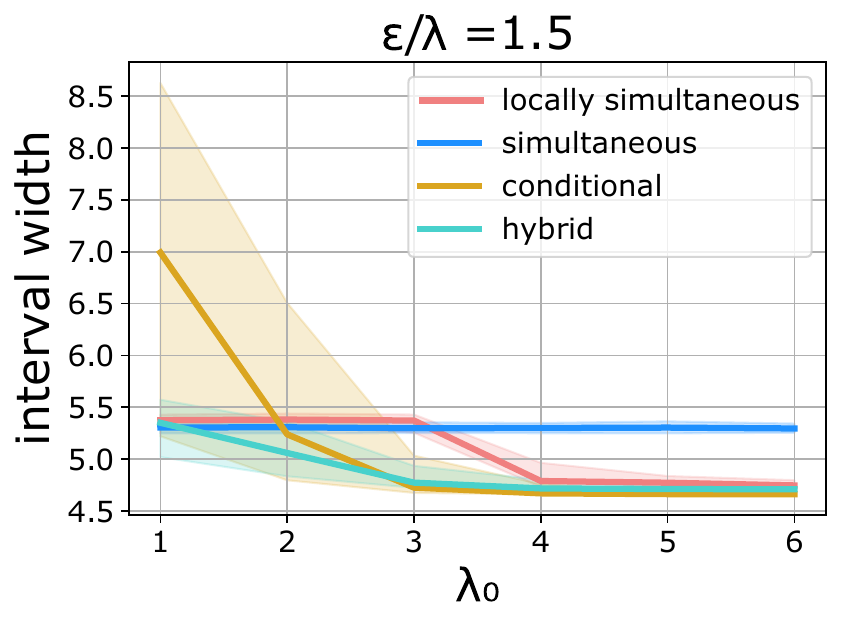}
    \includegraphics[width=0.32\textwidth]{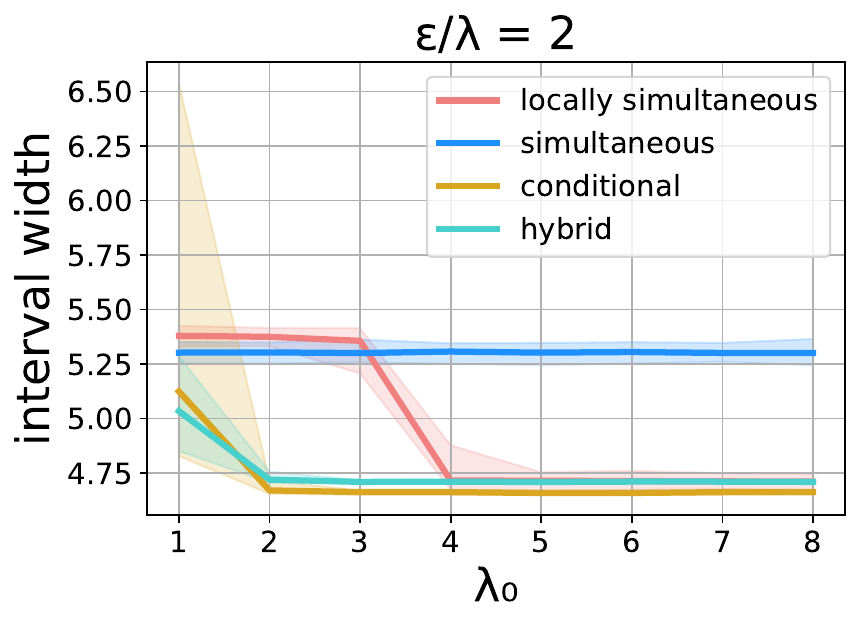}
    \caption{Interval width achieved by locally simultaneous, fully simultaneous, conditional, and hybrid inference in the problem of inference after selection via the LASSO, when we vary to ratio of signal strength to regularization.}
\label{fig:lasso2}
\end{figure}

\subsection{Inference after selection via the LASSO}

Next, we look at the problem of inference after model selection via the LASSO. As the baselines, we consider the simultaneous PoSI method of Berk et al.~\cite{berk2013valid}, the conditional method of Lee et al.~\cite{lee2016exact}, and hybrid inference for the LASSO by McCloskey~\cite{mccloskey2020hybrid}. The hyrbid method uses PoSI as a subroutine for narrowing down the conditional confidence intervals.

Already when the dimension $d$ is greater than $20$, the number of models admissible for selection exceeds $10^6$, making the fully simultaneous PoSI method of Berk et al.~\cite{berk2013valid} prohibitively computationally expensive. This in turn means that the hybrid method likewise becomes prohibitively expensive. Here we show that the set of plausible models $\widehat \M_\nu^+$ can be much smaller than the set of all subsets of $[d]$ when the true data-generating model is sparse, making locally simultaneous inference both powerful and computationally tractable.

In Figure \ref{fig:lasso}, we consider the following data-generating process. We generate the design matrix to have i.i.d. standard normal entries and normalize the columns to have norm $1$. We let $y = X\beta + \xi$, where $\xi\sim\mathcal N(0,I_n)$ and $\beta$ has $\lceil s\cdot d\rceil$ nonzero entries, where we vary the sparsity parameter $s$. Of the $\lceil s\cdot d\rceil$ nonzero entries, we take half of them to be ``weak'', specifically equal to $\lambda$, and half of them to be ``strong'', specifically equal to $2\lambda$. We let $\lambda$ have the usual scaling of $\sim \sqrt{2\log(e\cdot d)}$. In particular, we fix $\lambda = 6\sqrt{2\log(e\cdot d)}$ and $n=1000$. In this parameter regime, we observe that the plausible models are typically those models that always include the strong variables, never include the irrelevant variables, and contain an arbitrary subset of the weak variables. We only compare locally simultaneous inference to conditional inference, seeing that fully simultaneous inference is computationally challenging for the values of $d$ we consider. As before, we observe that the conditional approach exhibits high variability. Moreover, the median interval width implied by the locally simultaneous approach is noticeably smaller.

That being said, the locally simultaneous solution for the LASSO has computational disadvantages. Its complexity scales with the number of plausible model-sign pairs and there can be up to $3^d$ such corresponding pairs, which means that in the worst case the search for all plausible models can be fairly slow. A reasonable remedy is to introduce a parameter $P_{\max}$ such that, if the size of $\P_{\mathrm{todo}}$ in Algorithm \ref{alg:local_lasso} exceeds $P_{\max}$, the search for new model-sign pairs stops and the procedure simply runs the PoSI method of Berk et al. at error level $\alpha-\nu$. We implement this strategy in Figure \ref{fig:lasso2}. Specifically, we let $d = 10$ and generate $X$ and $y$ as before, only now $\beta_i = \epsilon$ for $i\in\{1,\dots,5\}$ and $\beta_i = 0$ for $i\in\{6,\dots,10\}$. We set $\lambda = \lambda_0 \sqrt{2\log(e\cdot d)}$ and vary $\lambda_0$ and $\epsilon/\lambda_0$. When $\epsilon/\lambda_0=1$, the non-nulls of $\beta$ are approximately at the threshold of being selected, and as $\epsilon/\lambda$ grows the true model becomes more obvious. To speed up locally simultaneous inference, we set $P_{\max}=2000$. As in the experiments on inference on the winner, we observe that locally simultaneous inference is preferred to conditional inference when the data is near the selection boundary, which happens when $\lambda_0$ or $\epsilon/\lambda$ is small. As the selection becomes more obvious, conditional inference becomes more powerful. The hybrid method is a favorable solution in this problem, especially seeing that $d$ is small.

\begin{figure}[t]
    \centering
    \includegraphics[height=0.21\textwidth]{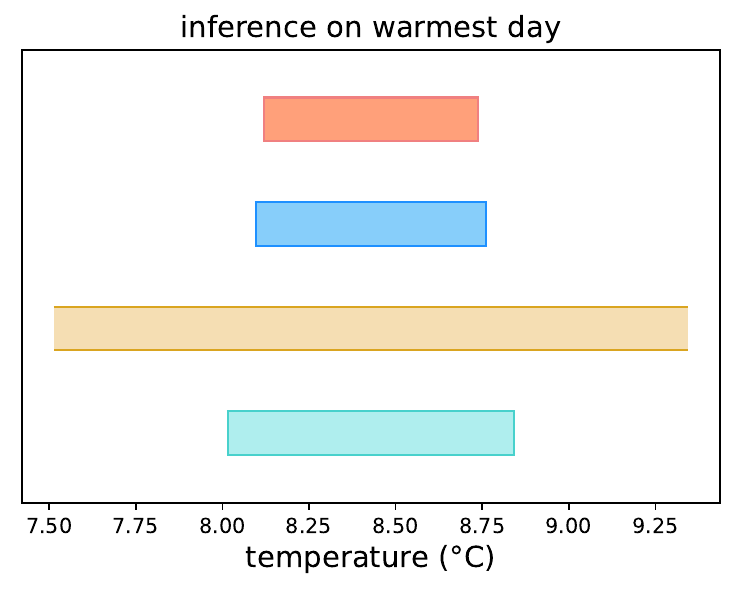}
    \includegraphics[height=0.21\textwidth]{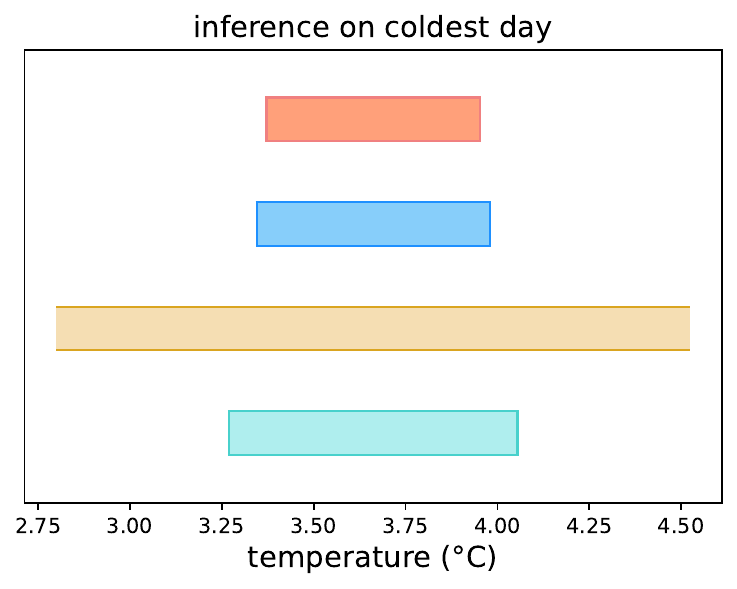}
    \includegraphics[height=0.21\textwidth]{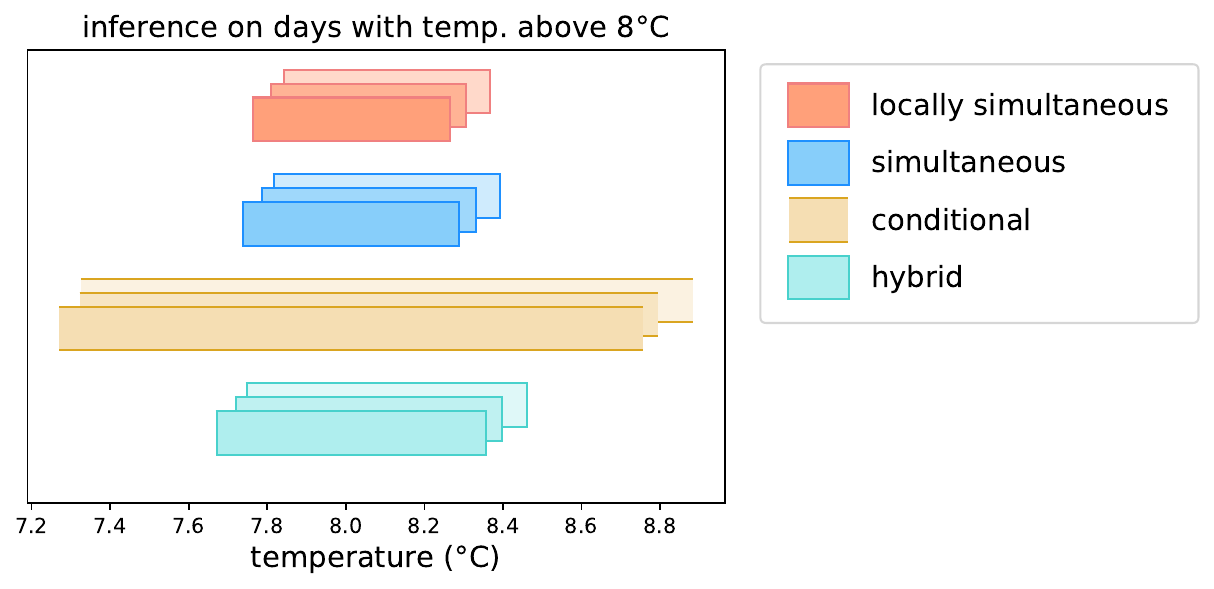}
    \caption{Intervals for the mean temperature constructed via locally simultaneous, fully simultaneous, conditional, and hybrid inference, for selections based on time. The conditional intervals were clipped for the purpose of the visualization.}
\label{fig:climate_time}

\includegraphics[height=0.21\textwidth]{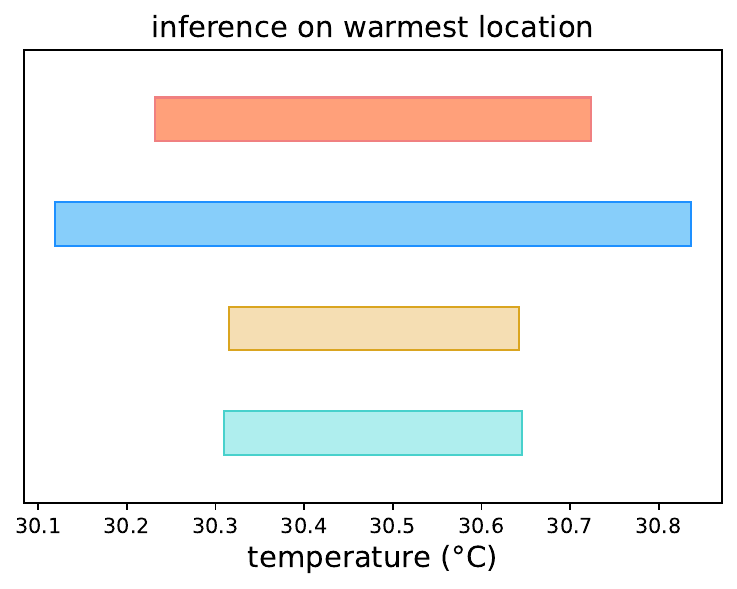}
    \includegraphics[height=0.21\textwidth]{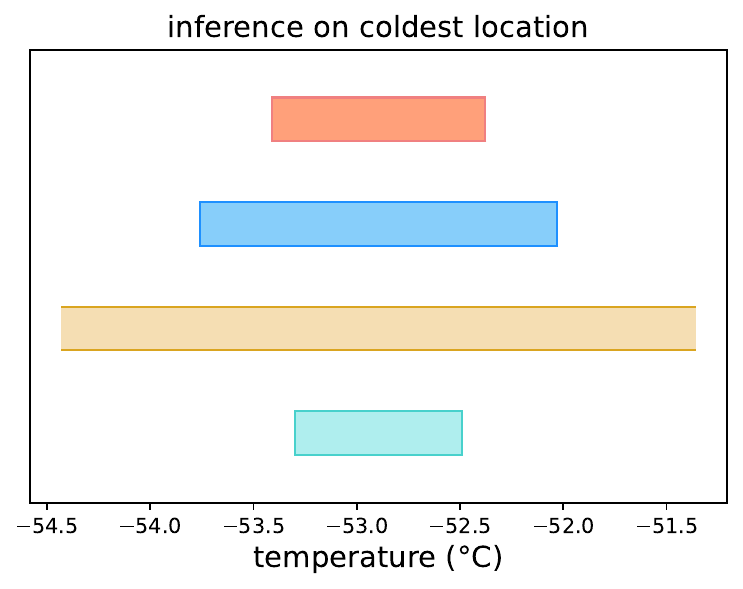}
    \includegraphics[height=0.21\textwidth]{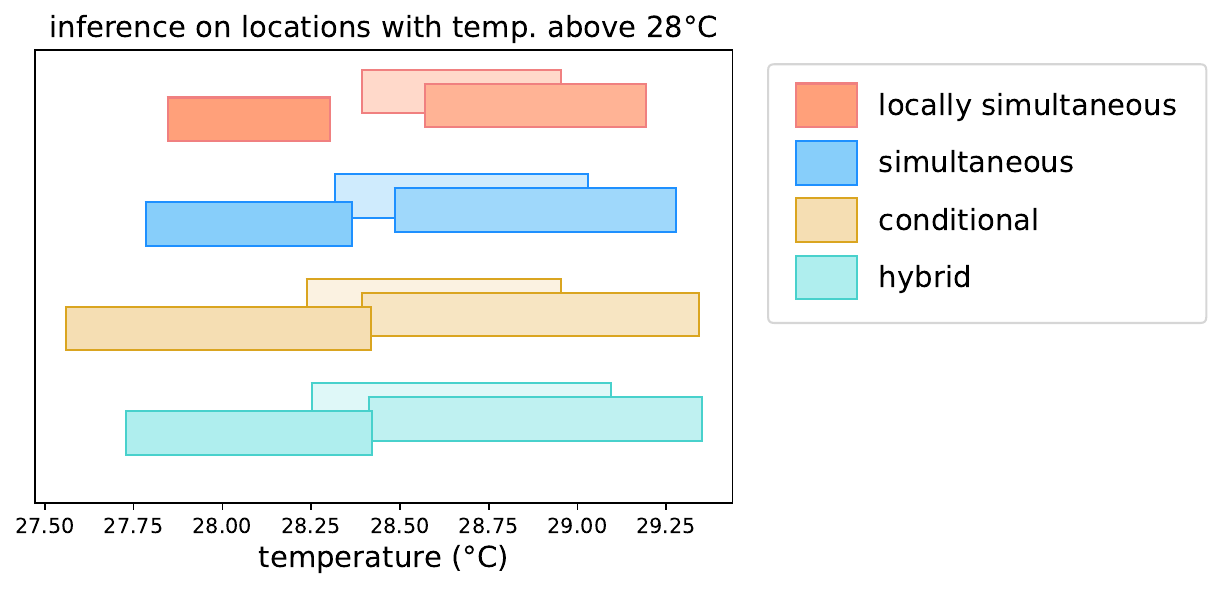}
    \caption{Intervals for the mean temperature constructed via locally simultaneous, fully simultaneous, conditional, and hybrid inference, for selections based on location. The conditional interval in the middle panel was clipped for the purpose of the visualization.}
\label{fig:climate_location}
\end{figure}

\subsection{Experiments on real climate data}

Finally, we conduct experiments on a real climate dataset \cite{rasp2020weatherbench}. The dataset contains hourly measurements of temperature from 1999 to 2018 across a discrete grid of locations on Earth. The grid is obtained by pairing 32 latitude coordinates with 64 longitude coordinates. We model the measurements from the 20 years as i.i.d. draws from an underlying distribution. We again compare locally simultaneous, fully simultaneous, conditional, and hybrid inference. To be able to apply the conditional and hybrid approaches, we model the draws as i.i.d. multivariate Gaussians. We use older data, from 1979 to 1998, to estimate the Gaussian covariance. We study two types of selection: based on time and based on location. 

In the first set of experiments we compute the average temperature on Earth (averaged over all locations on the grid) and look at the resulting time series. For each year we take one measurement per day, evaluated at noon, resulting in a series of 365 entries. We ask for inference on the warmest day, coldest day, and all days with temperature above $8\degree C$. In the second set of experiments we compute the average annual temperature and look at its distribution over the different recorded locations on Earth. Similarly to the first set of experiments, we ask for inference on the warmest location, coldest location, and all locations with temperature above $28 \degree C$. We plot the resulting intervals in Figure~\ref{fig:climate_time} and Figure~\ref{fig:climate_location}, respectively. For the two file-drawer problems, for interpretability we only visualize three intervals, corresponding to the first three days/locations where the temperature exceeds the critical threshold. In the first set of experiments, we observe that conditional inference leads to significantly wider intervals than any of the three alternatives; in the visualizations, the conditional intervals are clipped to fit within the plot. In the first problem of the second set of experiments the winner stands out and hence the conditional method outperforms the other approaches. The locally simultaneous approach gives narrower intervals than the fully simultaneous approach, as expected. The hybrid approach leads to smaller intervals than the locally simultaneous approach in some settings and wider in others. Critically, however, the hybrid approach is only applicable because we imposed an assumption of Gaussianity, while locally simultaneous inference would be applicable even nonparametrically. We note that the hybrid approach is technically applicable nonparametrically, but in the asymptotic regime. This follows by the central limit theorem; see \cite{andrews2019inference}. In this application we only have $20$ observations, so requiring finite-sample validity is a natural choice.

\section{Discussion}

We introduced a correction for selective inference called locally simultaneous inference, which requires taking a simultaneous correction only over selections that seem plausible in hindsight. As such, locally simultaneous inference is a refinement of standard simultaneous inference methods and, in fact, it strictly dominates simultaneous inference under a suitable choice of acceptance region $A_\nu(P)$. Of course, one limitation of locally simultaneous inference relative to classical simultaneous inference is that the former requires knowledge of the selection mechanism, while the latter does not. In this sense, the latter is ``more robust'' to latent data manipulation on the part of the researcher that a consumer of research is unable to see. Similarly, classical simultaneous inference applies to selection mechanisms that are discovered adaptively, while locally simultaneous inference requires specifying the selection strategy ahead of time.

\subsection{Tradeoffs with baselines}

We compared locally simultaneous inference to the conditional method and its hybrid refinement. Conditional and locally simultaneous inference are founded on different principles and neither dominates the other in general. However, there is some intuition about the comparison to be drawn from the experiments in Section \ref{sec:exps}. The conditional approach is very sensitive near the selection boundary: we observed that increasing the number of near-selected candidates drastically increased the intervals. This was demonstrated by varying the shape of the mean outcome near the maximizer, as determined by $\theta$. The hybrid method largely resolves this issue, but if there are many candidate selections, the simultaneous-inference step of hybrid inference may be very conservative. Locally simultaneous inference, on the other hand, is sensitive to a larger number of candidate selections than conditional methods, seeing that the plausible selections include all near-selected candidates and more, but the sensitivity is far more moderate. This less local but more moderate sensitivity was demonstrated by varying the signal range $C$; indeed, we observed that locally simultaneous intervals are more sensitive to increasing $C$ than increasing $\theta$. Therefore, we believe that locally simultaneous inference is competitive in cases where there are many possible selections that could be obtained by slightly perturbing the data. Another setting where we believe locally simultaneous inference is competitive is when the target of inference is multidimensional and there are strong dependencies between the coordinates of the target. Locally simultaneous inference is able to adapt to such dependencies successfully, as demonstrated in Section \ref{sec:filedrawer_exps}.

\subsection{Choice of acceptance region}

One question that remains open is optimal choices for the acceptance region $A_\nu(P)$ in the construction of the plausible set $\widehat\Gamma^+_\nu$. This set crucially determines both the computational tractability of finding the set of plausible selections and the statistical power of downstream inferences. In most of our examples we used the ``$\ell_\infty$-box'' approach due to its simplicity and tractability, however we believe that there are more powerful choices and we leave this direction for future work.

\subsection{Hypothesis testing} While our exposition primarily focused on forming confidence intervals, locally simultaneous inference has implications to multiple testing as well.

Consider the following concrete example. Suppose we want to test the null hypotheses $H_\gamma:\theta_\gamma\leq 0$, $\gamma\in[m]$, at family-wise error level $\alpha$. We have access to independent observations $y_\gamma\sim \mathcal N(\theta_\gamma,1)$, $\gamma\in[m]$.
One baseline is the \v{S}id\` ak correction, which forms a lower confidence bound at level $1-(1-\alpha)^{1/m}$ for all hypotheses and rejects those where zero is excluded. If, say, $m=10$ and $\alpha=0.1$, this means that those hypotheses for which $y_\gamma$ is greater than about $2.3$ would get rejected.
Locally simultaneous inference can be leveraged to focus power on the most promising hypotheses, even allowing for no multiplicity correction. In particular, if one applies the file-drawer technique from Section \ref{sec:winner} with $\nu=0.1\alpha$ and $T=\Phi^{-1}(\alpha-\nu) = 1.34$, then all observations that get selected will have a one-sided p-value at most $\alpha - \nu$. This is turn implies that if a single observation crosses the threshold and all other observations are sufficiently smaller, the hypothesis corresponding to the largest observation can automatically be rejected at threshold $T=1.34$. This is a substantial improvement compared to the \v{S}id\` ak threshold.
Exploring the implications of locally simultaneous inference to hypothesis testing is a valuable future direction.

\subsection{Choosing \texorpdfstring{$\nu$}{nu}} In our experiments we heuristically fixed $\nu = 0.1\alpha$ throughout. When utilizing the stronger result of Theorem~\ref{thm:refined} is challenging due to the requirement of performing fully simultaneous inference, as in the case of the LASSO, this heuristic seems reasonable as in the worst case the analyst will perform fully simultaneous inference at error level $0.9\alpha$. To give further insight into the consequences of different choices of $\nu$, in Appendix \ref{app:choosing_nu} we show simulation results for varying $\nu$. Generally speaking, when $\nu$ is small fewer candidate selections get filtered out but inference on the remaining candidates is more powerful. When $\nu$ is large there is a greater possibility of reducing the set of plausible candidates, but inference on the remaining candidates is less powerful.
The right choice of $\nu$ thus depends on the configuration of alternative selections, especially how close they are to the winning one.

\section*{Acknowledgements}

We thank Anastasios Angelopoulos for comments on a draft of this manuscript. We also thank the anonymous reviewers and associate editor for exceptionally detailed feedback that has helped improve the clarity and quality of our paper, as well as pointers to relevant prior work we had initially omitted.

\bibliography{refs}
\bibliographystyle{plainnat}

\appendix

\newpage

\section{Deferred proofs}

\subsection{Proof of Theorem \ref{thm:gauss_seq_winner}}

The proof essentially follows by applying Theorem \ref{thm:refined}. The bulk of the proof applies to both the problem of inference on the winner and the file-drawer problem. Towards the end we specialize the analysis to the individual problems.

We let
\[A_\nu(\mu) = \left\{y: \|y-\mu\|_\infty \leq q^{\nu}([m])\right\} .\]
The validity of this set follows directly by the definition of $q^{\nu}([m])$:
$$P_\mu\left\{\max_{i\in[m]}|y_i-\mu_i| \leq q^{\nu}([m])\right\} = P_0\left\{\max_{i\in[m]}|Z_i| \leq q^{\nu}([m]) \right\} \geq 1-\alpha.$$

Using the above choice of $A_\nu(\mu)$ we can write
$$\Gamma_\nu(\mu) = \cup_{y:\|y-\mu\|_\infty \leq q^{\nu}([m])} \widehat \Gamma(y),$$
and thus
\begin{align*}
\widehat\Gamma_\nu^+ &= \cup_{\mu:\|y-\mu\|_\infty \leq q^{\nu}([m]) } \Gamma_\nu(\mu)\\
&= \cup_{\mu:\|y-\mu\|_\infty \leq q^{\nu}([m]) } \cup_{y':\|y'-\mu\|_\infty \leq q^{\nu}([m])} \widehat\Gamma(y')\\
&= \cup_{y':\|y'-y\|_\infty \leq 2q^{\nu}([m]) } \widehat\Gamma(y').
\end{align*}

In words, the set $\widehat\Gamma_\nu^+$ is the set of all selections obtained by perturbing the entries in $y$ by at most $2q^{\nu}([m])$ in $\ell_\infty$-norm.

Now we specialize the analysis to the two selection problems.

In the problem of inference on the winner, the selected inferential target is indexed by $\widehat\Gamma(y) = \left\{\hat \gamma (y)\right\}$. The most favorable perturbation $y'$ for an index $j\in[m]$ to be selected is obtained by taking $y'_j = y_j + 2q^{\nu}([m])$ and $y'_k = y_k - 2q^{\nu}([m])$ for $k\neq j$; therefore, $\widehat \Gamma_\nu^+ = \{\gamma\in[m]: y_\gamma \geq y_{\hat \gamma} - 4q^\nu([m])\}$ is the set of plausible selections.

In the file-drawer problem, the selected targets are $\widehat\Gamma(y)  = \{\gamma\in[m]:y_\gamma \geq T\}$. The indices that could fall in this set given a $2q^{\nu}([m])$ perturbation around $y$ are $\widehat \Gamma_\nu^+ = \{\gamma\in[m]: y_\gamma \geq T - 2q^\nu([m])\}$.

Therefore, in both cases we have identified $\widehat\Gamma^+_\nu$. The final statement follows by applying Theorem \ref{thm:refined}.

\subsection{Proof of Theorem \ref{thm:nonparametric_winner}}

By following virtually the same argument as in Theorem \ref{thm:gauss_seq_winner}, we can conclude that $\widehat \Gamma_\nu^+$ is the set of all plausible selections, for both the problem of inference on the winner and the file-drawer problem. The final statement follows by applying Theorem \ref{thm:local_posi_main}, together with a Bonferroni correction over $\widehat \Gamma_\nu^+$.

\subsection{Proof of Corollary \ref{cor:linear_reg}}

We argue that $\widehat \V_\nu^+$ is the set of plausible targets $\widehat \Gamma_\nu^+$ when the acceptance region is chosen as
$$A_\nu(\mu) = \left\{y:\|X^\top y - X^\top \mu\|_\infty \leq q^{\nu}(\{X_j\}_{j=1}^d) \right\}.$$
After this is established, the result follows directly from Theorem \ref{thm:local_posi_main}.

First, the validity of $A_\nu(\mu)$ follows by the definition of $q^{\nu}$:
$$P_\mu\left\{\|X^\top y - X^\top \mu\|_\infty > q^{\nu}(\{X_j\}_{j=1}^d) \right\} = P_0\left\{\max_{j\in[d]} |X_j^\top Z| > q^{\nu}(\{X_j\}_{j=1}^d) \right\} \leq \nu,$$
where $Z\sim P_0$. Therefore, $P_\mu\{y\in A_\nu(\mu)\} \geq 1-\nu$.

We can write
$$\Gamma_\nu(\mu) = \cup_{y:\|X^\top y-X^\top \mu\|_\infty \leq q^{\nu}(\{X_j\}_{j=1}^d) } \widehat \Gamma(y),$$
and thus
\begin{align*}
\widehat\Gamma_\nu^+ &= \cup_{\mu:\|X^\top y-X^\top \mu\|_\infty \leq q^{\nu}(\{X_j\}_{j=1}^d) } \Gamma_\nu(\mu)\\
&= \cup_{\mu:\|X^\top y-X^\top \mu\|_\infty \leq q^{\nu}(\{X_j\}_{j=1}^d) } \cup_{y':\|X^\top y'-X^\top \mu\|_\infty \leq q^{\nu}(\{X_j\}_{j=1}^d) } \widehat\Gamma(y')\\
&= \cup_{y':\|X^\top y-X^\top y'\|_\infty \leq 2q^{\nu}(\{X_j\}_{j=1}^d) } \widehat\Gamma(y').
\end{align*}
Since $\widehat\Gamma(y) = \left\{(j,\hat M(y)) : j\in\hat M(y)\right\}$, we finally have
\begin{align*}
\widehat\Gamma_\nu^+ &= \cup_{y':\|X^\top y-X^\top y'\|_\infty \leq 2q^{\nu}(\{X_j\}_{j=1}^d) } \left\{(j,\hat M(y')) : j\in\hat M(y')\right\}\\
&= \left\{(j,M) : j\in M, M \in \widehat\M_\nu^+\right\}.
\end{align*}
Therefore, by Theorem \ref{thm:local_posi_main} it suffices to take a simultaneous correction over $\widehat \Gamma_\nu^+$. The set $\widehat\V^+$ is the set of contrasts that ensures simultaneously valid inference for $\{\theta_{j\cdot M}\}_{M\in\widehat\M_\nu^+}$ (see, e.g., Theorem 4.1 in~\cite{berk2013valid}).

\subsection{Proof of Lemma \ref{lemma:safe_Mc}}

First, we argue that the condition $|X_j^\top (y - X_M\beta_{(M,s)}(y))| < \lambda - s_\nu (1 + \|X_j^\top X_M(X_M^\top X_M)^{-1}\|_1)$ is equivalent to
$$\max_{y'\in \B^\infty_\nu} |X_j^\top (y' - X_M\beta_{(M,s)}(y'))| < \lambda.$$
This follows because
\begin{align*}
&\max_{y'\in \B^\infty_\nu} |X_j^\top (y' - X_M\beta_{(M,s)}(y'))|\\
&= \max_{y'\in \B^\infty_\nu} \left|X_j^\top (y - X_M\beta_{(M,s)}(y)) + X_j^\top \left(y' - y - X_M\beta_{(M,s)}(y') + X_M\beta_{(M,s)}(y)\right)\right|\\
&= \max_{y'\in \B^\infty_\nu} \left|X_j^\top (y - X_M\beta_{(M,s)}(y)) + X_j^\top \left(I - X_M(X_M^\top X_M)^{-1}X_M^\top\right)(y' - y)\right|.
\end{align*}
Now notice that by the definition of $\B^\infty_\nu$ we have
\begin{align*}
\max_{y'\in \B^\infty_\nu} X_j^\top (I - X_M(X_M^\top X_M)^{-1}X_M^\top)(y' - y) &= \max_{|z_j|\leq s_\nu, \|z_M\|_\infty \leq s_\nu} z_j - X_j^\top X_M(X_M^\top X_M)^{-1} z_M\\
&= s_\nu + s_\nu \|X_j^\top X_M(X_M^\top X_M)^{-1}\|_1,
\end{align*}
which follows by the duality between the $\ell_1$- and $\ell_\infty$-norms. Similarly we have
$$\min_{y'\in \B^\infty_\nu} X_j^\top (I - X_M(X_M^\top X_M)^{-1}X_M^\top)(y' - y) = -s_\nu - s_\nu \|X_j^\top X_M(X_M^\top X_M)^{-1}\|_1.$$
Thus, we see that
\begin{align*}
&\max_{y'\in \B^\infty_\nu} \left|X_j^\top (y - X_M\beta_{(M,s)}(y)) + X_j^\top (I - X_M(X_M^\top X_M)^{-1}X_M^\top)(y' - y)\right|\\
&\quad = \left|X_j^\top (y - X_M\beta_{(M,s)}(y))\right| + s_\nu\left(1 + \|X_j^\top X_M(X_M^\top X_M)^{-1}\|_1\right)
\end{align*}
Putting everything together, we have shown that
\begin{align*}
\max_{y'\in \B^\infty_\nu} |X_j^\top (y' - X_M\beta_{(M,s)}(y'))| = \left|X_j^\top (y - X_M\beta_{(M,s)}(y))\right| + s_\nu\left(1 + \|X_j^\top X_M(X_M^\top X_M)^{-1}\|_1\right).
\end{align*}
Therefore, the screening rule in Lemma \ref{lemma:safe_Mc} is equivalent to
\begin{equation}
\label{eq:safe_Mc}
\max_{y'\in \B^\infty_\nu} \left|X_j^\top (y' - X_M\beta_{(M,s)}(y'))\right| < \lambda.
\end{equation}

We argue that the condition in Eq.~\eqref{eq:safe_Mc} implies that there cannot exist a pair $(M',s')\in\B(M,s)$ such that $M' = M\cup \{j\}$. Indeed, if this were true, then there must exist a point $y'\in\B_\nu^\infty$ on the boundary between the two corresponding polyhedra. Given the polyhedral characterization of \citet{lee2016exact}, this point must satisfy
$$X_j^\top (I-X_M(X_M^\top X_M)^{-1}X_M^\top )y' = \lambda (1-X_j^\top(X_M^\top)^+s).$$
By rearranging, we see that this equality is equivalent to 
\begin{equation}
\label{eq:boundary_condition}
X_j^\top \left(y' - X_M(X_M^\top X_M)^{-1}(X_M^\top y' - \lambda s)\right) = \lambda.
\end{equation}
The left-hand side is equal to $X_j^\top (y' - X_M\beta_{(M,s)}(y'))$; therefore, condition \eqref{eq:boundary_condition} contradicts condition \eqref{eq:safe_Mc}, and thus we can conclude that $M\cup \{j\}$ cannot be the model of any neighboring model-sign pair $(M',s')\in\B(M,s)$.

\subsection{Proof of Lemma \ref{lemma:safe_M}}

The proof proceeds similarly to the proof of Lemma \ref{lemma:safe_Mc}. Fix $j\in M$. First, we argue that the condition $|\beta_{j\cdot (M,s)}(y)| > s_\nu \|e_{j\cdot (M,s)}^\top (X_M^\top X_M)^{-1}\|_1$ is equivalent to
\begin{equation}
\label{eq:safe_M_equiv}
\min_{y'\in\B^\infty_{\nu}} |\beta_{j\cdot (M,s)}(y')| >0.
\end{equation}
This follows by writing
\begin{align*}
\min_{y'\in\B^\infty_{\nu}} |\beta_{j\cdot (M,s)}(y')| &= \min_{y'\in\B^\infty_{\nu}} |\beta_{j\cdot (M,s)}(y) + \beta_{j\cdot (M,s)}(y') - \beta_{j\cdot (M,s)}(y)|\\
&= \min_{y'\in\B^\infty_{\nu}} |\beta_{j\cdot (M,s)}(y) + e_{j\cdot (M,s)}^\top (X_M^\top X_M)^{-1}X_M^\top (y' - y)|.
\end{align*}
By the definition of $\B^\infty_\nu$, we can write
\begin{align*}
\max_{y'\in\B^\infty_{\nu}} e_{j\cdot (M,s)}^\top (X_M^\top X_M)^{-1}X_M^\top (y' - y) = \max_{\|z_M\|_\infty \leq s_\nu} e_{j\cdot (M,s)}^\top (X_M^\top X_M)^{-1} z_M = s_\nu \|e_{j\cdot (M,s)}^\top (X_M^\top X_M)^{-1}\|_1,
\end{align*}
and similarly $\min_{y'\in\B^\infty_{\nu}} e_{j\cdot (M,s)}^\top (X_M^\top X_M)^{-1}X_M^\top (y' - y) = - s_\nu \| e_{j\cdot (M,s)}^\top(X_M^\top X_M)^{-1}\|_1$. Putting everything together, we see that $\min_{y'\in\B^\infty_{\nu}} |\beta_{j\cdot (M,s)}(y')| >0$ implies $s_\nu \| e_{j\cdot (M,s)}^\top(X_M^\top X_M)^{-1}\|_1 < |\beta_{j\cdot (M,s)}(y)|$, and vice versa.

Now we argue that condition \eqref{eq:safe_M_equiv} implies that variable $j$ cannot exit the model in any of the neighboring polyhedra within $\B^\infty_\nu$. If it can, then the \citet{lee2016exact} characterization implies that there exists a point $y'\in\B^\infty_\nu$ on the boundary between the respective polyhedra such that $e_{j\cdot(M,s)}^\top (X_M^\top X_M)^{-1} y' = \lambda e_{j\cdot(M,s)}^\top (X_M^\top X_M)^{-1}s$. By rearranging, we can rewrite this equality as $e_{j\cdot (M,s)}^\top \beta_{j\cdot (M,s)}(y') = 0$, which contradicts condition \eqref{eq:safe_M_equiv}.

\subsection{Proof of Theorem \ref{thm:risk_minimizer}}

Let
$$A_\nu(P) = \left\{\D: | R_n(f,\D) - R(f,P)| \leq \texttt{Gap}_n(\F) + \sqrt{\frac{2}{n} \log\left(\frac{1}{\nu}\right)},~\forall f\in\F \right\}.$$
By the bounded differences inequality, we know
$$P\{\D \in A_\nu(P)\} \geq 1-\nu.$$
To simplify notation, in the rest of the proof we let $G = \texttt{Gap}_n(\F) + \sqrt{\frac{2}{n} \log\left(\frac{1}{\nu}\right)}$. 

The selected target is $ \hat f \equiv \hat f(\D) = \argmin_{f\in\F}  R_n(f,\D)$ and thus $\Gamma_\nu(P) = \cup_{\D:\D\in A_\nu(P)} \hat f(\D)$. By a similar argument as in Theorem \ref{thm:gauss_seq_winner}, we have
\begin{align*}
    \widehat \Gamma_\nu^+ &= \cup_{P:\D\in A_\nu(P) } \Gamma_\nu(P)\\
    &= \cup_{P:\D\in A_\nu(P) } \cup_{\D':\D'\in A_\nu(P)} \hat f(\D')\\
    &= \cup_{P:\max_{f\in\F}| R_n(f,\D) - R(f,P)| \leq G} \cup_{\D':\max_{f\in\F}| R_n(f,\D') - R(f,P)| \leq G} \hat f(\D')\\
    &\subseteq \cup_{\D':\max_{f\in\F}| R_n(f,\D') -  R_n(f,\D)| \leq 2G} \hat f(\D').
\end{align*}
Therefore, we only need to consider empirical risk minimizers on datasets $\D'$ such that $\max_{f\in\F}| R_n(f,\D') -  R_n(f,\D)| \leq 2G$. The set of those risk minimizers is contained in the set of hypotheses with empirical risk within $4G$ of $\min_{f\in\F}  R_n(f,\D)$, i.e.
\begin{align*}
\widehat \Gamma_\nu^+ &\subseteq \{f\in\F:  R_n(f,\D) \leq  R_n(\hat f,\D) + 4G\}\\
&= \left\{f\in\F:  R_n(f,\D) \leq  R_n(\hat f,\D) + 4\texttt{Gap}_n(\F) + 4\sqrt{\frac{2}{n} \log\left(\frac{1}{\nu}\right)}\right\}.
\end{align*}
Finally, invoking the bounded differences inequality together with Theorem \ref{thm:local_posi_main} completes the proof.

\section{Deferred analyses and results}

\subsection{Example in Figure \ref{fig:intro_comparison}}
\label{app:figure1}

In Figure \ref{fig:intro_comparison}, we plot the quantiles of interval width computed over $100$ trials.

We use the polyhedral method of \citet{lee2016exact} to construct the conditional confidence intervals. The simultaneous confidence intervals are constructed as usual: if $q^\alpha_{\mathrm{sim}}$ is the $1-\alpha$ quantile of $\max_{i\in\{1,2\}}|Z_i|$, where $Z_1,Z_2\stackrel{\mathrm{i.i.d.}}{\sim} \mathcal{N}(0,1)$, then the selective intervals are $(y_{\hat \gamma} \pm q^\alpha_{\mathrm{sim}})$. In what follows we derive locally simultaneous intervals. We set $\alpha=0.1$ and $\nu = 0.05\alpha$.

Let $\Delta_\mu = \mu_2 - \mu_1$ and $A_\nu(\mu) = \{(y_1,y_2): |y_2 - y_1 - \Delta_\mu| \leq \sqrt{2}q^{\nu} \}$, where $q^{\nu}$ is the $1-\nu$ quantile of $|Z|\sim\mathcal N(0,1)$. First we show that $A_\nu(\mu)$ is a valid acceptance region:
\begin{align*}
    P_\mu\{y \in A_\nu(\mu)\} &= P_\mu\left\{|y_2 - y_1 - \Delta_\mu| \leq \sqrt{2}q^{\nu} \right\}\\
    &= P_0\left\{|Z_2 - Z_1| \leq \sqrt{2}q^{\nu}\right\}\\
    &= 1-\nu,
\end{align*}
where $Z_1,Z_2\stackrel{\mathrm{i.i.d.}}{\sim} \mathcal N(0,1)$. Therefore, $A_\nu(\mu)$ is valid. Now it remains to find $\widehat\Gamma^+_\nu$:
\begin{align*}
\widehat\Gamma_\nu^+ &= \cup_{\mu:|y_2-y_1 - \Delta_\mu|\leq \sqrt{2}q^{\nu}} \Gamma_\nu(\mu)\\
&= \cup_{\mu:|y_2-y_1 - \Delta_\mu|\leq \sqrt{2}q^{\nu}} \cup_{y':|y_2'-y_1' - \Delta_\mu|\leq \sqrt{2}q^{\nu}} \hat \gamma(y')\\
&= \cup_{y':|y_2'-y_1' - y_2 - y_1|\leq 2\sqrt{2}q^{\nu}} \hat \gamma(y').
\end{align*}
If $|y_2 - y_1|\leq 2\sqrt{2}q^{\nu}$, then $\widehat\Gamma^+_\nu = \{1,2\}$; otherwise, the selection is ``obvious'' and $\widehat\Gamma_\nu^+ = \{\hat \gamma\}$. Finally, we construct the locally simultaneous intervals as $(y_{\hat \gamma} \pm \min\{q_{\mathrm{sim}}^{\alpha-\nu}(|\widehat\Gamma^+_\nu|), q_{\mathrm{sim}}^{\alpha} \})$, where $q_{\mathrm{sim}}^{\alpha-\nu}(k)$ is the $1-(\alpha-\nu)$ quantile of $\max_{i\in [k]} |Z_i|$, $Z_i\stackrel{\mathrm{i.i.d.}}{\sim}\mathcal N(0,1)$. The validity of this choice follows by Lemma \ref{lemma:local_posi_tighter}. In particular, $\tilde C_{\gamma\cdot\Gamma'} = (y_{\gamma} \pm \min\{q_{\mathrm{sim}}^{\alpha-\nu}(|\Gamma'|), q_{\mathrm{sim}}^{\alpha} \})$ satisfies Eq.~\eqref{eq:C-til} by the same argument as in Theorem \ref{thm:refined}, by observing that $\{y\in A_\nu(\mu)\}$ is implied by $\|y-\mu\|_\infty \leq q^\alpha_{\mathrm{sim}}$ for the considered values of $\alpha$ and $\nu$.



\subsection{Example \ref{ex:chi}}
\label{app:chi_section}

We use a slight refinement of Lemma \ref{lemma:local_posi_tighter} that can sometimes lead to a tighter correction at no cost under additional assumptions on the interval construction. As in Theorem \ref{thm:refined}, we assume that the confidence intervals are centered around an estimator $\hat\theta_\gamma$ and have width proportional to a standard error term $\hat \sigma_\gamma$: intervals taken simultaneously over a set of targets $\Gamma$ are of the form $C_{\gamma\cdot \Gamma} = \left(\hat\theta_\gamma \pm q_{\Gamma}^\alpha \hat \sigma_\gamma\right)$, for some appropriately chosen width $q_{\Gamma}^\alpha$.

We note that, for almost all choices of $A_\nu(P)$ used in the paper, the refined correction just reduces to the correction of Theorem \ref{thm:local_posi_main}, although it will prove useful in the context of Example \ref{ex:chi}.

\begin{proposition}
\label{prop:local_posi_symmetric}
Fix $\alpha\in(0,1)$ and $\nu,\tau$ such that $\nu + \tau\in(0,\alpha)$. Let $\hat q$ be any value such that
$$P\left\{\hat q \geq q_{\Gamma_\nu(P)}^{(\alpha-\nu - \tau)}, y \in A_\nu(P)\right\} \geq 1-\tau.$$
For example, $\hat q = \sup_{P'\in B_\nu(y)} q_{\Gamma_\nu(P')}^{(\alpha-\nu)}$ satisfies the above with $\tau = 0$.
Then, it holds that
\[P\left\{\theta_\gamma \in \left(\hat \theta_\gamma \pm \hat q \hat\sigma_\gamma \right), \forall \gamma \in \widehat\Gamma\right\} \geq 1-\alpha.\]
\end{proposition}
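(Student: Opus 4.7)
The plan is to follow the two-step decomposition strategy of the proof of Theorem~\ref{thm:local_posi_main}, but replace the deterministic nestedness step with a probabilistic argument that exploits the assumed symmetric product form of the confidence intervals.

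First I would define the ``good event''
$$E \defn \{y \in A_\nu(P)\} \cap \left\{\hat q \geq q_{\Gamma_\nu(P)}^{(\alpha-\nu-\tau)}\right\},$$
which by the assumption on $\hat q$ satisfies $P(E) \geq 1-\tau$. Splitting the miscoverage probability according to $E$ yields
$$P\{\exists \gamma \in \widehat\Gamma : \theta_\gamma \notin (\hat\theta_\gamma \pm \hat q\hat\sigma_\gamma)\} \leq P\{\exists \gamma \in \widehat\Gamma : \theta_\gamma \notin (\hat\theta_\gamma \pm \hat q\hat\sigma_\gamma), E\} + \tau,$$
so that only the first term requires further analysis.

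Next I would chain two inclusions on $E$. On $\{y \in A_\nu(P)\}$, equivalently $\{P \in B_\nu(y)\}$, the same set-inclusion argument used in the proof of Theorem~\ref{thm:local_posi_main} yields $\widehat\Gamma \subseteq \Gamma_\nu(P)$; and on $\{\hat q \geq q_{\Gamma_\nu(P)}^{(\alpha-\nu-\tau)}\}$, the symmetric product form of the intervals immediately gives $(\hat\theta_\gamma \pm \hat q\hat\sigma_\gamma) \supseteq C_{\gamma\cdot \Gamma_\nu(P)}^{(\alpha-\nu-\tau)}$ pointwise in $\gamma$. Together, these inclusions show the miscoverage event on $E$ is contained in the fixed-set miscoverage event $\{\exists \gamma \in \Gamma_\nu(P) : \theta_\gamma \notin C_{\gamma\cdot \Gamma_\nu(P)}^{(\alpha-\nu-\tau)}\}$, whose probability is at most $\alpha - \nu - \tau$ by validity of the simultaneous intervals. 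Summing the two contributions, total miscoverage is bounded by $\tau + (\alpha - \nu - \tau) = \alpha - \nu \leq \alpha$, as claimed.

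Finally I would verify the stated example: for $\hat q = \sup_{P'\in B_\nu(y)} q_{\Gamma_\nu(P')}^{(\alpha-\nu)}$, whenever $y \in A_\nu(P)$ we have $P \in B_\nu(y)$, so the supremum deterministically dominates $q_{\Gamma_\nu(P)}^{(\alpha-\nu)}$ on that event, allowing $\tau = 0$. I do not expect any serious obstacle; the only subtlety is that the ``wider quantile implies larger interval'' step must be rigorously justified by the symmetric product form $(\hat\theta_\gamma \pm q \hat\sigma_\gamma)$, and in particular does not require invoking Assumption~\ref{ass:nested}, which the proposition notably does not list as a hypothesis.
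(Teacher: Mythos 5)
Your proof is correct and follows essentially the same route as the paper's: condition on a high-probability event on which $\widehat\Gamma\subseteq\Gamma_\nu(P)$ and $\hat q$ dominates the fixed quantile $q_{\Gamma_\nu(P)}^{(\alpha-\nu-\tau)}$, reduce to the fixed-set simultaneous coverage statement over $\Gamma_\nu(P)$, and add up the error budgets; the paper merely performs the split in two stages (first on $\{y\in A_\nu(P)\}$, then on the $\hat q$ event) rather than on the single joint event $E$. You are also right that it is the symmetric interval form, not Assumption~\ref{ass:nested}, that justifies the interval-inclusion step here.
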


\begin{proof}

Following the proof of Lemma \ref{lemma:local_posi_tighter} and specializing it to the symmetric-interval setting gives
\begin{equation}
\label{eq:miscoverage_symmetric}
P\left\{\exists \gamma \in \widehat\Gamma : \theta_\gamma \not\in \left(\hat\theta_\gamma \pm \hat q \hat\sigma_\gamma \right) \right\}  \leq P\left\{\exists \gamma \in \widehat\Gamma : \theta_\gamma \not\in \left(\hat\theta_\gamma \pm \hat q \hat\sigma_\gamma \right), y\in A_\nu(P) \right\} + \nu.
\end{equation}
On the event $\{y\in A_\nu(P)\} = \{P\in B_\nu(y)\}$, we have $\Gamma_\nu(P)\supseteq \widehat\Gamma$. Therefore,
\begin{align*}
&P\left\{\exists \gamma \in \widehat\Gamma : \theta_\gamma \not\in \left(\hat\theta_\gamma \pm \hat q \hat\sigma_\gamma \right), y\in A_\nu(P) \right\}\\ &\leq P\left\{\exists \gamma \in \widehat\Gamma : \theta_\gamma \not\in \left(\hat\theta_\gamma \pm q_{\Gamma_\nu(P)}^{(\alpha-\nu - \tau)}\hat\sigma_\gamma \right), \hat q \geq q_{\Gamma_\nu(P)}^{(\alpha-\nu - \tau)}, y\in A_\nu(P) \right\} + \tau\\
& \leq P\left\{\exists \gamma \in \Gamma_\nu(P) : \theta_\gamma \not\in \left(\hat\theta_\gamma \pm q_{\Gamma_\nu(P)}^{(\alpha-\nu - \tau)}\hat\sigma_\gamma \right), y\in A_\nu(P) \right\} + \tau\\
& \leq P\left\{\exists \gamma \in \Gamma_\nu(P) : \theta_\gamma \not\in \left(\hat\theta_\gamma \pm q_{\Gamma_\nu(P)}^{(\alpha-\nu - \tau)}\hat\sigma_\gamma \right) \right\} + \tau.
\end{align*}
By the simultaneous validity of the confidence intervals, the final probability is at most $\alpha-\nu$. Going back to Eq.~\eqref{eq:miscoverage_symmetric}, we thus get
$P\left\{\exists \gamma \in \widehat\Gamma : \theta_\gamma \not\in \left(\hat\theta_\gamma \pm \hat q \hat\sigma_\gamma \right) \right\} \leq \alpha$,
as desired.
\end{proof}

We now derive $\delta_d(\|y\|)$ used in Example \ref{ex:chi}. We observe $y\sim P_\mu = \mathcal{N}(\mu,I_d)$ and want to do inference on $\theta_{\hat\gamma}(\mu) = \hat\gamma^\top \mu$, where $\hat\gamma = \frac{y}{\|y\|}$. Here and in the rest of this section $\|\cdot\|$ denotes the $\ell_2$-norm.

Define
$$q_\angle^{\alpha}(\delta) = \inf\left\{q: P_0\left\{\sup_{\gamma\in \mathcal{S}^{d-1}: \angle (e_1,\gamma) \leq \delta} |\gamma^\top Z|\geq q \right\} \leq \alpha\right\},$$
where $Z\sim \cN(0,I_d)$ and $e_1$ is a canonical vector. In words, $q_\angle^{(\alpha)}(\delta)$ is the required simultaneous correction for all contrasts within angle $\delta$ of $e_1$. Note that by the rotational invariance of $Z$, $e_1$ is taken without loss of generality; any fixed vector would give the same $q^\alpha_\angle(\delta)$. We let $F_{\mathrm{nc}\chi^2_d(c)}(\cdot)$ and $F_{\mathrm{nct}_{d-1}(c)}(\cdot)$ denote the CDFs of the noncentral $\chi_d^2$-distribution with noncentrality parameter $c$ and the noncentral $t$-distribution with $d-1$ degrees of freedom and noncentrality parameter $c$, respectively.

\begin{proposition}
\label{prop:chi-squared}
Fix $\alpha\in(0,1)$ and $\nu\in(0,\alpha)$. Then,
$$P_\mu\left\{\hat\gamma^\top \mu \in \left(\|y\| \pm q^{(\alpha-\nu)}_\angle\left(\delta_d(\|y\|)\right)\right)\right\} \geq 1- \alpha,$$
where
\begin{align*}
\delta_d(\|y\|) &= 2\arccos s_{\nu/2}\left( \mu_{\nu/2}^{\mathrm{LB}}\right);\\
\mu^{\mathrm{LB}}_\tau &= \sqrt{\sup\left\{c: F_{\mathrm{nc}\chi^2_d(c)}(\|y\|^2) \geq 1-\tau \right\}};\\
s_{\tau}(c) &= \sqrt{\frac{1}{(d-1)/(F_{\mathrm{nct}_{d-1}(c)}^{-1}(\tau))^2 + 1}} \cdot \mathrm{sgn}\left(F_{\mathrm{nct}_{d-1}(c)}^{-1}(\tau)\right),
\end{align*}
for all $\tau\in(0,1)$.
\end{proposition}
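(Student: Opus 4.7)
The plan is to apply Theorem~\ref{thm:local_posi_main} with an acceptance region built from two rotationally invariant statistics: the magnitude $\|y\|^2 \sim \mathrm{nc}\chi_d^2(\|\mu\|^2)$, and the angle $\angle(y,\mu)$. For the latter, writing $u = \mu/\|\mu\|$ and decomposing $y$ into its components along and orthogonal to $u$, one checks $\cos\angle(y,\mu) = T/\sqrt{T^2 + d-1}$ with $T = \sqrt{d-1}\,u^\top y/\|y - (u^\top y)u\| \sim \mathrm{nct}_{d-1}(\|\mu\|)$; since $x\mapsto x/\sqrt{x^2+d-1}$ is strictly increasing on $\R$, the formula for $s_\tau(c)$ in the statement is exactly the $\tau$-quantile of $\cos\angle(y,\mu)$ under $\|\mu\| = c$, so $P_\mu\{\cos\angle(y,\mu)\ge s_\tau(\|\mu\|)\} = 1-\tau$. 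I would then set
\[A_\nu(\mu) := \left\{y : \|y\|^2 \le F_{\mathrm{nc}\chi_d^2(\|\mu\|^2)}^{-1}(1-\nu/2)\text{ and }\cos\angle(y,\mu)\ge s_{\nu/2}(\|\mu\|)\right\},\]
and conclude $P_\mu\{y\in A_\nu(\mu)\}\ge 1-\nu$ by a union bound.

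Next I would invert. Because $c\mapsto F_{\mathrm{nc}\chi_d^2(c)}(x)$ is strictly decreasing, the first event of $A_\nu(\mu)$ is equivalent to $\|\mu\|\ge \mu^{\mathrm{LB}}_{\nu/2}$, hence
\[B_\nu(y) = \left\{\mu : \|\mu\|\ge \mu^{\mathrm{LB}}_{\nu/2}\text{ and }\cos\angle(y,\mu)\ge s_{\nu/2}(\|\mu\|)\right\}.\]
For any $\mu' \in B_\nu(y)$ and any $y' \in A_\nu(\mu')$, both $\angle(y,\mu')$ and $\angle(y',\mu')$ are at most $\arccos s_{\nu/2}(\|\mu'\|)$, so the spherical triangle inequality yields $\angle(y,y') \le 2\arccos s_{\nu/2}(\|\mu'\|)$. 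The map $c\mapsto s_{\nu/2}(c)$ is nondecreasing (the $\tau$-quantile of $\mathrm{nct}_{d-1}(c)$ increases in $c$, and $x\mapsto x/\sqrt{x^2+d-1}$ is increasing), so $\|\mu'\|\ge \mu^{\mathrm{LB}}_{\nu/2}$ gives the uniform bound $\arccos s_{\nu/2}(\|\mu'\|) \le \arccos s_{\nu/2}(\mu^{\mathrm{LB}}_{\nu/2})$. Since $\hat\gamma(y') = y'/\|y'\|$ preserves angles, I conclude
\[\widehat\Gamma_\nu^+ \subseteq \left\{\gamma\in\mathcal{S}^{d-1} : \angle(\gamma,\hat\gamma)\le \delta_d(\|y\|)\right\}.\]

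Finally I would apply Theorem~\ref{thm:local_posi_main} to the nested family $\ci_{\gamma\cdot\Gamma'}^\alpha := (\gamma^\top y \pm q(\Gamma',\alpha))$, where $q(\Gamma',\alpha)$ is the $(1-\alpha)$-quantile of $\sup_{\gamma\in\Gamma'}|\gamma^\top Z|$ for $Z\sim\cN(0,I_d)$. Each such region is simultaneously valid for its fixed index set and the family is monotone under set containment, so the theorem produces an interval of half-width $q(\widehat\Gamma_\nu^+,\alpha-\nu)$. Pointwise nestedness then lets me replace $\widehat\Gamma_\nu^+$ by the larger spherical cap $\{\gamma:\angle(\gamma,\hat\gamma)\le \delta_d(\|y\|)\}$, and rotational invariance of $\cN(0,I_d)$ implies $q(\{\gamma:\angle(\gamma,c)\le\delta\},\alpha)$ depends on $(c,\delta)$ only through $\delta$, giving exactly $q_\angle^\alpha(\delta)$. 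Combining with $\hat\gamma^\top y = \|y\|$ yields the claimed interval.

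The main obstacle is keeping the monotonicity directions straight: the noncentral $\chi^2$ CDF decreases in the noncentrality (so that the definition of $\mu^{\mathrm{LB}}_{\nu/2}$ really gives a lower confidence bound on $\|\mu\|$), whereas the noncentral-$t$ quantile and the induced $s_\tau(c)$ increase in the noncentrality (so that the worst-case angular bound in $B_\nu(y)$ is attained at $\mu^{\mathrm{LB}}_{\nu/2}$). A secondary subtlety is that the simultaneous quantile $q(\{\gamma:\angle(\gamma,\hat\gamma)\le\delta_d(\|y\|)\},\alpha-\nu)$ involves a data-dependent center; it is important to note that this quantile is computed with respect to an auxiliary Gaussian $Z$, and by rotational invariance equals $q_\angle^{(\alpha-\nu)}(\delta_d(\|y\|))$ no matter what random center one plugs in.
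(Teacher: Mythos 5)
Your proof is correct, but it follows a genuinely different route from the paper's. The paper does not invoke Theorem~\ref{thm:local_posi_main} directly: it proves and uses the refinement in Proposition~\ref{prop:local_posi_symmetric}, taking the acceptance region to be the angular event alone, $A_\tau(\mu)=\{y:\angle(y,\mu)\leq \arccos s_\tau(\|\mu\|)\}$, so that the simultaneous correction is only needed over the single \emph{fixed} cap $\Gamma_{\nu/2}(\mu)$ of half-angle $\arccos s_{\nu/2}(\|\mu\|)$ centered at the true $\mu$; the noncentral-$\chi^2$ lower confidence bound $\mu^{\mathrm{LB}}_{\nu/2}$ then enters not through the acceptance region but as the separate high-probability upper bound $\hat q = q_\angle^{(\alpha-\nu)}(\arccos s_{\nu/2}(\mu^{\mathrm{LB}}_{\nu/2}))$ on the required quantile, spending the remaining $\nu/2$ there. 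You instead build a two-part acceptance region (norm plus angle, union-bounded at $\nu/2$ each), invert it, and carry out the full double union $\widehat\Gamma_\nu^+$, with the spherical triangle inequality producing a cap of half-angle $2\arccos s_{\nu/2}(\mu^{\mathrm{LB}}_{\nu/2})$ around $\hat\gamma$. Both arguments are valid and rest on the same three monotonicity facts (noncentral-$\chi^2$ CDF decreasing in the noncentrality, noncentral-$t$ quantile and hence $s_\tau(\cdot)$ increasing, and $x\mapsto x/\sqrt{x^2+d-1}$ increasing, the last of which also lets you skip the paper's sign case analysis for $s_\tau$). The trade-off: your derivation is self-contained within Theorem~\ref{thm:local_posi_main} and reproduces the stated radius $\delta_d(\|y\|)=2\arccos s_{\nu/2}(\mu^{\mathrm{LB}}_{\nu/2})$ exactly, including the factor of $2$; the paper's route via Proposition~\ref{prop:local_posi_symmetric} actually establishes the sharper half-angle $\arccos s_{\nu/2}(\mu^{\mathrm{LB}}_{\nu/2})$ \emph{without} the factor of $2$ (the stated $\delta_d$ is conservative relative to what the paper's own proof yields), at the cost of needing the auxiliary proposition and a more delicate accounting of the error budget. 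Your handling of the data-dependent cap center via rotational invariance of the auxiliary Gaussian is the right way to close that step.
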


\begin{proof}
We can write
\begin{equation}
\label{eq:yprodmu}
\frac{y^\top \mu}{\|y\| \|\mu\|} = \frac{y^\top \frac{\mu}{\|\mu\|}}{\sqrt{(y^\top \frac{\mu}{\|\mu\|})^2 + (d-1)S^2 }},
\end{equation}
where we denote $(d-1)S^2 = \|y\|^2 - (y^\top \frac{\mu}{\|\mu\|})^2$. By a standard argument,
we have $(d-1) S^2 \sim \chi^2_{d-1}$ and $y^\top \frac{\mu}{\|\mu\|}\perp S^2$. If we therefore let $T = \frac{y^\top \frac{\mu}{\|\mu\|}}{S}$, we have that $T$ follows a noncentral $t$-distribution with $d-1$ degrees of freedom and noncentrality parameter $\|\mu\|$.

Going back to Eq.~\eqref{eq:yprodmu}, we have shown the equality:
\begin{equation}
\label{eq:yprodmu_T}
\frac{y^\top \mu}{\|y\| \|\mu\|} = \frac{T}{\sqrt{T^2 + d-1}}.
\end{equation}

We now use the distribution of $T$ to find a high-probability lower bound on $\frac{y^\top \mu}{\|y\| \|\mu\|}$. 

Let
\begin{equation}
\label{eq:snu_mu_def}
s_{\tau}(\|\mu\|) = \sqrt{\frac{1}{(d-1)/q^2_{\mathrm{nct}_{d-1}(\|\mu\|)}(\tau) + 1}} \cdot \mathrm{sgn}\left(q_{\mathrm{nct}_{d-1}(\|\mu\|)}(\tau)\right),
\end{equation}
where we let $q_{\mathrm{nct}_{d-1}(\|\mu\|)}(\tau) = F_{\mathrm{nct}_{d-1}(\|\mu\|)}^{-1}(\tau)$.

We argue that
\begin{equation}
\label{eq:yprodmu_snu}
    P_\mu\left\{\frac{y^\top \mu}{\|y\| \|\mu\|} \leq s_\tau(\|\mu\|) \right\} \leq \tau.
\end{equation}
First note that by Eq.~\eqref{eq:yprodmu_T} we have
\begin{align*}
    P_\mu\left\{\frac{y^\top \mu}{\|y\| \|\mu\|} \leq s_\tau(\|\mu\|) \right\} &= P_\mu\left\{\frac{T}{\sqrt{T^2 + d - 1} } \leq s_\tau(\|\mu\|) \right\}.
\end{align*}
Now we split the analysis into two cases. First consider the case when $q_{\mathrm{nct}_{d-1}(\|\mu\|)}(\tau) <0$; then we have
\begin{align*}
    P_\mu\left\{\frac{T}{\sqrt{T^2 + d - 1} } \leq s_\tau(\|\mu\|) \right\}
    &= P_\mu\left\{\frac{T^2}{T^2 + d - 1 } \geq s_\tau^2(\|\mu\|), T < 0 \right\} = P_\mu\left\{T^2 \geq \frac{s_\tau^2(\|\mu\|)(d-1)}{1 - s_\tau^2(\|\mu\|)}, T < 0 \right\} \\
    &= P_\mu\left\{T \leq - \frac{|s_\tau(\|\mu\|)|\sqrt{d-1}}{\sqrt{1 - s_\tau^2(\|\mu\|)}} \right\} = P_\mu\left\{T \leq - |q_{\mathrm{nct}_{d-1}(\|\mu\|)}(\tau)| \right\}\\
    &= P_\mu\left\{T \leq q_{\mathrm{nct}_{d-1}(\|\mu\|)}(\tau) \right\} = \tau.
\end{align*}
When $q_{\mathrm{nct}_{d-1}(\|\mu\|)}(\tau) \geq 0$, we have
\begin{align*}
    P_\mu\left\{\frac{T}{\sqrt{T^2 + d - 1} } \leq s_\tau(\|\mu\|) \right\}
    &= P_\mu\left\{\frac{T^2}{T^2 + d - 1 } \leq s_\tau^2(\|\mu\|) \right\} + P_\mu\left\{\frac{T^2}{T^2 + d - 1 } > s_\tau^2(\|\mu\|), T \leq 0 \right\}\\
    &= P_\mu\left\{T^2 \leq \frac{s_\tau^2(\|\mu\|)(d-1)}{1 - s_\tau^2(\|\mu\|)} \right\} + P_\mu\left\{T^2 > \frac{s_\tau^2(\|\mu\|)(d-1)}{1 - s_\tau^2(\|\mu\|)}, T \leq 0 \right\}\\
    &= P_\mu\left\{T \leq \frac{s_\tau(\|\mu\|)\sqrt{d-1}}{\sqrt{1 - s_\tau^2(\|\mu\|)}} \right\} = P_\mu\left\{T \leq q_{\mathrm{nct}_{d-1}(\|\mu\|)}(\tau) \right\} = \tau.
\end{align*}
Putting everything together, we have proved \eqref{eq:yprodmu_snu}. Therefore, for any $\tau$, it is valid to take the acceptance region $A_\tau(\mu)$ to be:
$$A_\tau(\mu) = \left\{y: \frac{y^\top \mu}{\|y\| \|\mu\|} \geq s_{\tau}(\| \mu\|) \right\} = \left\{y: \angle(y,\mu)\leq \arccos s_{\tau}(\|\mu\|) \right\},$$
where $s_\tau(\|\mu\|)$ is given in Eq.~\eqref{eq:snu_mu_def}. The set of plausible targets under $\mu$ is
$$\Gamma_\tau(\mu) = \{\gamma\in\mathcal{S}^{d-1}: \angle(\gamma,\mu)\leq \arccos s_{\tau}(\|\mu\|) \}.$$
Notice that $q^\alpha_{\Gamma_{\nu/2}(\mu)} = q_\angle^{\alpha}(\arccos s_{\tau}(\|\mu\|))$ is a valid simultaneous correction over $\Gamma_\tau(\mu)$; formally, for any $\mu$ we have
$$P_{\mu}\left\{\sup_{\gamma\in \Gamma_\tau(\mu) }|(y-\mu)^\top \gamma| \geq  q_\angle^{\alpha}(\arccos s_{\tau}(\|\mu\|)) \right\} \leq \alpha.$$
As per Proposition \ref{prop:local_posi_symmetric}, we want a high-probability upper bound on $q_\angle^{\alpha}(\arccos s_{\tau}(\|\mu\|))$. We obtain one by estimating a lower bound on $\|\mu\|$, which in turns lower bounds $s_{\tau}(\|\mu\|)$. We use $\nu/2$ of the error budget for this task. Notice that $\|y\|^2$ follows a noncentral $\chi_d^2$-distribution with noncentrality parameter $\|\mu\|^2$; let
$$\mu_{\tau}^\mathrm{LB} = \sqrt{\sup\left\{c: F_{\mathrm{nc}\chi^2_d(c)}(\|y\|^2) \geq 1-\tau \right\}};$$
by construction we then know that $P_\mu\{\|\mu\| > \mu_{\tau}^\mathrm{LB}\} \geq 1-\tau$.


Therefore, if we take $\hat q = q_\angle^{(\alpha-\nu)}\left(\arccos s_{\nu/2}(\mu_{\nu/2}^\mathrm{LB})\right)$, we have
\begin{align*}
P_\mu\left\{\hat q \leq q_{\Gamma_{\nu/2}(\mu)}^{(\alpha-\nu)}\right\} &= P_\mu\left\{q_\angle^{(\alpha-\nu)}\left(\arccos s_{\nu/2}(\mu_{\nu/2}^\mathrm{LB})\right) \leq q_\angle^{(\alpha-\nu)}\left(\arccos s_{\nu/2}(\|\mu\|)\right)\right\} = \frac{\nu}{2}.
\end{align*}
Invoking Proposition \ref{prop:local_posi_symmetric} completes the proof.
\end{proof}

\subsection{Example \ref{ex:conditional}}
\label{app:example_conditional}

We can write $y = \hat\gamma \cdot R$, where $\hat\gamma = \frac{y}{\|y\|_2}$ and $R = \|y\|_2$. Therefore, the conditional distribution of $y$ given $\{\hat \gamma = \gamma\}$ is equivalent to the one-dimensional distribution of $R$ given $\{\hat \gamma = \gamma\}$. By a change-of-variable argument, one can relate the conditional density of $R$ to the conditional density of $y$:
$$p_R(r) \propto p_y(r \gamma) r^{d-1},$$
where the term $r^{d-1}$ comes from the Jacobian of the transformation. This term follows from the standard argument that derives the $\chi$-distribution from the normal distribution. We use the $\propto$ notation to indicate that we are ignoring the normalizing constant.

Plugging in $p_y(r\gamma)$, we get
$$p_R(r) \propto \exp\left(-\frac{1}{2} (r\gamma - \mu)^\top \Sigma^{-1} (r\gamma - \mu)\right) r^{d-1} \propto \exp\left( - \frac{1}{2} r^2 \gamma^\top \Sigma^{-1}\gamma + r \mu^\top \Sigma^{-1} \gamma \right) r^{d-1}.$$
Therefore, the distribution $p_R(r)$ depends on $\mu$ only through $\mu^\top \Sigma^{-1}\gamma$.

\subsection{Locally simultaneous inference for the LASSO: details}

Before stating the subroutines of Algorithm \ref{alg:local_lasso} for safe and exact screening, we introduce the necessary notation.

We denote by $A_{0}^{+j}(M,s)$ the row in $A_{0}^+(M,s)$ corresponding to the variable $j\in M^c$. We adopt analogous definitions for $A_{0}^{-j}(M,s)$ and $A_{1}^j(M,s)$ (where in the latter case we consider $j\in M$). For $j\in M^c$, we will use $P^{\setminus \{+j\}}(M,s)$ (resp. $P^{\setminus \{-j\}}(M,s)$) to denote the polyhedron $P(M,s)$ with constraint $(A_{0}^{+j}(M,s), b_{0}^{+j}(M,s))$ (resp. $(A_{0}^{-j} (M,s), b_{0}^{-j}(M,s))$) removed. We similarly use $P^{\setminus \{j\}}(M,s)$ for $j\in M$.

We use $(M,s)^{-j}$ to denote the model-sign pair obtained by removing variable $j\in M$ and the corresponding sign. Similarly, we use $(M,s)^{+(j,+1)}$ to denote the model-sign pair obtained by adding variable $j\in M^c$ to $M$ with a positive sign. We use $(M,s)^{+(j,-1)}$ analogously, only the corresponding sign is negative.

\begin{algorithm}[H]
\SetAlgoLined
\SetKwInOut{Input}{input}
\Input{design matrix $X$, outcome vector $y$, current model-sign pair $(M,s)$}
\textbf{output: } safely screened variables $\mathcal{I}_{\mathrm{safe}}(M,s)$\newline
Compute extrapolated solution at $y$, $\beta_{(M,s)}(y) = (X_M^\top X_M)^{-1}(X_M^\top y - \lambda s)$\newline
$\mathcal{I}^+_{\mathrm{safe}}(M,s) \leftarrow \{j\in M: | \beta_{j\cdot (M,s)}(y)| > s_\nu \|e_{j\cdot (M,s)}^\top (X_M^\top X_M)^{-1}\|_1 \}$\newline
$\mathcal{I}^-_{\mathrm{safe}}(M,s) \leftarrow \{j\in M^c: |X_j^\top (y - X_M \beta_{(M,s)}(y))| < \lambda - s_\nu (1 + \|X_j^\top X_M(X_M^\top X_M)^{-1}\|_1)\}$\newline
Return $\mathcal{I}_{\mathrm{safe}}(M,s) \leftarrow \mathcal{I}^+_{\mathrm{safe}}(M,s) \cup \mathcal{I}^{-}_{\mathrm{safe}}(M,s)$
\caption{SafeScreening}
\label{alg:safe_screening}
\end{algorithm}

\vspace{-10pt}

\begin{algorithm}[H]
\SetAlgoLined
\SetKwInOut{Input}{input}
\Input{design matrix $X$, outcome vector $y$, current model-sign pair $(M,s)$, (optionally) safely screened variables $\mathcal{I}_{\mathrm{safe}}(M,s)$}
\textbf{output: } neighboring model-sign pairs $\B(M,s)$\newline
Initialize $\B(M,s) \leftarrow \emptyset$\newline
For all $j\in M^c \setminus \mathcal{I}_{\mathrm{safe}}(M,s)$, compute LASSO constraint $(A_0^{+j}(M,s), b_0^{+j}(M,s)), (A_0^{-j} (M,s), b_0^{-j} (M,s))$\newline
For all $j\in M \setminus \mathcal{I}_{\mathrm{safe}}(M,s)$, compute LASSO constraint $(A_1^{j} (M,s), b_1^{j} (M,s))$\newline
\For{$j\in [d]\setminus \mathcal{I}_{\mathrm{safe}}(M,s)$}
{
\uIf{$j\in M$}
{
\ Solve LP: $\texttt{Val} = \max_z z^\top A_1^{j}(M,s) \text{ s.t. }z\in P^{\setminus \{j\}}(M,s)$\newline
If $\texttt{Val} > b_1^{j} (M,s)$, add $(M,s)^{-j}$ to $\mathcal{B}(M,s)$}
\uElseIf{$j\in M^c$}{
\ Solve LP: $\texttt{Val} = \max_z z^\top A_0^{+j} (M,s) \text{ s.t. }z\in P^{\setminus \{+j\}}(M,s)$\newline
If $\texttt{Val} > b_0^{+j} (M,s)$, add $(M,s)^{+(j,+1)}$ to $\B(M,s)$\newline
Solve LP: $\texttt{Val} = \max_z z^\top A_0^{-j} (M,s) \text{ s.t. }z\in P^{\setminus \{-j\}}(M,s)$\newline
If $\texttt{Val} > b_0^{-j} (M,s)$, add $(M,s)^{+(j,-1)}$ to $\B(M,s)$\newline
}
}
Return $\B(M,s)$
\caption{ExactScreening}
\label{alg:exact_screening}
\end{algorithm}

\subsection{File-drawer problem with conditional inference}
\label{app:file_drawer_conditional}

We include example plots of conditional confidence interval widths in the file-drawer problem. Inference is extremely numerically unstable and leads to significantly wider intervals compared to the competitors.

\begin{figure}[h!]
    \centering
    \includegraphics[width=0.28\textwidth]{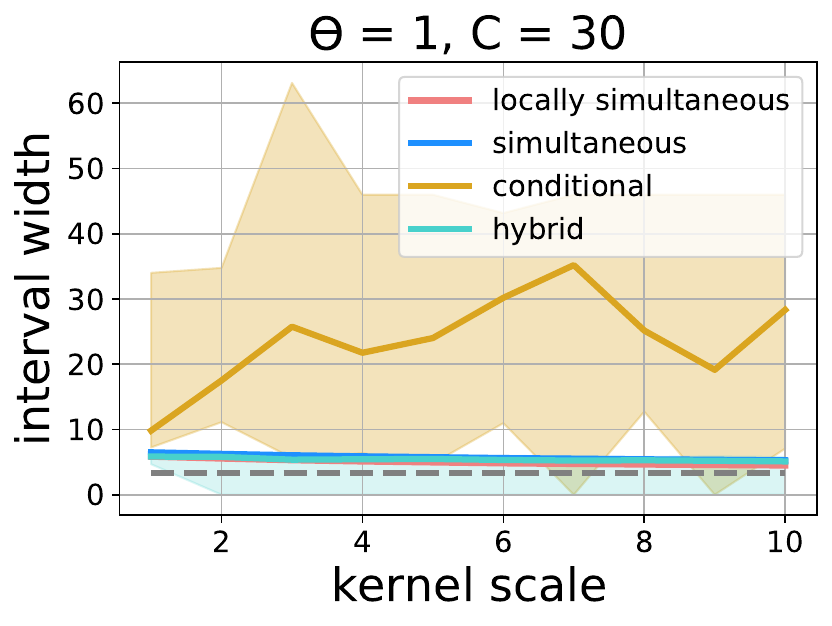}
    \includegraphics[width=0.28\textwidth]{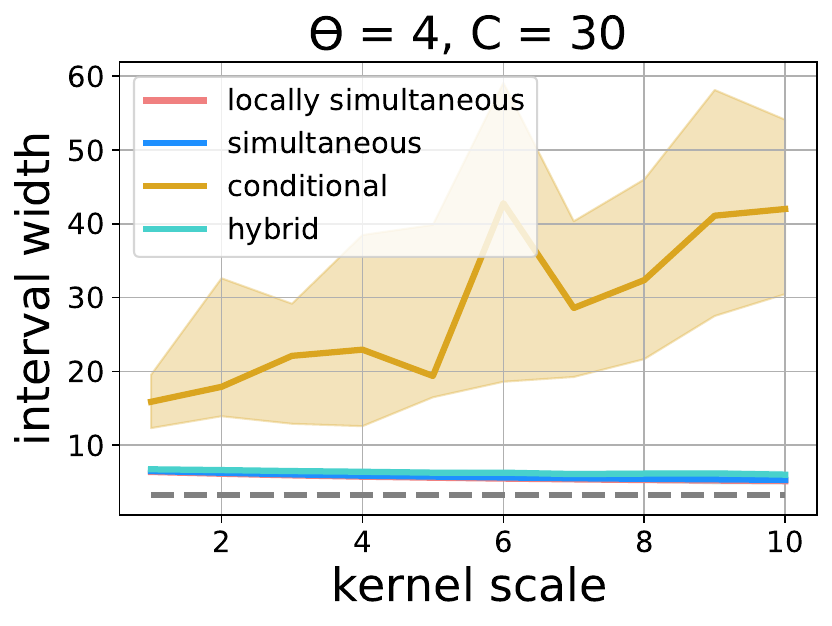}
    \caption{Interval width achieved by locally simultaneous, fully simultaneous, conditional, and hybrid inference in the file-drawer problem.}
\label{fig:filedrawer_conditional}
\end{figure}

\subsection{Effect of the choice of $\nu$}
\label{app:choosing_nu}

To illustrate the consequences of different choices of $\nu$, we consider the same experimental setup as in the parametric case in Section \ref{sec:iow_exp}, this time varying $\nu\in\{0.01\alpha, 0.1\alpha, 0.2\alpha, 0.3\alpha\}$. We fix $\alpha=0.1$ as before. We also fix $m=100$ and vary $\theta$ and $C$.

See Figure \ref{fig:varying_nu} for the comparison, averaged over $100$ trials. We observe that  the relative performance of different choices of $\nu$ depends on both $\theta$ and $C$. For example, for $C=20$ we see that locally simultaneous inference with the smallest value of $\nu$ outperforms the competitors when $\theta$ is large, while for small $\theta$ it is the least favorable of the four options. Generally speaking, when $\nu$ is small fewer candidate selections get filtered out but inference on the remaining candidates is more powerful. When $\nu$ is large there is a greater possibility of reducing the set of plausible candidates, but inference on the remaining candidates is less powerful. In general, the right choice of $\nu$ will depend on how ``densely distributed'' the candidates are. Changing $\nu$ induces relatively little change in the margin $4q^\nu([m])$ in Theorem \ref{thm:gauss_seq_winner}. This difference is only consequential when $\theta$ is small, as in that case there is a sufficiently large $\nu$ that allows filtering out irrelevant alternatives. When $\theta$ is large, most candidates are close to being the winner, meaning filtering out candidates is hard for any $\nu$, and thus it is best to reserve as much error budget as possible for inference.

\begin{figure}[t]
    \centering
    \includegraphics[width=0.32\textwidth]{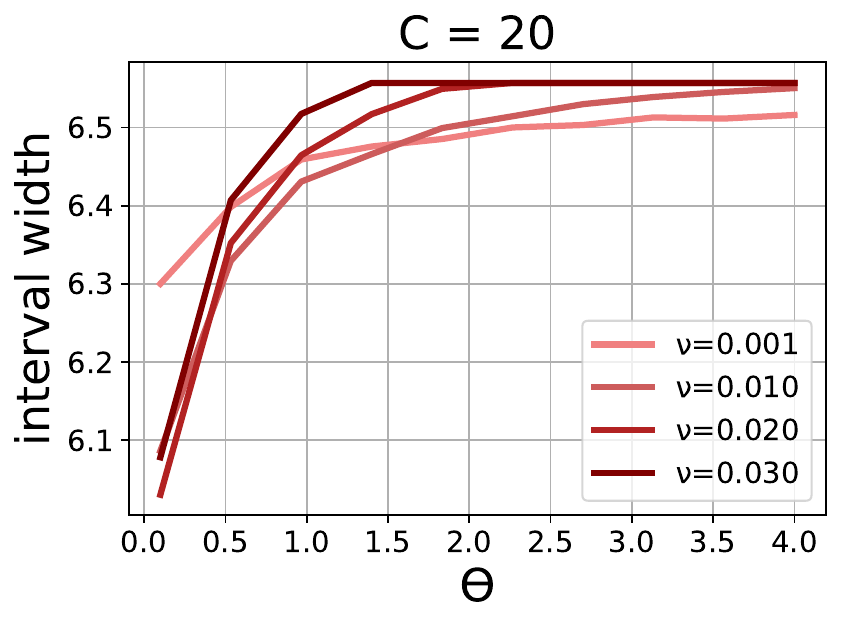}
    \includegraphics[width=0.32\textwidth]{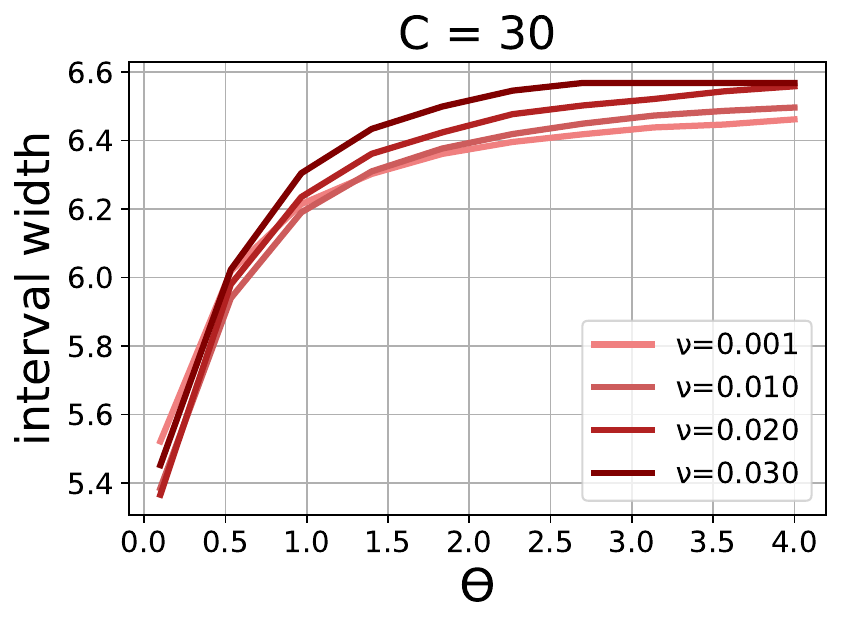}
    \includegraphics[width=0.32\textwidth]{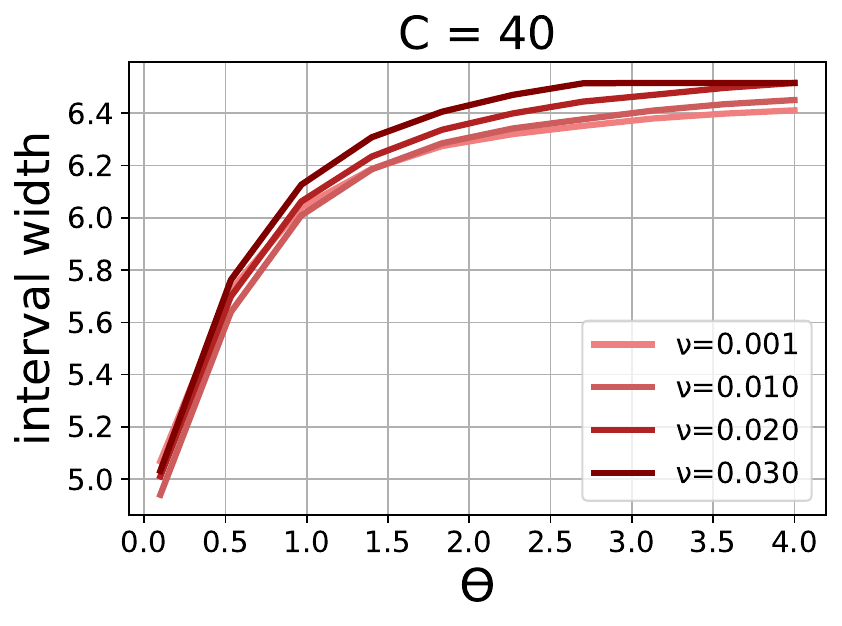}
    \caption{Interval width achieved by locally simultaneous inference for a varying choice of $\nu$.}
\label{fig:varying_nu}
\end{figure}

\end{document}